\newcommand{\shortrules}[6]{\noindent\begin{minipage}{#6ex}{\bfseries #1}\end{minipage} $\;$ #2 $\;\Rightarrow_{\text{#5}}\;$ #3 \par\smallskip\noindent #4}
\newcommand{\longnamerules}[5]{\noindent{\bfseries #1}\newline \hspace*{3ex} #2 $\;\Rightarrow_{\text{#5}}\;$ #3 \par\smallskip\noindent #4}
\newcommand{\opers}{\Omega}
\newcommand{\term}{T}
\newcommand{\varset}{\mathcal{X}}
\newcommand{\conv}{\ensuremath{\operatorname{conv}}}
\newcommand{\mpos}{\ensuremath{\operatorname{pos}}}
\newcommand{\comp}{\operatorname{comp}}
\newcommand{\core}{\operatorname{core}}
\newcommand{\cores}{\operatorname{cores}}
\newcommand{\gnd}{\operatorname{gnd}}
\newcommand{\mgu}{\operatorname{mgu}}
\newcommand{\orewrites}[1]{\mbox{\ }^{#1}\hspace*{-1.5ex}\leftarrow}
\newcommand{\SCL}{{\text{SCL}}}
\newcommand{\SCLEQ}{{\text{\SCL(EQ)}}}
\newcommand{\clsr}[0]{\,\cdotp}
\mathchardef\mhyphen="2D 
\begin{document}
\title{SCL(EQ): SCL for First-Order Logic with Equality}
%
%
\author{Hendrik Leidinger\inst{1,2} \and Christoph Weidenbach\inst{1}}
\authorrunning{Leidinger et al.}
%
\institute{Max-Planck Institute for Informatics, Saarbr\"ucken, Germany\\
\email{\{hleiding, weidenbach\}@mpi-inf.mpg.de} \and
Graduate School of Computer Science, Saarbr\"ucken, Germany}
\maketitle              
\begin{abstract}
  We propose a new calculus \SCLEQ\ for first-order logic with equality that only learns non-redundant clauses.
  Following the idea of CDCL (Conflict Driven Clause Learning) and SCL (Clause Learning from Simple Models) a ground
  literal model assumption is used to guide inferences that are then guaranteed to be non-redundant.
  Redundancy is defined with respect to a dynamically changing ordering derived from the ground literal model assumption.
  We prove \SCLEQ\ sound and complete and
  provide examples where our calculus improves on superposition.
\keywords{First-Order Logic with Equality  \and Term Rewriting \and Model-Based Reasoning.}
\end{abstract}
\section{Introduction}

There has been extensive research on sound and complete calculi for first-order logic with equality. The current prime calculus is superposition~\cite{bachmair1994rewrite},
where ordering restrictions guide paramodulation inferences and an abstract redundancy notion enables a number of clause simplification and deletion mechanisms, such as rewriting
or subsumption. Still this ``syntactic'' form of superposition infers many redundant clauses. The completeness proof of superposition provides a ``semantic'' way of generating only
non-redundant clauses, however, the underlying ground model assumption cannot be effectively computed in general~\cite{Teucke18}.
It requires an ordered
enumeration of infinitely many ground instances of the given clause set, in general.
Our calculus overcomes this issue by providing an effective
way of generating ground model assumptions that then guarantee non-redundant inferences on the original clauses with variables.

The underlying ordering is based on the order of ground literals in the model assumption, hence changes during a run of the calculus. It incorporates a standard
rewrite ordering. For practical redundancy criteria this means that both rewriting and redundancy notions that are based on literal
subset relations are permitted
to dynamically simplify or eliminate clauses. Newly generated clauses are non-redundant, so redundancy tests are only needed backwards.
Furthermore, the ordering is automatically
generated by the structure of the clause set. Instead of a fixed ordering as done in the superposition case, the calculus finds and changes an ordering according
to the currently easiest way to make progress, analogous to CDCL (Conflict Driven Clause Learning)~\cite{MSS96,BayardoSchrag96,MoskewiczMadiganZhaoZhangMalik01,BiereEtAl09handbook,Weidenbach15}.

Typical for CDCL and SCL (Clause Learning from Simple Models)~\cite{AlagiWeidenbach15,FioriWeidenbach19,BrombergerFW21}  approaches to reasoning,
the development of a model assumption is done by decisions and propagations. A decision guesses a ground literal
to be true whereas a propagation concludes the truth of a ground literal through an otherwise false clause. While propagations in CDCL and propositional logic are
restricted to the finite number of propositional variables, in first-order logic there can already be infinite propagation sequences~\cite{FioriWeidenbach19}. In order to overcome
this issue, model assumptions in \SCLEQ\ are at any point in time restricted to a finite number of ground literals, hence to a finite number of ground instances of the clause set at hand.
Therefore, without increasing the number of considered ground literals, the calculus either finds a refutation or runs
into a \emph{stuck state} where the current model assumption satisfies the finite number of ground instances. In this case one
can check whether the model assumption can be generalized to a model assumption of the overall clause set or the information of the stuck
state can be used to appropriately increase the number of considered ground literals and
continue search for a refutation. \SCLEQ\ does not require exhaustive propagation, in general, it just forbids the decision of the complement of a literal
that could otherwise be propagated.

For an example of \SCLEQ\ inferring clauses, consider the three first-order clauses
\begin{center}
  $\begin{array}{c}
     C_1 := h(x) \approx g(x)\lor c\approx d \qquad C_2 := f(x) \approx g(x) \lor a \approx b\\
     C_3 := f(x) \not\approx h(x) \lor f(x) \not\approx g(x)\\
     \end{array}$
   \end{center}
 with a Knuth-Bendix Ordering (KBO), unique weight $1$, and  precedence $d\prec c\prec b\prec a \prec g \prec h \prec f$. A Superposition Left~\cite{bachmair1994rewrite}
 inference between $C_2$ and $C_3$ results in
\begin{center}
 $C'_4 :=  h(x) \not\approx g(x) \lor f(x) \not\approx g(x) \lor a \approx b$.
\end{center}
 For \SCLEQ~we start
 by building a partial model assumption, called a \emph{trail},  with two decisions
\begin{center}
$\Gamma := [h(a)\approx g(a)^{1:(h(x) \approx g(x)\lor h(x) \not\approx g(x))\cdotp \sigma},f(a)\approx g(a)^{2:(f(x) \approx g(x)\lor f(x) \not\approx g(x))\cdotp \sigma}]$
\end{center}
where $\sigma := \{x\mapsto a\}$. Decisions and propagations  are always ground instances of literals from the first-order clauses, and are annotated with a level
and a justification clause, in case of a decision a tautology. Now with respect to $\Gamma$ clause $C_3$ is false with grounding $\sigma$, and rule Conflict is applicable; see Section~\ref{subsec:scleqrules} for details
on the inference rules.
In general, clauses and justifications are considered variable disjoint, but for simplicity of the presentation of this example,
we repeat variable names here as long as the same ground substitution is shared. The maximal literal in $C_3\sigma$ is $(f(x) \not\approx h(x))\sigma$ and a
rewrite refutation using the ground equations from the trail results in the justification clause
\begin{center}
$(g(x) \not\approx g(x)\lor f(x) \not\approx g(x)\lor f(x) \not\approx g(x) \lor h(x) \not\approx g(x))\cdotp \sigma$
\end{center}
where for the refutation justification clauses and all otherwise inferred clauses we use the grounding $\sigma$ for guidance, but operate on the clauses with variables.
The respective ground clause is smaller than $(f(x) \not\approx h(x))\sigma$, false with respect to $\Gamma$ and becomes our new conflict clause by an application of our inference rule Explore-Refutation.
It is simplified by our inference rules Equality-Resolution and Factorize, resulting in the finally learned clause
\begin{center}
$C_4 :=  h(x) \not\approx g(x) \lor f(x) \not\approx g(x)$
\end{center}
which is then used to apply rule Backtrack to the trail. Further details
on this example are available from the Appendix, Example~\ref{exa:impcondec}.
Observe that $C_4$ is strictly stronger than $C'_4$ the clause inferred by superposition and that $C_4$ cannot be inferred by superposition.
Thus \SCLEQ\ can infer stronger clauses than superposition for this example.

\paragraph{Related Work:} SCL(EQ) is based on ideas of SCL~\cite{AlagiWeidenbach15,FioriWeidenbach19,BrombergerFW21} but for the first
time includes a native treatment of first-order equality reasoning. Similar to~\cite{BrombergerFW21} propagations need not to be exhaustively applied,
the trail is built out of decisions and propagations of ground literals annotated by first-order clauses,  SCL(EQ) only learns non-redundant clauses,
but for the first time conflicts resulting out of a decision have to be considered, due to the nature of the equality relation.

There have been suggested several approaches to lift the idea of an inference guiding model assumption from propositional to
full first-order logic~\cite{FioriWeidenbach19,baumgartner2006lemma,bonacina2014sggs,Bonacina2015}. They do not provide
a native treatment of equality, e.g., via paramodulation or rewriting.

Baumgartner et al. describe multiple calculi that handle equality by using unit superposition style inference rules and are based on either
hyper tableaux~\cite{Baumgartner98} or DPLL~\cite{davis1962machine,davis1960computing}. Hyper tableaux fix a major problem of the well-known free variable tableaux,
namely the fact that free variables within the tableau are rigid, i.e., substitutions have to be applied to all occurrences of a free variable within the entire tableau.
Hyper tableaux with equality~\cite{baumgartner2007hyper} in turn integrates unit superposition
style inference rules into the hyper tableau calculus.

Another approach that is related to ours is the
model evolution calculus with equality ($\mathcal{ME_E}$) by Baumgartner et al.~\cite{baumgartner2005model,baumgartner2012model}
which lifts the DPLL calculus to first-order logic with equality. Similar to our approach, $\mathcal{ME_E}$ creates a candidate model
until a clause instance contradicts this model or all instances are satisfied by the model. The candidate model results from a
so-called context, which consists of a finite set of non-ground rewrite literals. Roughly speaking, a context literal specifies
the truth value of all its ground instances unless a more specific literal specifies the complement. Initially the model
satisfies the identity relation over the set of all ground terms. Literals within a context may be universal or parametric,
where universal literals guarantee all its ground instances to be true. If a clause contradicts the current model,
it is repaired by a non-deterministic split which adds a parametric literal to the current model. If the added literal
does not share any variables in the contradictory clause it is added as a universal literal.

Another approach by Baumgartner and Waldmann~\cite{baumgartner2009superposition} combined the superposition calculus with the Model Evolution calculus with equality.
In this calculus the atoms of the clauses are labeled as "split atoms" or "superposition atoms". The superposition part of the calculus then generates a model for the
superposition atoms while the model evolution part generates a model for the split atoms. Conversely, this means that if all atoms are labeled as "split atom",
the calculus behaves similar to the model evolution calculus. If all atoms are labeled as "superposition atom", it behaves like the superposition calculus.

Both the hyper tableaux calculus with equality and the model evolution calculus with equality allow only unit superposition applications,
while SCL(EQ) inferences are guided paramodulation inferences on clauses of arbitrary length.
The model evolution calculus with equality was revised and implemented in 2011~\cite{baumgartner2012model} and compares
its performance with that of hyper tableaux.
Model evolution performed significantly better, with more problems solved in all relevant TPTP~\cite{Sutcliffe17} categories,
than the implementation of the hyper tableaux calculus.

Plaisted et al.~\cite{plaisted2000ordered} present the Ordered Semantic Hyper-Linking (OSHL) calculus. OSHL is an instantiation based approach that repeatedly chooses ground instances of a non-ground input clause set such that the current model does not satisfy the current ground clause set. A further step repairs the current model such that it satisfies the ground clause set again. The algorithm terminates if the set of ground clauses contains the empty clause. OSHL supports rewriting and narrowing, but only with unit clauses. In order to handle non-unit clauses it makes use of other mechanisms such as Brand's Transformation~\cite{bachmair1998elimination}.

Inst-Gen~\cite{korovin2013inst} is an instantiation based calculus, that creates ground instances of the input first-order formulas which are forwarded to a SAT solver.
If a ground instance is unsatisfiable, then the first-order set is as well. If not then the calculus creates more instances. The Inst-Gen-EQ calculus~\cite{korovin2010iprover}
creates instances by extracting instantiations of unit superposition refutations of selected literals of the first-order clause set.
The ground abstraction is then
extended by the extracted clauses and an SMT solver then checks the satisfiability of the resulting set of equational and non-equational
ground literals.

In favor of a better structure we have moved all proofs to an Appendix.
The rest of the paper is organized as follows. Section~\ref{preliminaries}
provides basic formalisms underlying \SCLEQ. The rules of the calculus are presented in Section~\ref{scleq_rules}. Soundness and completeness
results are provided in Section~\ref{soundness_completeness}. We end with a discussion of obtained results and future work, Section~\ref{conclusion}.
The main contribution of this paper is the \SCLEQ\ calculus that only learns non-redundant clauses, permits subset based redundancy elimination and rewriting,
and its soundness and completeness.

\section{Preliminaries} \label{preliminaries}

We assume a standard first-order language with equality and signature $\Sigma = (\opers, \emptyset)$ where the only predicate symbol is equality $\approx$. $N$ denotes a set of clauses,
$C,D$ denote clauses, $L,K,H$ denote equational literals, $A,B$ denote equational atoms, $t,s$ terms from $\term(\opers,\varset)$ for an infinite set of variables $\varset$,
$f,g,h$ function symbols from $\opers$, $a,b,c$ constants from $\opers$ and $x,y,z$ variables from $\varset$.
The function $comp$ denotes the complement of a literal. We write $s\not\approx t$ as a shortcut for $\neg(s \approx t)$.
The literal $s\,\#\, t$ may denote both $s\approx t$ and $s\not\approx t$. The semantics of first-order logic and semantic entailment $\models$ is
defined as usual.

By $\sigma,\tau,\delta$ we denote substitutions, which are total mappings from variables to terms. Let $\sigma$ be a substitution, then its finite domain is defined as $dom(\sigma) := \{x \mid x\sigma \not= x\}$ and its codomain is defined as $codom(\sigma) = \{t \mid x\sigma = t, x \in dom (\sigma)\}$. We extend their application to literals, clauses and sets of such objects in the usual way.
A term, literal, clause or sets of these objects is ground if it does not contain any variable.
A substitution $\sigma$ is $ground$ if $codom(\sigma)$ is ground. A substitution $\sigma$ is $grounding$ for a term $t$, literal $L$, clause $C$ if $t\sigma$, $L\sigma$, $C\sigma$ is ground,
respectively. By $C\cdotp \sigma$, $L\cdotp \sigma$ we denote a closure consisting of a clause $C$, literal $L$ and a grounding substitution $\sigma$, respectively.
The function $gnd$ computes the set of all ground
instances of a literal, clause, or clause set.
The function $\mgu$ denotes the most general unifier of terms, atoms, literals, respectively. We assume that mgus do not introduce fresh variables and that they are idempotent.

The set of positions $\mpos(L)$ of a literal (term $\mpos(t)$) is inductively defined as usual.
The notion $L|_p$ denotes the subterm of a literal $L$ ($t|_p$ for term $t$) at position $p\in \mpos(L)$ ($p\in \mpos(t)$).
The replacement of a subterm of a literal $L$ (term $t$) at position $p\in \mpos(L)$ ($p\in \mpos(t)$) by a term $s$ is denoted by $L[s]_p$ ($t[s]_p$).
For example, the term $f(a,g(x))$ has the positions $\{\epsilon, 1, 2, 21\}$, $f(a,g(x))|_{21} = x$ and $f(a,g(x))[b]_2$ denotes the term $f(a,b)$.

Let $R$ be a set of rewrite rules $l \rightarrow r$, called a \emph{term rewrite system} (TRS). The rewrite relation $\rightarrow_R \subseteq T(\opers, \varset)\times T(\opers, \varset)$ is defined as usual by
$s\rightarrow_R t$ if there exists $(l\rightarrow r)\in R$, $p\in \mpos(s)$, and a matcher $\sigma$,
such that $s|_p=l\sigma$ and $t = s[r\sigma]_p.$
We write $s=t{\downarrow_R}$ if s is the normal form of $t$ in the rewrite relation $\rightarrow_R$. We write $s\,\#\, t =(s'\,\#\, t'){\downarrow_R}$ if $s$ is the normal form of $s'$ and $t$ is the normal form of $t'$. A rewrite relation is terminating
if there is no infinite descending chain $t_0\rightarrow t_1\rightarrow ...$ and confluent if $t \orewrites{*} s \rightarrow^* t'$ implies $t \leftrightarrow^* t'$.
A rewrite relation is convergent if it is terminating and confluent. A rewrite order is a irreflexive and transitive rewrite relation.
A TRS $R$ is terminating, confluent, convergent, if the rewrite relation $\rightarrow_R$ is terminating, confluent, convergent, respectively.
A term $t$ is called irreducible by a TRS $R$ if no rule from $R$ rewrites $t$.
Otherwise it is called reducible. A literal, clause is irreducible if all of its terms are irreducible, and reducible otherwise.
A substitution $\sigma$ is called irreducible if any $t\in codom(\sigma)$ is irreducible, and reducible otherwise.

Let $\prec_T$ denote a well-founded rewrite ordering on terms which is total on ground terms and for all ground terms $t$ there exist only finitely many ground terms $s \prec_T t$.
We call $\prec_T$ a \emph{desired} term ordering.
We extend $\prec_T$ to equations by assigning the multiset $\{s,t\}$ to positive equations $s\approx t$ and
$\{s,s,t,t\}$ to inequations $s\not\approx t$.
Furthermore, we identify $\prec_T$ with its multiset extension comparing multisets of literals.
For a (multi)set of terms $\{t_1,\ldots,t_n\}$ and a term $t$, we define $\{t_1,\ldots,t_n\} \prec_T t$ if $\{t_1,\ldots,t_n\} \prec_T \{t\}$.
For a (multi)set of Literals $\{L_1,\ldots,L_n\}$ and a term $t$, we define $\{L_1,\ldots,L_n\} \prec_T t$ if $\{L_1,\ldots,L_n\} \prec_T \{\{t\}\}$.
Given a ground term $\beta$ then $\gnd_{\prec_T\beta}$
computes the set of all ground instances of a literal, clause, or clause set where the groundings are smaller than
$\beta$ according to the ordering $\prec_T$. Given a set (sequence) of ground literals $\Gamma$ let $\conv(\Gamma)$ be a convergent rewrite system out of the positive equations in $\Gamma$ using $\prec_T$.

Let $\prec$ be a well-founded, total, strict ordering on
ground literals, which is lifted to clauses and clause sets by its respective multiset extension. We
overload $\prec$ for literals, clauses, clause sets if the meaning is clear from the context.
The ordering is lifted to the non-ground case via instantiation: we define $C \prec D$
if for all grounding substitutions $\sigma$ it holds $C\sigma \prec D\sigma$.
Then we define $\preceq$ as the reflexive closure of $\prec$ and $N^{\preceq C} := \{D \mid D\in N \;\text{and}\; D\preceq C\}$
and use the standard superposition style notion of redundancy~\cite{bachmair1994rewrite}.

\begin{definition}[Clause Redundancy] \label{appr_4:prelim:def:redundancy}
  A ground clause $C$ is \emph{redundant} with respect to a set $N$
  of ground clauses and an ordering $\prec$ if $N^{\preceq C} \models C$.
  A clause $C$ is \emph{redundant} with respect to a clause set
  $N$ and an ordering $\prec$ if for all $C'\in gnd(C)$, $C'$ is redundant with respect to $gnd(N)$.
\end{definition}

\section{The \SCLEQ~Calculus}
\label{scleq_rules}

We start the introduction of the calculus by defining the ingredients of an \SCLEQ~state.

\begin{definition}[Trail]
  A \emph{trail} $\Gamma:=[L_1^{i_1:C_1\cdotp\sigma_1},...,L_n^{i_n:C_n\cdotp\sigma_n}]$ is a consistent sequence of ground equations and inequations
  where $L_j$ is annotated by a level $i_j$ with $i_{j-1}\leq i_j$, and a closure $C_j\cdotp\sigma_j$. We omit the annotations if they
  are not needed in a certain context.
  A ground literal $L$ is true in $\Gamma$ if $\Gamma \models L$.
  A ground literal $L$ is false in $\Gamma$ if $\Gamma \models comp(L)$.
  A ground literal $L$ is undefined in $\Gamma$ if $\Gamma\not\models L$ and $\Gamma\not\models comp(L)$. Otherwise it is defined.
  For each literal $L_j$ in $\Gamma$ it holds that $L_j$ is undefined in $[L_1,...,L_{j-1}]$ and irreducible by $\conv(\{L_1,...,L_{j-1}\})$.
\end{definition}
The above definition of truth and undefinedness is extended to clauses in the obvious way. The notions of true, false, undefined can be parameterized by a
ground term $\beta$ by saying that $L$ is $\beta$-undefined in a trail $\Gamma$ if $\beta\prec_T L$ or $L$ is undefined. The notions of a $\beta$-true, $\beta$-false term are
restrictions of the above notions to literals smaller $\beta$, respectively. All \SCLEQ\ reasoning is layered with respect to a ground term $\beta$.

\begin{definition}
  Let $\Gamma$ be a trail and $L$ a ground literal such that $L$ is defined in $\Gamma$.
  By $\core(\Gamma;L)$ we denote a minimal subsequence $\Gamma' \subseteq \Gamma$ such that $L$ is defined in $\Gamma'$.
  By $\cores(\Gamma;L)$ we denote the set of all cores.
\end{definition}

Note that $core(\Gamma;L)$ is not necessarily unique. There can be multiple cores for a given trail $\Gamma$ and ground literal $L$.

\begin{definition}[Trail Ordering]
  Let $\Gamma:=[L_1,...,L_n]$ be a trail. The (partial) trail ordering $\prec_\Gamma$ is the sequence ordering given by $\Gamma$, i.e.,  $L_i\prec_\Gamma L_j$ if $i < j$ for all $1\leq i,j\leq n$.
\end{definition}

\begin{definition}[Defining Core and Defining Literal]
 For a trail $\Gamma$ and a sequence of literals $\Delta \subseteq \Gamma$  we write $max_{\prec_\Gamma}(\Delta)$ for the largest literal in $\Delta$ according to the trail ordering $\prec_\Gamma$.
 Let $\Gamma$ be a trail and $L$ a ground literal such that $L$ is defined in $\Gamma$.
 Let $\Delta\in cores(\Gamma;L)$ be a sequence of literals where $max_{\prec_\Gamma}(\Delta) \preceq_\Gamma max_{\prec_\Gamma}(\Lambda)$
 for all $\Lambda\in cores(\Gamma;L)$, then $\max_\Gamma(L) := \max_{\prec_\Gamma}(\Delta)$ is called the \emph{defining literal} and $\Delta$ is called a \emph{defining core} for $L$ in $\Gamma$. If $cores(\Gamma;L)$
 contains only the empty core, then $L$ has no \emph{defining literal} and no \emph{defining core}.
\end{definition}

Note that there can be multiple defining cores but only one defining literal for any defined literal $L$.
For example, consider a trail $\Gamma :=[f(a)\approx f(b)^{1:C_1\cdotp\sigma_1},a\approx b^{2:C_2\cdotp\sigma_2},b\approx c^{3:C_3\cdotp\sigma_3}]$ with an ordering $\prec_T$ that
orders the terms of the equations from left to right,
and a literal $g(f(a))\approx g(f(c))$. Then the defining cores are $\Delta_1 :=[a\approx b,b\approx c]$ and $\Delta_2 :=[f(a)\approx f(b),b\approx c]$.
The defining literal, however, is in both cases $b\approx c$. Defined literals that have no defining core and therefore no defining literal are literals that are trivially false or true.
Consider, for example, $g(f(a))\approx g(f(a))$. This literal is trivially true in $\Gamma$. Thus an empty subset of $\Gamma$ is sufficient to show that $g(f(a))\approx g(f(a))$ is defined in $\Gamma$.

\begin{definition}[Literal Level]
 Let $\Gamma$ be a trail. A ground literal $L \in \Gamma$ is of \emph{level} $i$ if $L$ is annotated with $i$ in $\Gamma$.
 A defined ground literal $L\not\in \Gamma$ is of level $i$ if the defining literal of $L$ is of level $i$. If $L$ has no defining literal, then $L$ is of level $0$. A ground clause $D$ is of level $i$ if $i$ is the maximum level of a literal in $D$.
\end{definition}

The restriction to minimal subsequences for the defining literal and
definition of a level eventually guarantee that learned clauses are smaller in
the trail ordering. This enables completeness in combination with learning
non-redundant clauses as shown later.

\begin{lemma}
  \label{lem:lvlfix}
  Let $\Gamma_1$ be a trail and $K$ a defined literal that is of level $i$ in $\Gamma_1$. Then $K$ is of level $i$ in a trail $\Gamma:=\Gamma_1,\Gamma_2$.
\end{lemma}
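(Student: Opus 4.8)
The plan is a short case analysis on how $K$ receives its level in $\Gamma_1$, exploiting three facts about the extension $\Gamma = \Gamma_1,\Gamma_2$: $\Gamma_1$ is a prefix of $\Gamma$, the trail ordering $\prec_\Gamma$ restricted to the literals of $\Gamma_1$ is exactly $\prec_{\Gamma_1}$, and every literal of $\Gamma_2$ is $\prec_\Gamma$-greater than every literal of $\Gamma_1$.

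First I would observe that $K$ cannot occur in $\Gamma_2$: since $K$ is defined in $\Gamma_1$ and $\Gamma_1$ is a prefix of $\Gamma$, the trail condition requiring each trail literal to be undefined in its preceding prefix excludes $K$ from the suffix. So either $K \in \Gamma_1$ or $K \notin \Gamma$ altogether. If $K \in \Gamma_1$ it is annotated with level $i$, and appending $\Gamma_2$ changes no annotations, so $K$ is of level $i$ in $\Gamma$ and we are done.

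Assume now $K \notin \Gamma$, so the level of $K$ in any trail is the level of its defining literal, or $0$ if there is none. The key observations are: (i) a subsequence $\Gamma' \subseteq \Gamma_1$ defines $K$ iff it does so viewed inside $\Gamma$ (definedness is purely semantic), and minimality is with respect to subsequence inclusion, so the members of $\cores(\Gamma;K)$ contained in $\Gamma_1$ are exactly $\cores(\Gamma_1;K)$, ordered by $\prec_\Gamma$ just as by $\prec_{\Gamma_1}$; (ii) if $\cores(\Gamma_1;K)$ is just the empty core, then $K$ or $\comp(K)$ is a tautology, a trail-independent property, so $\cores(\Gamma;K)$ is also just the empty core, $K$ has no defining literal in either trail, and has level $0 = i$; (iii) otherwise pick a defining core $\Delta \subseteq \Gamma_1$ of $K$ in $\Gamma_1$ with $\max_{\prec_\Gamma}(\Delta) = \max_\Gamma(K) =: M \in \Gamma_1$, which has level $i$. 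Then any core of $K$ in $\Gamma$ that uses a literal of $\Gamma_2$ has its $\prec_\Gamma$-maximum in $\Gamma_2$, hence strictly $\prec_\Gamma$-above $M$, whereas the cores contained in $\Gamma_1$ are $\prec_\Gamma$-compared exactly as in $\Gamma_1$; so $\Delta$ still minimises the trail-maximum over $\cores(\Gamma;K)$ and $\max_\Gamma(K) = M$ in $\Gamma$ too. Since $M \in \Gamma_1$ keeps its annotation $i$ in $\Gamma$, $K$ is of level $i$ in $\Gamma$.

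I expect step (iii) to be the only delicate point: one must rule out that appending $\Gamma_2$ creates a fresh core of $K$ whose trail-maximum is strictly \emph{below} the old defining literal $M$. This is exactly what the conservative extension of the trail ordering forbids — all of $\Gamma_2$ sits on top of $\Gamma_1$ — so every genuinely new core is strictly worse with respect to $\max_{\prec_\Gamma}$, and the $\Gamma_1$-defining core remains optimal; everything else is bookkeeping against the definitions of core, defining literal, and literal level.
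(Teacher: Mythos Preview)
Your proposal is correct and follows essentially the same route as the paper: the empty-core case gives level $0$ in both trails, and otherwise the $\Gamma_1$-defining core $\Delta$ remains a defining core in $\Gamma$ because every core in $\cores(\Gamma;K)\setminus\cores(\Gamma_1;K)$ must use a literal of $\Gamma_2$ and hence has a strictly $\prec_\Gamma$-larger maximum. You are in fact slightly more careful than the paper in that you explicitly separate the case $K\in\Gamma_1$ (where the level is read off the annotation directly) and argue that $K\notin\Gamma_2$; the paper's proof leaves this case implicit.
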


\begin{definition}
  Let $\Gamma$ be a trail and $L\in\Gamma$ a literal. $L$ is called a \emph{decision literal} if $\Gamma = \Gamma_0,K^{i:C\cdotp\tau}, L^{i+1:C'\cdotp\tau'}, \Gamma_1$. Otherwise $L$ is called a \emph{propagated literal}.
\end{definition}

In our above example $g(f(a))\approx g(f(c))$ is of level $3$ since the defining literal $b\approx c$ is annotated with $3$. $a\not\approx b$ on the other hand is of level $2$.

We define a well-founded total strict ordering which is induced by the trail and with which non-redundancy is proven in Section~\ref{soundness_completeness}. Unlike SCL~\cite{FioriWeidenbach19,BrombergerFW21} we
use this ordering for the inference rules as well. In previous SCL calculi, conflict resolution automatically chooses the greatest literal and resolves with this literal. In~\SCLEQ\ this is generalized.
Coming back to our running example above, suppose we have a conflict clause $f(b)\not\approx f(c)\lor b\not\approx c$.
The defining literal for both inequations is $b\approx c$. So we could do paramodulation inferences with both literals. The following ordering makes this non-deterministic choice deterministic.

\begin{definition}[Trail Induced Ordering] \label{appr_4:trail_ind_order}
  Let $\Gamma:=[L_1^{i_1:C_1\cdotp\sigma_1},...,L_n^{i_n:C_n\cdotp\sigma_n}]$ be a trail, $\beta$ a ground term such that $\{L_1,...,L_n\}\prec_T \beta$
  and $M_{i,j}$ all $\beta$-defined ground literals
  not contained in $\Gamma \,\cup\, comp(\Gamma)$: for a  defining literal $max_\Gamma(M_{i,j}) = L_i$ and for two literals $M_{i,j}$, $M_{i,k}$ we have $j<k$ if $M_{i,j} \prec_T M_{i,k}$.
  The trail induces a total well-founded strict order $\prec_{\Gamma^*}$ on $\beta\mhyphen\mathit{defined}$ ground literals $M_{k,l}, M_{m,n}$, $L_i$, $L_j$ of level greater than zero, where
  \begin{itemize}
  \item[1.] $M_{i,j}\prec_{\Gamma^*} M_{k,l}$ if $i<k$ or ($i=k$ and $j<l$)
  \item[2.]  $L_i\prec_{\Gamma^*} L_j$ if $L_i\prec_\Gamma L_j$
  \item[3.]  $comp(L_i)\prec_{\Gamma^*} L_j$ if $L_i\prec_\Gamma L_j$
  \item[4.]  $L_i\prec_{\Gamma^*} comp(L_j)$ if $L_i\prec_\Gamma L_j$ or $i = j$
  \item[5.]  $comp(L_i)\prec_{\Gamma^*} comp(L_j)$ if $L_i\prec_\Gamma L_j$
  \item[6.]  $L_i\prec_{\Gamma^*} M_{k,l}$,  $\comp(L_i)\prec_{\Gamma^*} M_{k,l}$ if $i\leq k$
  \item[7.]  $M_{k,l}\prec_{\Gamma^*} L_i$,  $M_{k,l}\prec_{\Gamma^*} \comp(L_i)$ if $k<i$
  \end{itemize}
  and for all $\beta\mhyphen\mathit{defined}$ literals $L$ of level zero:
  \begin{itemize}
  \item[8.]  $\prec_{\Gamma^*} := \prec_T$
  \item[9.]  $L\prec_{\Gamma^*} K$ if $K$ is of level greater than zero and $K$ is $\beta\mhyphen\mathit{defined}$
  \end{itemize}
  and can eventually be extended to $\beta\mhyphen\mathit{undefined}$ ground literals $K, H$ by
  \begin{itemize}
  \item[10.]  $K \prec_{\Gamma^*} H$ if $K\prec_T H$
  \item[11.]  $L \prec_{\Gamma^*} H$ if $L$ is $\beta\mhyphen\mathit{defined}$
  \end{itemize}
  The literal ordering $\prec_{\Gamma^*}$ is extended to ground clauses by multiset extension and identified with $\prec_{\Gamma^*}$ as well.
\end{definition}

\begin{lemma}[Properties of $\prec_{\Gamma^*}$] \label{gammastar:properties}
  \begin{enumerate}
    \item  $\prec_{\Gamma^*}$ is well-defined. \label{gammastar:properties:welldefined}
    \item  $\prec_{\Gamma^*}$ is a total strict order, i.e. $\prec_{\Gamma^*}$ is irreflexive, transitive and total. \label{gammastar:properties:strictorder}
    \item $\prec_{\Gamma^*}$ is a well-founded ordering. \label{gammastar:properties:wellfounded}
    \end{enumerate}
\end{lemma}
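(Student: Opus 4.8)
The plan is to read $\prec_{\Gamma^*}$ as a layered order, i.e.\ an ordinal sum of three blocks, and to reduce each of the three claims to the corresponding property of the blocks. By the definitions of ``$\beta$-defined'' and of literal level, the ground literals split into the classes (A) the $\beta$-defined literals of level $>0$, (B) the $\beta$-defined literals of level $0$, and (C) the $\beta$-undefined literals. Clauses 8 and 10 order (B) and (C), respectively, internally by $\prec_T$, clauses 1--7 order (A) internally, and clauses 9 and 11 make every literal of (B) $\prec_{\Gamma^*}$-smaller than every literal of (A), and every literal of (A)$\cup$(B) $\prec_{\Gamma^*}$-smaller than every literal of (C). Since an ordinal sum of strict orders is strict, of total orders is total, and of well-founded orders is well-founded, it remains to (i) check that (A), (B), (C) genuinely partition the ground literals and that no clause compares two literals from different classes in conflicting directions, and (ii) prove the three properties inside each block. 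Parts (B) and (C) are immediate; the work sits in (A) and in part (i).

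For (B) and (C): the multiset extension of $\prec_T$ to ground literals is irreflexive and transitive (standard for multiset extensions), total (because $\prec_T$ is total on ground terms and distinct ground literals carry distinct associated multisets once $s\approx t$ and $t\approx s$ are identified), and well-founded (being the multiset extension of a well-founded ordering); all three properties pass to the restrictions of $\prec_T$ to (B) and to (C).

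The heart of the proof is block (A). First, well-definedness: every literal of (A) is uniquely named as $L_i$, $\comp(L_i)$, or $M_{i,j}$. Exhaustiveness holds because a $\beta$-defined level-$>0$ literal lying outside $\Gamma\cup\comp(\Gamma)$ must possess a defining literal (otherwise its level is $0$), and that defining literal is some $L_i$ which is itself of level $>0$, so the literal is one of the $M_{i,j}$. Disjointness and uniqueness of indices follow from the trail invariants: $L_i=L_j$ or $L_i=\comp(L_j)$ with $i\neq j$ would make the later literal already defined in its own trail prefix, contradicting that each $L_k$ is undefined in $[L_1,\ldots,L_{k-1}]$ (and $L_i=\comp(L_i)$ is impossible); the $M_{i,j}$ lie outside $\Gamma\cup\comp(\Gamma)$ by construction; the first index of an $M_{i,j}$ is pinned down because the defining literal is unique, and the second because $\prec_T$ linearly orders the finitely many $M_{i,\cdot}$ with a fixed first index. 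Second, consistency and totality of clauses 1--7: I would exhibit the rank map $\rho(L_i):=(i,0,0)$, $\rho(\comp(L_i)):=(i,1,0)$, $\rho(M_{i,j}):=(i,2,j)$ into $(\Nat^3,<_{\mathrm{lex}})$ and verify that, for distinct $X,Y$ in (A), some clause among 1--7 gives $X\prec_{\Gamma^*}Y$ exactly when $\rho(X)<_{\mathrm{lex}}\rho(Y)$. This is a finite case analysis over which coordinate of the ranks first differs and over the shapes of $X$ and $Y$; the points needing care are clause 4's ``$i=j$'' case, which yields $L_i\prec_{\Gamma^*}\comp(L_i)$, and clauses 6 and 7 specialised to $k=i$, which place every $M_{i,\cdot}$ above $\comp(L_i)$ yet below $L_{i+1}$. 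Since $\rho$ is injective, (A) is order-isomorphic to a subset of the well-founded strict total order $(\Nat^3,<_{\mathrm{lex}})$, hence is itself a well-founded strict total order.

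Assembling the three blocks through the ordinal-sum facts gives all three statements at once. (Well-foundedness can also be argued directly: a nonempty set of ground literals meeting the finite set (A)$\cup$(B) has a $\prec_{\Gamma^*}$-least element there, and otherwise lies in (C), where it has a $\prec_T$-least element.) I expect the main obstacle to be block (A): both the bookkeeping that the naming $L_i/\comp(L_i)/M_{i,j}$ is unambiguous and exhaustive --- which genuinely uses trail consistency together with the undefinedness and irreducibility invariants of the trail --- and the somewhat tedious verification that clauses 1--7 collapse to $<_{\mathrm{lex}}$ under $\rho$; everything else follows from standard facts about $\prec_T$, multiset extensions, and ordinal sums.
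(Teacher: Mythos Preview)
Your proposal is correct and takes a genuinely different route from the paper. The paper proves each of irreflexivity, transitivity, totality, and well-foundedness by a separate exhaustive case analysis over which of the rules 1--11 applies to the literals involved; this produces dozens of subcases, each handled directly. Your approach instead identifies the ordinal-sum structure $(B)\prec(A)\prec(C)$ and, crucially, introduces the rank map $\rho$ into $(\Nat^3,<_{\mathrm{lex}})$ to collapse rules 1--7 into a single order-embedding. That map is indeed faithful to the definition (the cases to check are exactly the seven you list, and each one matches the lexicographic comparison on the nose), so irreflexivity, transitivity, totality, and well-foundedness of block~(A) all follow in one stroke from the corresponding properties of $<_{\mathrm{lex}}$. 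The paper never isolates this embedding; it re-derives its consequences case by case. What your approach buys is economy and a clearer picture of why the ordering works; what the paper's approach buys is that each step is self-contained and no auxiliary structure needs to be verified. The trail invariants you invoke for the unambiguous naming of (A)---undefinedness of $L_k$ in $[L_1,\ldots,L_{k-1}]$ forcing $L_i\neq L_j$ and $L_i\neq\comp(L_j)$ for $i\neq j$, and uniqueness of the defining literal fixing the first index of $M_{i,j}$---are precisely the auxiliary Lemmas~\ref{gammastar:properties:proofs:uniqueness}--\ref{gammastar:properties:proofs:alldefhavelevel} the paper proves separately, so the groundwork is the same.
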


\begin{example}
  Assume a trail $\Gamma := [a\approx b^{1:C_0\cdotp\sigma_0}, c\approx d^{1:C_1\cdotp\sigma_1}, f(a')\not\approx f(b')^{1:C_2\cdotp\sigma_2}]$, select KBO as the term ordering $\prec_T$ where all symbols have weight one
  and $a\prec a'\prec b\prec b'\prec c\prec d\prec f$ and a ground term $\beta := f(f(a))$. According to the trail induced ordering we have that
  $a\approx b\prec_{\Gamma^*} c\approx d\prec_{\Gamma^*} f(a')\not\approx f(b')$
  by \ref{appr_4:trail_ind_order}.2. Furthermore we have that
  $$a\approx b\prec_{\Gamma^*} a\not\approx b\prec_{\Gamma^*} c\approx d\prec_{\Gamma^*} c\not\approx d\prec_{\Gamma^*} f(a')\not\approx f(b')\prec_{\Gamma^*} f(a')\approx f(b')$$
  by \ref{appr_4:trail_ind_order}.3 and \ref{appr_4:trail_ind_order}.4.
  Now for any literal $L$ that is $\beta\mhyphen\mathit{defined}$ in $\Gamma$ and the defining literal is $a\approx b$ it holds that $a\not\approx b\prec_{\Gamma^*} L\prec_{\Gamma^*} c\approx d$ by \ref{appr_4:trail_ind_order}.6 and \ref{appr_4:trail_ind_order}.7.
  This holds analogously for all literals that are $\beta\mhyphen\mathit{defined}$ in $\Gamma$ and the defining literal is $c\approx d$ or $f(a')\not\approx f(b')$. Thus we get:
  \begin{center}
    $\begin{array}{c}
       L_1 \prec_{\Gamma^*} ...  \prec_{\Gamma^*}  a\approx b\prec_{\Gamma^*} a\not\approx b\prec_{\Gamma^*} f(a)\approx f(b)\prec_{\Gamma^*} f(a)\not\approx f(b)\prec_{\Gamma^*} \\
       c\approx d\prec_{\Gamma^*} c\not\approx d\prec_{\Gamma^*} f(c)\approx f(d)\prec_{\Gamma^*} f(c)\not\approx f(d)\prec_{\Gamma^*}\\
       f(a')\not\approx f(b')\prec_{\Gamma^*} f(a')\approx f(b')\prec_{\Gamma^*} a'\approx b'\prec_{\Gamma^*} a'\not\approx b'\prec_{\Gamma^*} K_1\prec_{\Gamma^*} \ldots
       \end{array}$
  \end{center}
   where $K_i$ are the $\beta\mhyphen\mathit{undefined}$ literals and $L_j$ are the trivially defined literals.
\end{example}

\begin{definition}[Rewrite Step]
  A \emph{rewrite step} is a five-tuple $(s\# t\cdotp\sigma, s\# t\lor C\cdotp\sigma, R, S, p)$ and inductively defined as follows.
  The tuple $(s\# t\cdotp\sigma, s\# t\lor C\cdotp\sigma, \epsilon, \epsilon, \epsilon)$ is a rewrite step. Given rewrite steps $R, S$ and
  a position $p$ then $(s\# t\cdotp\sigma, s\# t\lor C\cdotp\sigma, R, S, p)$ is a \emph{rewrite step}.
  The literal $s\# t$ is called the \emph{rewrite literal}. In case $R,S$ are not $\epsilon$, the rewrite literal of $R$ is an equation.
\end{definition}

Rewriting is one of the core features of our calculus. The following definition describes a rewrite inference between two clauses. Note that unlike the superposition calculus we
allow rewriting below variable level.

\begin{definition}[Rewrite Inference]
  Let $I_1 := (l_1\approx r_1\cdotp\sigma_1,l_1\approx r_1\lor C_1\cdotp\sigma_1, R_1,\allowbreak L_1, p_1)$ and
  $I_2 := (l_2\# r_2\cdotp\sigma_2,l_2\# r_2\lor C_2\cdotp\sigma_2, R_2, L_2, p_2)$ be two variable disjoint rewrite steps where $r_1\sigma_1\prec_T l_1\sigma_1$,
  $(l_2\# r_2)\sigma_2|_p = l_1\sigma_1$ for some position $p$. We distinguish two cases:
  \begin{itemize}
  \item[1.] if $p\in\mpos(l_2\# r_2)$ and $\mu := \mgu((l_2\# r_2)|_p,l_1)$ then $(((l_2\# r_2)[r_1]_{p})\mu\cdotp \sigma_1\sigma_2,$\\
  $\allowbreak ((l_2\# r_2)[r_1]_{p})\mu\lor C_1\mu \lor C_2\mu\cdotp \sigma_1\sigma_2,\allowbreak I_1,\allowbreak I_2,\allowbreak p) $ is the result of a rewrite inference.
  \item[2.] if $p\not\in\mpos(l_2\# r_2)$ then let $(l_2\# r_2)\delta$ be the most general instance of $l_2\# r_2$ such that $p\in\mpos((l_2\# r_2)\delta)$, $\delta$ introduces only fresh variables
    and $(l_2\# r_2)\delta\sigma_2\rho = (l_2\# r_2)\sigma_2$  for some minimal $\rho$.
    Let $\mu := \mgu((l_2\# r_2)\delta|_p,l_1)$. Then\\
    $((l_2\# r_2)\delta[r_1]_{p}\mu\cdotp \sigma_1\sigma_2\rho, (l_2\# r_2)\delta[r_1]_{p}\mu\lor C_1\mu \lor C_2\delta\mu\cdotp \sigma_1\sigma_2\rho, I_1, I_2,p) $ is the result of a rewrite inference.
  \end{itemize}
\end{definition}

\begin{lemma}
  \label{rewrite_inference_properties}
  Let $I_1 := (l_1\approx r_1\cdotp\sigma_1,l_1\approx r_1\lor C_1\cdotp\sigma_1, R_1, L_1, p_1)$ and
  $I_2 := (l_2\# r_2\cdotp\sigma_2,\allowbreak l_2\# r_2\lor C_2\cdotp\sigma_2, R_2, L_2, p_2)$ be two variable disjoint rewrite steps where $r_1\sigma_1\prec_T l_1\sigma_1$,
  $(l_2\# r_2)\sigma_2|_p = l_1\sigma_1$ for some position $p$. Let $I_3 := (l_3\# r_3\cdotp\sigma_3,l_3\# r_3\lor C_3\cdotp\sigma_3, I_1, I_2, p)$ be the result of a rewrite inference.
  Then:
  \begin{enumerate}
  \item $C_3\sigma_3 = (C_1\lor C_2)\sigma_1\sigma_2$ and $l_3\# r_3\sigma_3 = (l_2\# r_2)\sigma_2[r_1\sigma_1]_p$.
  \item $(l_3\# r_3)\sigma_3\prec_T (l_2\# r_2)\sigma_2$
  \item If $N\models( l_1\approx r_1\lor C_1) \land (l_2\# r_2\lor C_2)$ for some set of clauses $N$, then $N\models l_3\# r_3\lor C_3$
\end{enumerate}
\end{lemma}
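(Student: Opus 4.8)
\medskip
\noindent\textbf{Proof proposal.} The plan is to prove the three items in the given order, each using the previous one, and to reduce the ``below variable level'' case (Case~2 of the Rewrite Inference) to the ``at term level'' case (Case~1) whenever convenient.

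For Item~1, I would first isolate the observation that the most general unifier $\mu$ is absorbed by the composed grounding. Set $\theta := \sigma_1\sigma_2$ in Case~1 and $\theta := \sigma_1\sigma_2\rho$ in Case~2. Because $I_1, I_2$ are variable disjoint and $\sigma_1,\sigma_2$ have ground codomains and disjoint domains, $\theta$ agrees on $\vars(I_1)\cup\vars(I_2)$ with $\sigma_1\cup\sigma_2$ (modulo the fresh variables of $\delta$ in Case~2); using $(l_2\# r_2)\sigma_2|_p = l_1\sigma_1$ together with $p\in\mpos(l_2\# r_2)$ (resp. $p\in\mpos((l_2\# r_2)\delta)$ and the defining equation $(l_2\# r_2)\delta\sigma_2\rho = (l_2\# r_2)\sigma_2$) one checks that $\theta$ unifies $(l_2\# r_2)|_p$ with $l_1$ (resp. $(l_2\# r_2)\delta|_p$ with $l_1$), so $\theta = \mu\lambda$ for some $\lambda$. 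Since $\mu$ is idempotent and introduces no fresh variables, this gives $E\mu\theta = E\theta$ for every term, literal or clause $E$ over $\vars(I_1)\cup\vars(I_2)$. Applying this to $C_1, C_2$ (resp. $C_2\delta$) and using that substitution commutes with subterm replacement at a position of the unreduced object, a direct computation would yield $C_3\sigma_3 = (C_1\lor C_2)\sigma_1\sigma_2$ and $(l_3\# r_3)\sigma_3 = (l_2\# r_2)\sigma_2[r_1\sigma_1]_p$; in Case~2 one additionally needs the clause-level analogue $C_2\delta\sigma_2\rho = C_2\sigma_2$ of the defining equation, which follows by the same argument.

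Item~2 is then immediate: by Item~1, $(l_3\# r_3)\sigma_3$ is obtained from $(l_2\# r_2)\sigma_2$ by replacing its subterm $l_1\sigma_1$ at position $p$ (and $p\neq\epsilon$, since $l_1\sigma_1$ is a term) by $r_1\sigma_1$, where $r_1\sigma_1\prec_T l_1\sigma_1$ by hypothesis. Since $\prec_T$ is a rewrite ordering it is closed under contexts, so the side of $(l_2\# r_2)\sigma_2$ containing $p$ strictly decreases while the other side stays fixed; passing to the multiset encoding of (in)equations and the multiset extension of $\prec_T$, one (respectively two) copies of the larger term get replaced by a strictly smaller one, so $(l_3\# r_3)\sigma_3 \prec_T (l_2\# r_2)\sigma_2$.

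For Item~3 I would first dispatch Case~2 by noting that $N\models l_2\# r_2\lor C_2$ entails $N\models (l_2\# r_2\lor C_2)\delta$, so we may replace $l_2\# r_2\lor C_2$ by this instance and are in Case~1, where $l_3\# r_3\lor C_3 = ((l_2\# r_2)[r_1]_p)\mu\lor C_1\mu\lor C_2\mu$. It then suffices to show that every ground instance of this clause is entailed by $N$. Take a model $M$ of $N$ and a grounding $\tau$ of the clause, extended to a grounding $\tau'$ of all variables of $(l_1\approx r_1\lor C_1)\mu$ and $(l_2\# r_2\lor C_2)\mu$. As $M$ satisfies $(l_1\approx r_1\lor C_1)\mu\tau'$ and $(l_2\# r_2\lor C_2)\mu\tau'$, either $M$ satisfies $C_1\mu\tau = C_1\mu\tau'$ or $C_2\mu\tau = C_2\mu\tau'$, and the ground instance holds, or $M$ satisfies both $l_1\mu\tau'\approx r_1\mu\tau'$ and $(l_2\# r_2)\mu\tau'$; in the latter case $(l_2\# r_2)\mu|_p = l_1\mu$ (as $\mu$ unifies $(l_2\# r_2)|_p$ with $l_1$ and $p\in\mpos(l_2\# r_2)$), so the subterm of $(l_2\# r_2)\mu\tau'$ at $p$ equals $l_1\mu\tau'$, which is $M$-equal to $r_1\mu\tau'$, and by congruence of equality $M$ satisfies $((l_2\# r_2)[r_1]_p)\mu\tau' = ((l_2\# r_2)[r_1]_p)\mu\tau$, so again the ground instance holds; hence $N\models l_3\# r_3\lor C_3$. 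I expect the main obstacle to be the careful but routine bookkeeping in Case~2 of Item~1 --- checking that the substitution $\delta$ exposing $p$ and the minimal ``undoing'' substitution $\rho$ interact with $\sigma_2$ so that $\theta$ is really a unifier and $C_2\delta\sigma_2\rho = C_2\sigma_2$; everything else is either a direct calculation or an appeal to the fact that $\prec_T$ is a well-founded rewrite ordering total on ground terms.
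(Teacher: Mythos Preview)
The paper does not actually supply a proof of Lemma~\ref{rewrite_inference_properties}; the lemma is stated in Section~\ref{scleq_rules} and then used (inductively) in the soundness proof of Theorem~\ref{appr_4:scleq_sound} and in Lemma~\ref{form_of_stuck_states}, but no argument for it appears anywhere in the Appendix. So there is nothing to compare against --- you are filling a gap the authors left implicit.

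Your argument is correct and is exactly the kind of routine verification the authors presumably had in mind. A couple of minor remarks. For Item~1, Case~2, the step $C_2\delta\sigma_2\rho = C_2\sigma_2$ does need the two ingredients you identified: that $\dom(\delta)\subseteq\vars(l_2\# r_2)$ (so $\delta$ is the identity on $\vars(C_2)\setminus\vars(l_2\# r_2)$) and that the minimality of $\rho$ forces $\dom(\rho)$ to lie in the fresh variables introduced by $\delta$ (so $\rho$ acts trivially on any $x\sigma_2$, which is already ground). Once you state these explicitly, the per-variable check $x\delta\sigma_2\rho = x\sigma_2$ for all $x\in\vars(C_2)$ is immediate. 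For Item~3, your reduction of Case~2 to Case~1 via $N\models(l_2\# r_2\lor C_2)\delta$ is clean and matches how the paper itself treats Case~2 throughout (cf.\ the Explore-Refutation discussion and the soundness proof, where inferences below variable level are always handled by first instantiating). Nothing else needs adjustment.
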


Now that we have defined rewrite inferences we can use them to define a \emph{reduction chain application} and a \emph{refutation}, which are sequences of rewrite steps.
Intuitively speaking, a \emph{reduction chain application} reduces a literal in a clause with literals in $conv(\Gamma)$ until it is irreducible.
A \emph{refutation} for a literal $L$ that is $\beta\mhyphen \mathit{false}$ in $\Gamma$ for a given $\beta$, is a sequence of rewrite steps with literals in $\Gamma,L$ such that $\bot$ is inferred.
Refutations for the literals of the conflict clause will be examined during conflict resolution by the rule Explore-Refutation.

\begin{definition}[Reduction Chain]
  Let $\Gamma$ be a trail.
  A \emph{reduction chain} $\mathcal{P}$ from $\Gamma$ is a sequence of rewrite steps $[I_1,...,I_m]$
  such that for each $I_i = (s_i\# t_i\cdotp\sigma_i, s_i\# t_i\lor C_i\cdotp\sigma_i, I_j, I_k, p_i)$ either
  \begin{enumerate}
  \item $s_i\# t_i^{n_i:s_i\# t_i\lor C_i\cdotp\sigma}$ is contained in $\Gamma$ and $I_j = I_k = p_i = \epsilon$ or
  \item $I_i$ is the result of a rewriting inference from rewrite steps $I_j, I_k$ out of  $[I_1,...,I_m]$ where $j, k < i$.
  \end{enumerate}
  Let $(l\,\#\, r)\delta^{o:l\,\#\, r \lor C \cdotp\delta}$ be an annotated ground literal.
  A \emph{reduction chain application} from $\Gamma$ to $l\,\#\, r$ is a reduction chain  $[I_1,...,I_m]$ from $\Gamma,(l\,\#\, r)\delta^{o:l\,\#\, r \lor C \cdotp\delta}$ such
  that  $l\delta{\downarrow_{\conv(\Gamma)}} = s_m\sigma_m$ and $r\delta{\downarrow_{\conv(\Gamma)}} = t_m\sigma_m$.
  We assume reduction chain applications to be minimal, i.e., if any rewrite step is removed from the sequence it is no longer a reduction chain application.
\end{definition}

\begin{definition}[Refutation]
  Let $\Gamma$ be a trail and $(l\,\#\, r)\delta^{o:l\,\#\, r \lor C \cdotp\delta}$ an annotated ground literal that is $\beta\mhyphen \mathit{false}$ in $\Gamma$ for a given $\beta$.
  A \emph{refutation} $\mathcal{P}$ from $\Gamma$ and $l\,\#\, r $ is a reduction chain $[I_1,...,I_m]$
  from $\Gamma,(l\,\#\, r)\delta^{o:l\,\#\, r \lor C \cdotp\delta}$ such that $(s_m\# t_m)\sigma_m = s \not\approx s$ for some $s$.
  We assume refutations to be minimal, i.e., if any rewrite step $I_k$, $k<m$ is removed from the refutation, it is no longer a refutation.
\end{definition}

\subsection{The \SCLEQ\ Inference Rules} \label{subsec:scleqrules}

We can now define the rules of our calculus based on the previous definitions.
A \emph{state} is a six-tuple $(\Gamma; N; U; \beta; k; D)$ similar to the SCL calculus,
where $\Gamma$ a sequence of annotated ground literals,
$N$ and $U$ the sets of initial and learned clauses,
$\beta$ is a ground term such that for all $L\in \Gamma$ it holds $L\prec_T \beta$,
$k$ is the decision level,
and $D$ a status that is $\top$, $\bot$ or a closure $C\,\cdotp\sigma$. Before we propagate or decide any literal,
we make sure that it is irreducible in the current trail. Together with the design of $\prec_{\Gamma^*}$ this
eventually enables rewriting as a simplification rule.

\bigskip
\longnamerules{Propagate}
{$(\Gamma; N; U; \beta; k;\top)$}
{$(\Gamma,s_m\# t_m\sigma_m^{{k:(s_m\# t_m\lor C_m)\cdotp\sigma_m}}; N; U; \beta; k;\top)$}
{ provided there is a $C \in (N \cup U)$,
  $\sigma$ grounding for $C$,
  $C = C_0 \lor C_1 \lor L$,
  $\Gamma \models \neg C_0\sigma$,
  $C_1\sigma = L\sigma \lor ... \lor L\sigma$,
  $C_1 = L_1 \lor ... \lor L_n$,
  $\mu = mgu(L_1,..., L_n, L)$
  $L\sigma$ is $\beta\mhyphen \mathit{undefined}$ in $\Gamma$,
  $(C_0 \lor L)\mu\sigma \prec_T \beta$,
  $\sigma$ is irreducible by $conv(\Gamma)$,
  $[I_1,\ldots,I_m]$ is a reduction chain application from $\Gamma$ to $L\sigma^{k:(L\lor C_0)\mu\cdotp\sigma}$ where
  $I_m = (s_m\# t_m\cdotp\sigma_m, s_m\# t_m\lor C_m\cdotp\sigma_m, I_j, I_k, p_m)$.
}{\SCLEQ}

\bigskip
Note that the definition of Propagate also includes the case where $L\sigma$ is irreducible by $\Gamma$.
In this case $L = s_m\# t_m$ and $m=1$. The rule Decide below, is similar to Propagate, except for the
subclause $C_0$ which must be $\beta\mhyphen \mathit{undefined}$ or $\beta\mhyphen \mathit{true}$ in $\Gamma$, i.e., Propagate cannot be applied and the decision literal is annotated by a tautology.

\bigskip
\longnamerules{Decide}
{$(\Gamma; N; U; \beta; k;\top)$}
{$(\Gamma,s_m\# t_m\sigma_m^{{k+1:(s_m\# t_m\lor comp(s_m\# t_m))\cdotp\sigma_m}};\allowbreak N;\allowbreak U;\allowbreak \beta;\allowbreak k+1;\top)$}
{ provided there is a $C \in (N \cup U)$,
  $\sigma$ grounding for $C$,
  $C = C_0 \lor L$,
  $C_0\sigma$ is $\beta\mhyphen \mathit{undefined}$ or $\beta\mhyphen \mathit{true}$ in $\Gamma$,
  $L\sigma$ is $\beta\mhyphen \mathit{undefined}$ in $\Gamma$,
  $(C_0 \lor L)\sigma \prec_T \beta$,
  $\sigma$ is irreducible by $conv(\Gamma)$,
  $[I_1,\ldots,I_m]$ is a reduction chain application from $\Gamma$ to $L\sigma^{k+1:L\lor C_0\cdotp\sigma}$ where
  $I_m = (s_m\# t_m\cdotp\sigma_m, s_m\# t_m\lor C_m\cdotp\sigma_m, I_j, I_k, p_m)$.
}{\SCLEQ}

\bigskip

\shortrules{Conflict}
{$(\Gamma; N; U; \beta; k;\top)$}
{$(\Gamma; N; U; \beta; k; D)$}
{provided there is a $D' \in (N \cup U)$,
  $\sigma$ grounding for $D'$,
  $D'\sigma$ is $\beta\mhyphen \mathit{false}$ in $\Gamma$,
  $\sigma$ is irreducible by $conv(\Gamma)$,
  $D =\bot$ if $D'\sigma$ is of level $0$ and $D = D'\cdotp\sigma$ otherwise.
}{\SCLEQ}{10}

\bigskip

For the non-equational case, when a conflict clause is found by an SCL calculus~\cite{FioriWeidenbach19,BrombergerFW21}, the complements of its
first-order ground literals are contained in the trail. For equational literals this is not the
case, in general. The proof showing $D$ to be $\beta\mhyphen\mathit{false}$ with respect to $\Gamma$ is a rewrite
proof with respect to $\conv(\Gamma)$.
This proof needs to be analyzed to eventually perform paramodulation steps
on $D$ or to replace $D$ by a $\prec_{\Gamma^*}$ smaller $\beta\mhyphen\mathit{false}$ clause showing up in the proof.

\bigskip
\shortrules{Skip}
{$(\Gamma,K^{l:C\cdotp\tau},L^{k:C'\cdotp\tau'}; N; U; \beta; k; D\,\cdotp\sigma)$}
{$(\Gamma,K^{l:C\cdotp\tau}; N; U; \beta; l; D\,\cdotp\sigma)$}
{if $D\sigma$ is $\beta\mhyphen \mathit{false}$ in $\Gamma,K^{l:C\cdotp\tau}$.
}{\SCLEQ}{4}

\bigskip

The Explore-Refutation rule is the FOL with Equality counterpart to the resolve rule in CDCL or SCL.
While in CDCL or SCL complementary literals of the conflict clause are present
on the trail and can directly be used for resolution steps, this needs a generalization
for FOL with Equality. Here, in general, we need to look at (rewriting) refutations of the conflict
clause and pick an appropriate clause from the refutation as the next conflict clause.

\bigskip

\longnamerules{Explore-Refutation}
{$(\Gamma,L; N; U; \beta; k;(D\lor s\,\#\,t)\cdotp\sigma))$}
{$(\Gamma,L; N; U; \beta; k;(s_j\# t_j\lor C_j)\cdotp\sigma_j$)}
{ if $(s\,\#\,t)\sigma$ is strictly $\prec_{\Gamma^*}$ maximal in $(D\lor s\,\#\,t)\sigma$,
  $L$ is the defining literal of $(s\,\#\,t)\sigma$,
  $[I_1,...,I_m]$ is a refutation from $\Gamma$ and $(s\,\#\,t)\sigma$,
  $I_j = (s_j\# t_j\cdotp\sigma_j, (s_j\# t_j\lor C_j)\cdotp\sigma_j, I_l, I_k, p_j)$, $1\leq j\leq m$,
  $ (s_j\,\#\,t_j\lor C_j)\sigma_j \prec_{\Gamma^*} (D\lor s\,\#\,t)\sigma$, $(s_j\# t_j\lor C_j)\sigma_j$ is $\beta\mhyphen\mathit{false}$ in $\Gamma$.
}{\SCLEQ}

\bigskip
\longnamerules{Factorize}
{$(\Gamma;N;U;\beta;k;(D\lor L \lor L')\cdot\sigma)$}
{$(\Gamma;N;U;\beta;k;(D\lor L)\mu\cdot\sigma)$}
{provided $L\sigma = L'\sigma$, and $\mu =\mgu(L,L')$.}
{\SCLEQ}

\bigskip
\longnamerules{Equality-Resolution}
{$(\Gamma; N; U;\beta; k; (D \lor s\not\approx s')\cdotp\sigma)$}
{$ (\Gamma; N; U;\beta; k; D\mu\,\cdotp\sigma)$}
{provided $s\sigma = s'\sigma$, $\mu =mgu(s,s')$.
}{\SCLEQ}


\bigskip
\shortrules{Backtrack}
{$(\Gamma,K,\Gamma';N;U;\beta;k;(D\lor L)\cdot\sigma)$}
{$(\Gamma;N;U\cup\{D\lor L\};\beta;j-i;\top)$}
{provided $D\sigma$ is of level $i'$ where $i'<k$, $K$ is of level $j$ and
  $\Gamma,K$ the minimal trail subsequence such that there is
  a grounding substitution $\tau$ with $(D\lor L)\tau$ $\beta$-false in $\Gamma,K$ but not in  $\Gamma$;
  $i=1$ if $K$ is a decision literal and $i=0$ otherwise.}{\SCLEQ}{14}


\bigskip
\shortrules{Grow}
{$(\Gamma;N;U; \beta;k;\top)$}
{$(\epsilon;N;U; \beta'; 0;\top)$}
{provided $\beta \prec_T \beta'$.
}{\SCLEQ}{6}

\bigskip
In addition to soundness and completeness of the \SCLEQ\ rules their tractability in practice is an important
property for a successful implementation. In particular, finding propagating literals or detecting a false clause
under some grounding. It turns out that these operations are NP-complete, similar to first-order subsumption which
has been shown to be tractable in practice.

\begin{lemma}
  \label{lem:np-complete}
  Assume that all ground terms $t$ with $t\prec_T \beta$ for any $\beta$ are polynomial in the size of $\beta$.
  Then testing Propagate (Conflict) is NP-Complete, i.e., the problem of checking for a given clause $C$ whether there exists a grounding
  substitution $\sigma$ such that $C\sigma$ propagates (is false) is NP-Complete.
\end{lemma}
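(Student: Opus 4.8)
The plan is to prove the two halves separately, treating Propagate and Conflict in parallel because their side conditions are decided by the same polynomial-time primitives: congruence closure over the ground (dis)equations of $\Gamma$, ground completion to build $\conv(\Gamma)$, normalization with respect to a convergent ground rewrite system, and $\prec_T$-comparisons (taken, as for KBO and LPO, to be polynomial on terms of polynomial size). For membership in NP the certificate is the grounding substitution $\sigma$ alone. Since the rules require $D'\sigma$ (for Conflict) resp. $(C_0\lor L)\mu\sigma$ and $L\sigma$ (for Propagate) to be bounded by $\beta$ in $\prec_T$, every term in the range of $\sigma$ restricted to $\vars(C)$ occurs as a subterm of some literal $\prec_T\beta$ and is therefore of polynomial size by the standing assumption, so $\sigma$ is a polynomial-size certificate. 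Given $\sigma$, a verifier runs in polynomial time: it enumerates the (polynomially many relevant) splits $C=C_0\lor C_1\lor L$, computes $\mu$, builds $\conv(\Gamma)$ by ground completion and checks that $\sigma$ and the relevant clause instances are irreducible by it, decides the trail entailments ``$\Gamma\models\neg C_0\sigma$'' resp. ``$D'\sigma$ is $\beta$-false'' by congruence closure together with $\prec_T$-comparisons against $\beta$, and, for Propagate, after checking that $L\sigma$ is $\beta$-undefined, computes the normal forms of the two sides of $L\sigma$ under $\conv(\Gamma)$ and a minimal reduction chain application realizing them; by Lemma~\ref{rewrite_inference_properties}.2 every literal occurring in that chain is $\prec_T$-below $L\sigma$ and hence of polynomial size.

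For NP-hardness I plan a reduction from CNF-SAT, given in detail for Conflict; Propagate then follows by appending to the constructed clause a fresh equation $q\approx q'$ over two constants not occurring in $\Gamma$, taken as the propagated literal $L$ (so its normal form is itself and the required reduction chain application is the trivial one-step chain) with the rest of the clause as $C_0$, so that Propagate fires precisely when $C_0\sigma$ is $\beta$-false. Given $\phi$ over variables $p_1,\dots,p_n$ with $3$-literal clauses $c_1,\dots,c_m$, use constants $0,1,\top,\bot,e$, a unary symbol $d$, and one fresh symbol $g_i$ per clause, and set
\[
C \;:=\; d(x_1)\approx e\;\lor\;\cdots\;\lor\;d(x_n)\approx e\;\lor\;g_1(\bar x_{c_1})\approx\bot\;\lor\;\cdots\;\lor\;g_m(\bar x_{c_m})\approx\bot,
\]
where $\bar x_{c_i}$ lists the three variables of $c_i$. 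Let $\Gamma$ consist of $d(0)\not\approx e$, $d(1)\not\approx e$, $\top\not\approx\bot$, and, for each $i$, the eight equations $g_i(\bar v)\approx\bot$ for the unique $\bar v\in\{0,1\}^3$ that falsifies $c_i$ and $g_i(\bar v)\approx\top$ for the other seven, ordered (first the $g_i$-equations, then the three disequations) so that each literal is undefined over and irreducible by $\conv$ of its predecessors, which holds since all left-hand sides are distinct ground terms; pick $\beta$ above the finitely many ground terms involved. Then $d(x_i)\sigma\approx e$ is $\beta$-false in $\Gamma$ iff $x_i\sigma\in\{0,1\}$ (for any other ground value $u$ the term $d(u)$ is uninterpreted in $\Gamma$, so $\Gamma\not\models d(u)\not\approx e$), and for such $\sigma$ the literal $g_i(\bar x_{c_i})\sigma\approx\bot$ is $\beta$-false iff $\Gamma\models g_i(\bar x_{c_i}\sigma)\approx\top$ iff the assignment $x_i\mapsto x_i\sigma$ satisfies $c_i$; moreover any such $\sigma$ has range in $\{0,1\}$ and is hence irreducible by $\conv(\Gamma)$. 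Thus some $\sigma$ makes $C\sigma$ $\beta$-false iff $\phi$ is satisfiable, and the construction is polynomial.

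The step I expect to be the main obstacle is the polynomial bound inside the membership argument, concretely that the normal forms under $\conv(\Gamma)$ and a minimal reduction chain application realizing them can be produced in polynomial time: one must argue that normalization never leaves the (polynomially sized) terms that are $\prec_T$-below $L\sigma$ — which is where the polynomial-term hypothesis is essential — and that a realizing rewrite derivation, after minimization, has polynomial length, despite the $\prec_T$-downset of $\beta$ being only finite rather than polynomial. The NP-hardness direction is then routine once the $g_i$-gadgets and the $d$-literals that pin each $x_i$ to $\{0,1\}$ are in place.
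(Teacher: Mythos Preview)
Your proof is correct and follows the same overall strategy as the paper's---membership via a polynomial certificate $\sigma$ checked by congruence closure, hardness via a SAT reduction that stores the relevant truth table in $\Gamma$---but the concrete gadgets differ. The paper uses a single $3$-ary symbol $R$ and encodes a negative literal $\neg P$ by wrapping its variable with a unary $g$, so the constructed clause has one disequation $R(t_0,t_1,t_2)\not\approx \mathit{true}$ per $3$-SAT clause, and $\Gamma$ is a \emph{constant-size} table over $\{0,1,g(0),g(1)\}^3$; the domain restriction on $\sigma$ is then implicit (only values in $\{0,1\}$ make every $R$-literal defined, hence possibly false). Your version instead uses one symbol $g_i$ per clause with its full $8$-entry truth table in $\Gamma$, and adds explicit domain literals $d(x_j)\approx e$ to pin each $x_j$ to $\{0,1\}$. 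Both reductions are polynomial; the paper's trail is independent of the input formula while yours grows linearly with it, but in exchange your construction makes the domain constraint and the irreducibility of $\sigma$ completely transparent. Your treatment of Propagate by adjoining a fresh undefined unit literal, and your explicit flagging of the reduction-chain-length issue in the NP direction (which the paper simply absorbs into ``congruence closure''), are also more careful than the paper's sketch.
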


\begin{example}[SCL(EQ) vs. Superposition: Saturation]
  Consider the following clauses:
  $$N := \{C_1 := c\approx d\lor D, C_2 := a\approx b\lor c\not\approx d, C_3 := f(a)\not\approx f(b)\lor g(c)\not\approx g(d)\}$$
  where again we assume a KBO with all symbols having weight one, precedence $d\prec c\prec b\prec a\prec g\prec f$ and $\beta := f(f(g(a)))$.
  Suppose that we first decide $c\approx d$ and then propagate $a\approx b$: $\Gamma = [c\approx d^{1:c\approx d\lor c\not\approx d}, a\approx b^{1:C_2}]$.
  Now we have a conflict with $C_3$. Explore-Refutation applied to the conflict clause $C_3$ results in a paramodulation inference
  between $C_3$ and $C_2$. Another application of Equality-Resolution gives us the new conflict clause
  $C_4 := c\not\approx d\lor g(c)\not\approx g(d)$. Now we can Skip the last literal on the trail,
  which gives us $\Gamma = [c\approx d^{1:c\approx d\lor c\not\approx d}]$. Another application of the Explore-Refutation rule to $C_4$ using the decision justification clause
  followed by Equality-Resolution and Factorize gives us $C_5 := c\not\approx d$. Thus with \SCLEQ\ the following clauses remain:
  \begin{center}
    $\begin{array}{c}
       C'_1 = D \qquad C_5 = c\not\approx d\\
        C_3 = f(a)\not\approx f(b)\lor g(c)\not\approx g(d)\\
     \end{array}$
  \end{center}
  where we derived $C'_1$ out of $C_1$ by subsumption resolution~\cite{Weidenbach01handbook} using $C_5$.
  Actually, subsumption resolution is compatible with the general redundancy notion of \SCLEQ, see Lemma~\ref{appr_4:non-red}.
  Now we consider the same example with superposition and the very same ordering ($N_i$ is the clause set of the previous step and $N_0$ the initial clause set $N$).
  \begin{center}
    $\begin{array}{ll}
    N_0 &\Rightarrow_{Sup(C_2,C_3)}N_1\cup\{C_4 := c\not\approx d\lor g(c)\not\approx g(d)\}\\
      &\Rightarrow_{Sup(C_1,C_4)}N_2\cup\{C_5 := c\not\approx d\lor D\} \Rightarrow_{Sup(C_1,C_5)}N_3\cup\{C_6 := D\}\\
     \end{array}$
  \end{center}
  Thus superposition ends up with the following clauses:
  \begin{center}
    $\begin{array}{ll}
    C_2 = a\approx b\lor c\not\approx d & C_3 = f(a)\not\approx f(b)\lor g(c)\not\approx g(d)\\
    C_4 = c\not\approx d\lor g(c)\not\approx g(d) \quad & C_6 = D\\
     \end{array}$
  \end{center}
  The superposition calculus generates more and larger clauses.
\end{example}

\begin{example}[SCL(EQ) vs. Superposition: Refutation]
  Suppose the following set of clauses: $N := \{C_1 := f(x)\not\approx a\lor f(x)\approx b, C_2 := f(f(y))\approx y, C_3 := a\not\approx b\}$
  where again we assume a KBO with all symbols having weight one, precedence $b\prec a\prec f$ and $\beta := f(f(f(a)))$. A long refutation by the
  superposition calculus results in the following ($N_i$ is the clause set of the previous step and $N_0$ the initial clause set $N$):
  \begin{center}
    $\begin{array}{lll}
    N_0 & \Rightarrow_{Sup(C_1,C_2)} & N_1\cup \{C_4 := y\not\approx a\lor f(f(y))\approx b\}\\
      & \Rightarrow_{Sup(C_1,C_4)} & N_2\cup \{C_5 := a\not\approx b\lor f(f(y))\approx b\lor y\not\approx a\}\\
      & \Rightarrow_{Sup(C_2,C_5)} & N_3\cup \{C_6 := a\not\approx b\lor b\approx y\lor y\not\approx a\}\\
      & \Rightarrow_{Sup(C_2,C_4)} & N_4\cup \{C_7 := y\approx b\lor y\not\approx a\}\\
      & \Rightarrow_{EqRes(C_7)}   & N_5\cup \{C_8 := a\approx b\}\Rightarrow_{Sup(C_3,C_8)}N_6\cup \{\bot\}\\
     \end{array}$
  \end{center}
  The shortest refutation by the superposition calculus is as follows:
  \begin{center}
    $\begin{array}{lll}
      N_0 & \Rightarrow_{Sup(C_1,C_2)} & N_1\cup \{C_4 := y\not\approx a\lor f(f(y))\approx b\}\\
        & \Rightarrow_{Sup(C_2,C_4)} & N_2\cup\{C_5 := y\approx b\lor y\not\approx a\}\\
        & \Rightarrow_{EqRes(C_5)}   & N_3\cup\{C_6 := a\approx b\} \Rightarrow_{Sup(C_3,C_6)}N_4\cup\{\bot\}\\
     \end{array}$
  \end{center}
  In \SCLEQ~on the other hand we would always first propagate $a\not\approx b,f(f(a))\approx a$ and $f(f(b))\approx b$. As soon as $a\not\approx b$ and $f(f(a))\approx a$ are propagated we have a conflict with $C_1\{x\rightarrow f(a)\}$. So suppose in the worst case we propagate:
  $$\Gamma := [a\not\approx b^{0:a\not\approx b}, f(f(b))\approx b^{0:(f(f(y))\approx y)\{y\rightarrow b\}}, f(f(a))\approx a^{0:(f(f(y))\approx y)\{y\rightarrow a\}}]$$
  Now we have a conflict with $C_1\{x\rightarrow f(a)\}$. Since there is no decision literal on the trail, $\mathit{Conflict}$ rule immediately returns $\bot$ and we are done.
\end{example}

\section{Soundness and Completeness}
\label{soundness_completeness}
In this section we show soundness and refutational completeness of~\SCLEQ\ under the assumption of a regular run.
We provide the definition of a regular run and show that for a regular run all learned clauses are non-redundant according to our trail induced ordering. We start with the definition of a sound state.

\begin{definition}
\label{appr_4:soundness}
A state $(\Gamma; N; U; \beta; k; D)$ is sound if the following conditions hold:
\begin{enumerate}
 \item \label{appr_4:sound_1}$\Gamma$ is a consistent sequence of annotated literals,
 \item \label{appr_4:sound_2}for each decomposition $\Gamma = \Gamma_1, L\sigma^{i:(C \lor L)\cdotp \sigma}, \Gamma_2$ where $L\sigma$ is a propagated literal, we have that $C\sigma$ is  $\beta\mhyphen\mathit{false}$ in $\Gamma_1$, $L\sigma$ is $\beta\mhyphen\mathit{undefined}$ in $\Gamma_1$ and irreducible by $conv(\Gamma_1)$, $N\cup U \models (C \lor L)$ and $(C \lor L) \sigma\prec_T \beta$,
 \item \label{appr_4:sound_3}for each decomposition $\Gamma = \Gamma_1, L\sigma^{i:(L\lor comp(L))\cdotp\sigma}, \Gamma_2$ where $L\sigma$ is a decision literal, we have that $L\sigma$ is $\beta\mhyphen\mathit{undefined}$ in $\Gamma_1$ and irreducible by $conv(\Gamma_1)$, $N\cup U \models (L\lor comp(L))$ and $(L \lor comp(L))\sigma\prec_T \beta$,
 \item \label{appr_4:sound_4}$N \models U$,
 \item \label{appr_4:sound_5}if $D = C\clsr \sigma$, then $C\sigma$ is $\beta\mhyphen\mathit{false}$ in $\Gamma$, $N \cup U\models C$,
\end{enumerate}

\end{definition}
\begin{lemma}
\label{appr_4:init_state_sound}
 The initial state $(\epsilon; N; \emptyset; \beta; 0; \top)$ is sound.
\end{lemma}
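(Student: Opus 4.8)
The plan is to verify the five conditions of Definition~\ref{appr_4:soundness} one at a time for the concrete state $(\epsilon; N; \emptyset; \beta; 0; \top)$, exploiting that the trail is empty, that no clauses have been learned, and that the conflict component is $\top$. I expect each condition to hold either vacuously or by a one-line triviality.

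Concretely: for Condition~\ref{appr_4:sound_1}, the empty sequence $\epsilon$ is a consistent sequence of annotated literals, since it entails no literal at all and hence cannot entail both a literal and its complement. For Conditions~\ref{appr_4:sound_2} and~\ref{appr_4:sound_3}, there is no decomposition $\Gamma = \Gamma_1, L\sigma^{i:\ldots}, \Gamma_2$ of the empty trail into a nonempty middle literal, so both universally quantified statements are vacuously true; in particular $\epsilon$ contains neither propagated nor decision literals. For Condition~\ref{appr_4:sound_4}, we have $U = \emptyset$, and every interpretation trivially satisfies the empty clause set, so $N \models U$ holds. For Condition~\ref{appr_4:sound_5}, the status is $D = \top$, which is not a closure of the form $C \clsr \sigma$, so the implication has a false hypothesis and is satisfied. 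Collecting these observations yields that the initial state is sound, which is the claim of Lemma~\ref{appr_4:init_state_sound}.

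I do not anticipate a real obstacle here; this lemma is a base case whose only purpose is to seed the later invariant-preservation arguments. The only points deserving a moment's attention are bookkeeping ones: checking that "consistent sequence of annotated literals" is indeed trivially satisfied by $\epsilon$ under the trail definition, and confirming that the emptiness of $\Gamma$ genuinely discharges the quantifications over trail decompositions in Conditions~\ref{appr_4:sound_2} and~\ref{appr_4:sound_3} rather than forcing any nontrivial check. Once those are noted, the proof is complete.
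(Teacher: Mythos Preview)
Your proposal is correct. The paper does not spell out a proof of this lemma at all (it is stated without proof and simply invoked as the base case in the proof of Theorem~\ref{appr_4:scleq_sound}), so your explicit condition-by-condition verification of Definition~\ref{appr_4:soundness} is exactly the intended, and only natural, argument.
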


\begin{definition}
 A run is a sequence of applications of \SCLEQ ~rules starting from the initial state.
\end{definition}

\begin{theorem}
  \label{appr_4:scleq_sound}
 Assume a state $(\Gamma;N;U;\beta;k;D)$ resulting from a run. Then $(\Gamma;N;U;\beta;k;D)$ is sound.
\end{theorem}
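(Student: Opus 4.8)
The statement is a typical invariant-preservation lemma, so I would prove it by induction on the length of the run, i.e., on the number of \SCLEQ\ rule applications used to reach $(\Gamma;N;U;\beta;k;D)$. The base case is a run of length $0$, which produces the initial state $(\epsilon;N;\emptyset;\beta;0;\top)$; this is sound by Lemma~\ref{appr_4:init_state_sound}. For the inductive step, assume the state before the last rule application is sound and that $S' = (\Gamma';N;U';\beta';k';D')$ is obtained from it by one rule; I would then verify each of the five conditions of Definition~\ref{appr_4:soundness} for $S'$, going through the rules one by one. Most conditions are simply inherited because the corresponding component is unchanged; the work is concentrated in a few places.

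\textbf{The routine rules.} For \textbf{Grow} the trail becomes empty, $U$ is unchanged, $D'=\top$, and $\beta'$ validity is vacuous on $\epsilon$, so conditions 1--3 and 5 are trivial and condition 4 ($N\models U$) is inherited. For \textbf{Conflict}: if $D'=\bot$ all conditions are inherited or vacuous; if $D'=D'_0\cdotp\sigma$, condition 5 holds because $D'_0\in N\cup U$ gives $N\cup U\models D'_0$ and $\beta$-falseness of $D'_0\sigma$ in $\Gamma$ is a side condition. For \textbf{Skip} the trail loses only its last literal, so conditions 1--4 hold on the prefix and condition 5 is exactly the side condition. For \textbf{Backtrack} the new trail is a prefix of the old one (conditions 1--3), $D'=\top$ (condition 5 vacuous), and for condition 4 we need $N\models U\cup\{D\lor L\}$: the conflict state was sound, so $N\cup U\models D\lor L$ by its condition 5, and $N\models U$ by its condition 4, hence $N\models D\lor L$. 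For \textbf{Factorize} and \textbf{Equality-Resolution} only condition 5 is at stake: the new closure's clause is obtained by a sound inference (factoring, resp.\ equality resolution using reflexivity), hence still entailed by $N\cup U$; and since $\mu$ is the mgu of literals already identified by $\sigma$, the substitution $\sigma$ factors through $\mu$, so the new ground closure denotes a subclause of the old ground conflict clause and stays $\beta$-false in $\Gamma$.

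\textbf{Propagate and Decide.} Here the new trail literal $s_m\# t_m\sigma_m$ is the one produced by the reduction chain application, and I would check conditions 1--3 for it using the side conditions: it is $\beta$-undefined in $\Gamma$ and irreducible by $\conv(\Gamma)$ by the rule's premises, and $(s_m\# t_m\lor C_m)\sigma_m\prec_T\beta$ follows from the bound $(C_0\lor L)\mu\sigma\prec_T\beta$ (resp.\ $(C_0\lor L)\sigma\prec_T\beta$), the soundness bounds on the trail equations used (condition 2 of the old state), and Lemma~\ref{rewrite_inference_properties}(1)--(2). The remaining point is $N\cup U\models s_m\# t_m\lor C_m$ (for Decide this is the tautology $s_m\# t_m\lor comp(s_m\# t_m)$, which is trivially entailed). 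I would obtain this by induction along the reduction chain $[I_1,\dots,I_m]$: the base rewrite steps are either the starting closure $(L\lor C_0)\mu\cdotp\sigma$ with $C\in N\cup U$, or equations taken from $\Gamma$, which are entailed by conditions 2--3 of the old state; every other $I_i$ is a rewrite inference, so Lemma~\ref{rewrite_inference_properties}(3) propagates entailment up to $I_m$. The other conditions of $S'$ are inherited.

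\textbf{Explore-Refutation --- the main obstacle.} Only condition 5 changes, for the new conflict closure $(s_j\# t_j\lor C_j)\cdotp\sigma_j$; $\beta$-falseness in $\Gamma$ is a side condition of the rule, so the real content is $N\cup U\models s_j\# t_j\lor C_j$. Again I would induct along the refutation $[I_1,\dots,I_m]$: the old conflict clause $D\lor s\# t$ is entailed by $N\cup U$ (condition 5 of the old state), the trail equations feeding the refutation are entailed (conditions 2--3), and Lemma~\ref{rewrite_inference_properties}(3) carries entailment through each rewrite inference, in particular to $I_j$. The delicate part, which I expect to be where most care is needed, is the bookkeeping: one must align the formal ``rewrite step'' five-tuples of the reduction-chain/refutation/rewrite-inference definitions with the clean statement of Lemma~\ref{rewrite_inference_properties}, so that the ground-instance identities of Lemma~\ref{rewrite_inference_properties}(1) hold at each step and both $\beta$-falseness and entailment transfer to the \emph{clause with variables} $s_j\# t_j\lor C_j$, not merely to its displayed ground instance. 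Once this correspondence is established, all five soundness conditions hold for $S'$ and the induction closes.
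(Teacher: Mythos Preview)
Your proposal is correct and follows essentially the same approach as the paper: induction on the run, base case via Lemma~\ref{appr_4:init_state_sound}, and a rule-by-rule check of the five soundness conditions, with the substantive cases (Propagate, Decide, Explore-Refutation) handled by an inner induction along the reduction chain/refutation using Lemma~\ref{rewrite_inference_properties}. You even cover Grow explicitly, which the paper's proof omits, and your identification of the entailment propagation via Lemma~\ref{rewrite_inference_properties}(3) as the core of the Explore-Refutation case matches the paper exactly.
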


Next, we give the definition of a regular run. Intuitively speaking, in a regular run we are always allowed to do decisions except if
\begin{enumerate}
\item a literal can be propagated before the first decision and
\item the negation of a literal can be propagated.
\end{enumerate}
To ensure non-redundant learning we enforce at least one application of Skip during conflict resolution except for the special case of a conflict after a decision.

\begin{definition}[Regular Run] \label{appr_4:def:regular-run}
  A run is called \emph{regular} if
  \begin{enumerate}
  \item the rules $\mathit{Conflict}$ and $\mathit{Factorize}$ have precedence over all other rules,
  \item If $k=0$ in a state $(\Gamma; N;U;\beta;k;D)$, then $\mathit{Propagate}$ has precedence over $\mathit{Decide}$,
  \item If an annotated literal $L^{k:C\cdotp\sigma}$ could be added by an application of Propagate on $\Gamma$ in a state $(\Gamma; N;U;\beta;k;D)$ and $C\in N\cup U$,
    then the annotated literal $comp(L)^{k+1:C'\cdotp \sigma'}$ is not added by Decide on $\Gamma$,
  \item during conflict resolution $\mathit{Skip}$ is applied at least once, except if $\mathit{Conflict}$ is applied immediately after an application of $\mathit{Decide}$.
  \item if $\mathit{Conflict}$ is applied immediately after an application of $\mathit{Decide}$, then Backtrack is only applied in a state $(\Gamma,L'; N;U;\beta;k;D\cdotp\sigma)$
    if $L\sigma = comp(L')$ for some $L\in D$.
  \end{enumerate}
\end{definition}


Now we show that any learned clause in a regular run is non-redundant according to our trail induced ordering.

\begin{lemma}[Non-Redundant Clause Learning] \label{appr_4:non-red}
Let $N$ be a clause set. The clauses learned during a regular run in \SCLEQ\ are not redundant with respect to $\prec_{\Gamma^*}$ and $N \cup U$.
For the trail only
non-redundant clauses need to be considered.
\end{lemma}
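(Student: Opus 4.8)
The plan is to show that each clause learned by rule Backtrack during a regular run is non-redundant with respect to the trail-induced ordering $\prec_{\Gamma^*}$ and $N\cup U$. Since Backtrack learns the final conflict clause $(D\lor L)$ that results from a sequence Conflict, then some number of applications of Skip, Explore-Refutation, Factorize, Equality-Resolution, the argument has to track the conflict clause through all of these steps. The key quantity to control is the position of the conflict clause in the $\prec_{\Gamma^*}$-ordering: I want to establish that at the point where Backtrack fires, the learned clause $C := (D\lor L)$ satisfies $C \prec_{\Gamma^*} E$ for every clause $E$ that was ever used to produce it (in particular for the original conflict clause $D'\sigma$ from rule Conflict), and moreover that any ground instance of $C$ smaller-or-equal to $C$ in $\prec_{\Gamma^*}$ together with clauses from $N\cup U$ that are $\prec_{\Gamma^*}$-smaller cannot entail $C$.

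**First** I would argue that the conflict clause is $\beta$-false in the relevant trail prefix at every stage of conflict resolution: this is exactly what is maintained by the side conditions of Conflict, Skip and Explore-Refutation (each explicitly requires the new status clause to be $\beta$-false in the current trail), and Factorize / Equality-Resolution preserve $\beta$-falseness since they only remove or merge literals under the grounding $\sigma$. **Next**, the crucial monotonicity: Explore-Refutation replaces $(D\lor s\#t)\sigma$ by a clause $(s_j\#t_j\lor C_j)\sigma_j$ with the explicit side condition $(s_j\#t_j\lor C_j)\sigma_j \prec_{\Gamma^*} (D\lor s\#t)\sigma$, and Factorize / Equality-Resolution only make the ground clause smaller or equal in any multiset extension, hence $\preceq_{\Gamma^*}$. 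Skip does not change the status clause. So the learned ground clause $C\sigma$ is $\preceq_{\Gamma^*}$ the conflict clause detected by Conflict, which in turn is a ground instance of a clause in $N\cup U$. The point of the regular-run requirement that Skip is applied at least once (except for the conflict-after-decision special case) is that it forces the maximal literal of $C\sigma$ to be strictly below the top decision level's literal, so $C$ is genuinely "new" — its maximal literal in $\prec_{\Gamma^*}$ is strictly smaller than that of the clauses remaining usable on the trail. I would make this precise using the Literal Level machinery and Lemma~\ref{lem:lvlfix}: since $D\sigma$ is of level $i'<k$ (the Backtrack side condition), every literal of $C\sigma$ is $\prec_{\Gamma^*}$-below the defining literal of level $k$.

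**Then**, for non-redundancy itself: suppose for contradiction that $C\sigma$ is redundant, i.e.\ $(N\cup U)^{\preceq_{\Gamma^*} C\sigma} \models C\sigma$. Every clause in that set is $\preceq_{\Gamma^*} C\sigma$, hence $\beta$-false or has its relevant literals below the maximal literal of $C\sigma$; combined with soundness (Theorem~\ref{appr_4:scleq_sound}, giving $N\cup U\models$ everything and $\Gamma$ consistent) and the fact that $C\sigma$ is $\beta$-false in $\Gamma$, one derives that some clause strictly $\prec_{\Gamma^*}$-smaller than $C\sigma$ is already $\beta$-false in the corresponding trail prefix — contradicting either the minimality built into the refutation/reduction-chain definitions or the fact that Conflict would have been applicable earlier (regular runs give Conflict precedence). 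The residual case is a conflict arising immediately after Decide: here condition~5 of a regular run forces $L\sigma = comp(L')$ for the decision literal $L'$, so the learned clause, being a ground instance of a clause in $N\cup U$ that was false under the trail before the decision's complement, is handled directly by soundness plus the decision's annotation being a tautology.

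**The main obstacle** I anticipate is the non-redundancy step proper: unlike in propositional CDCL or literal-based SCL, the conflict clause here is obtained through a rewriting refutation rather than resolution, so "the complement of the resolved literal sits on the trail" is not available. One must instead exploit the delicate design of $\prec_{\Gamma^*}$ — in particular clauses~6 and~7 of Definition~\ref{appr_4:trail_ind_order}, which sandwich every derived $\beta$-defined literal strictly between the defining literal and the next trail literal — together with the minimality clauses in the definitions of reduction chain application and refutation, to rule out the existence of a $\prec_{\Gamma^*}$-smaller $\beta$-false clause that could witness redundancy. Getting the interaction between the dynamic ordering, the level bookkeeping, and the minimality of refutations exactly right is where the real work lies; the $\beta$-falseness and monotonicity bookkeeping is routine by comparison.
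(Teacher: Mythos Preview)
Your overall strategy for the main case (at least one Skip applied) matches the paper's: assume the learned ground clause $C\sigma$ is redundant, extract a $C'\in\gnd(N\cup U)$ with $C'\preceq_{\Gamma^*}C\sigma$ that is $\beta$-false in $\Gamma$, and contradict the precedence of Conflict in a regular run. Two remarks, though. First, the monotonicity bookkeeping you spend most of the proposal on (tracking $\preceq_{\Gamma^*}$ through Explore-Refutation, Factorize, Equality-Resolution) and the ``minimality built into the refutation/reduction-chain definitions'' are not what drives the argument; the paper never invokes minimality of refutations here. What actually matters is the purely order-theoretic step you skate over: from $C'\preceq_{\Gamma^*}C\sigma$ together with the fact that every literal of $C\sigma$ has its defining literal in the shortened trail $\Gamma'$ (because Skip was applied), one concludes via the structure of $\prec_{\Gamma^*}$ (cases 1, 6, 7 of Definition~\ref{appr_4:trail_ind_order}) that every literal of $C'$ is already defined in $\Gamma'$, hence $C'$ is $\beta$-false in $\Gamma'$. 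Combined with the auxiliary fact that Backtrack never leaves a $\beta$-false clause behind (so the trail just before the triggering Propagate/Decide had no false clause), this yields the desired earlier-Conflict contradiction. You assert this step but do not argue it.

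The genuine gap is your treatment of the conflict-immediately-after-Decide case. Your sentence ``the learned clause, being a ground instance of a clause in $N\cup U$ that was false under the trail before the decision's complement, is handled directly by soundness plus the decision's annotation being a tautology'' is not correct: the learned clause is in general \emph{not} a ground instance of any clause in $N\cup U$, and nothing about the decision's tautological annotation yields non-redundancy. The paper's argument here is quite different and requires a dedicated auxiliary lemma: since $C'\preceq_{\Gamma^*}C\sigma$ and (by regular-run condition~5) $\comp(L)\in C\sigma$ for the decision literal $L$, the clause $C'$ can contain at most $\comp(L)$ at level $k$; so either $C'$ has no level-$k$ literal (and was already false before the decision, contradiction as before), or $C'=C''\lor\comp(L)$ with $C''$ of lower level. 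The latter shape is then ruled out by a separate invariant of regular runs---essentially that condition~3 (never decide the complement of a propagatable literal) prevents any clause of the form $C''\lor\comp(L)$ with $C''$ of strictly smaller level from ever being $\beta$-false after a decision on $L$. You have not identified this invariant, and without it the case does not close.

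Finally, note that the second sentence of the lemma (``For the trail only non-redundant clauses need to be considered'') is a separate claim with its own argument in the paper, which your proposal does not address.
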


The proof of Lemma~\ref{appr_4:non-red} is based on the fact that conflict resolution eventually produces a clause smaller then
the original conflict clause with respect to $\prec_{\Gamma^*}$. All simplifications, e.g., contextual rewriting,
as defined in~\cite{bachmair1994rewrite,Weidenbach01handbook,WeidenbachWischnewskiCADE08,WeidenbachWischnewski10,Wischnewski12,GleissEtAl20},
are therefore compatible with Lemma~\ref{appr_4:non-red} and may be applied to the newly learned clause as long as they
respect the induced trail ordering. In detail, let $\Gamma$ be the trail before the application of rule Backtrack.
The  newly learned clause can be simplified according to the induced trail ordering $\prec_{\Gamma^*}$ as long
as the simplified clause is smaller with respect to $\prec_{\Gamma^*}$.

Another important consequence of Lemma~\ref{appr_4:non-red} is that newly learned clauses need not
to be considered for redundancy. Furthermore,  the \SCLEQ\ calculus always terminates, Lemma~\ref{lem:soundcomp:termination},
because there only finitely many non-redundant clauses with respect to a fixed $\beta$.

For dynamic redundancy, we have to consider the fact that the induced trail ordering changes. At this level, only redundancy criteria and simplifications that
are compatible with \emph{all} induced trail orderings may be applied. Due to the construction of the induced trail ordering, it is compatible with $\prec_T$ for
unit clauses.

\begin{lemma}[Unit Rewriting] \label{lem:soundcomplete:unitrewriting}
  Assume a state $(\Gamma;N;U;\beta;k;D)$ resulting from a regular run where the current level $k>0$ and a unit clause $l\approx r\in N$. Now assume a clause $C\lor L[l']_p \in N$ such that $l' = l\mu$ for some matcher $\mu$.
  Now assume some arbitrary grounding substitutions $\sigma'$ for $C\lor L[l']_p$, $\sigma$ for $l\approx r$ such that $l\sigma=l'\sigma'$ and $r\sigma\prec_T l\sigma$.
  Then $ (C\lor L[r\mu\sigma\sigma']_p)\sigma'\prec_{\Gamma^*} (C\lor L[l']_p)\sigma'$.
\end{lemma}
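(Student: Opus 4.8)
The plan is to reduce the clause comparison to a single-literal comparison and then to transfer the term ordering $\prec_T$ to the trail-induced ordering $\prec_{\Gamma^*}$, the latter step being where the regularity of the run is essential. For the bookkeeping: since $l\approx r$ and $C\lor L[l']_p$ are variable disjoint, $l\mu\sigma' = l'\sigma' = l\sigma$ forces $\mu\sigma'$ to agree with $\sigma$ on $\vars(l)$, so the composite ground term $r\mu\sigma\sigma'$ is exactly $r\sigma$. Hence the two ground clauses to be compared are $C\sigma'\lor K$ and $C\sigma'\lor K'$ with $K := L\sigma'[l\sigma]_p$ and $K' := K[r\sigma]_p$, and $K\neq K'$ because $K|_p = l\sigma\neq r\sigma$. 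As $\prec_{\Gamma^*}$ on clauses is a multiset extension, it suffices to show $K'\prec_{\Gamma^*}K$; and since $\prec_T$ is a rewrite ordering with $r\sigma\prec_T l\sigma$, replacing the subterm at $p$ gives $K'\prec_T K$ (lifting to the literal multiset is routine).

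Next I would split on whether $K$ is $\beta$-defined. If $K$ is $\beta$-undefined (in particular whenever $\beta\preceq_T l\sigma$, since then $\beta\preceq_T K$ by the subterm property of $\prec_T$), then $K'\prec_T K$ yields $K'\prec_{\Gamma^*}K$ by item~\ref{appr_4:trail_ind_order}.10 if $K'$ is also $\beta$-undefined, and by item~\ref{appr_4:trail_ind_order}.11 if $K'$ is $\beta$-defined. So assume $K$ is $\beta$-defined; then $l\sigma$ is a subterm of a term of $K$ and $K\prec_T\beta$, so $l\sigma\prec_T\beta$ and $(l\approx r)\sigma\prec_T\beta$. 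The crucial step — the only place where $l\approx r\in N$ and regularity are used — is that $(l\approx r)\sigma$ is then $\beta$-true already in the level-$0$ prefix $\Gamma_0$ of $\Gamma$: in a regular run no Propagate applies to $\Gamma_0$ (Propagate has precedence over Decide at level $0$, Definition~\ref{appr_4:def:regular-run}.2), so were $(l\approx r)\sigma$ $\beta$-undefined in $\Gamma_0$, a reduction-chain application of $l\approx r$ under a $\conv(\Gamma_0)$-irreducible variant of $\sigma$ would propagate the normal form of $l\sigma\approx r\sigma$, a contradiction; and $(l\approx r)\sigma$ cannot be $\beta$-false in $\Gamma_0$, since a $\beta$-false ground instance of a level-$0$ clause of $N$ would, by the precedence of Conflict, have produced a refutation before $k$ could become positive. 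Hence $\Gamma\models l\sigma\approx r\sigma$ with this equation of level $0$.

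I would then draw two consequences that finish the main case. First, $K$ and $K'$ are $\Gamma$-equivalent, so $K'$ is defined and, being $\prec_T K\prec_T\beta$, $\beta$-defined. Second, if $\lev(K)>0$ then $K\notin\Gamma\cup\comp(\Gamma)$: otherwise $l\sigma$, an irreducible subterm of a trail literal, would be irreducible by $\conv(\Gamma_0)$ (reducibility is monotone in the set of equations used), which together with $l\sigma{\downarrow_{\conv(\Gamma_0)}} = r\sigma{\downarrow_{\conv(\Gamma_0)}}$ and $r\sigma\prec_T l\sigma$ is impossible; so $K = M_{i,a}$ with $i=\lev(K)$. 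For $K'$ I would pick a defining core $\Delta$ of $K$ and a core $\Delta_0\subseteq\Gamma_0$ of $l\sigma\approx r\sigma$; congruence shows $\Delta\cup\Delta_0$ defines $K'$, and since level-$0$ literals precede all level-$>0$ literals on the trail, a minimal sub-core has $\prec_\Gamma$-maximum $\preceq_\Gamma\max_\Gamma(K)$, so $\max_\Gamma(K')\preceq_\Gamma\max_\Gamma(K)$ and $\lev(K')\leq\lev(K)$. Now $K'\prec_{\Gamma^*}K$ can be read off from Definition~\ref{appr_4:trail_ind_order}: if $\lev(K)=0$ then $\lev(K')=0$ and $\prec_{\Gamma^*}$ restricted to $\beta$-defined level-$0$ literals is $\prec_T$ (item~\ref{appr_4:trail_ind_order}.8), done by $K'\prec_T K$; if $\lev(K)>0$ and $K'$ sits in a strictly earlier block (smaller defining literal, or level $0$) then $K'\prec_{\Gamma^*}K$ by items~\ref{appr_4:trail_ind_order}.6--\ref{appr_4:trail_ind_order}.7 and~\ref{appr_4:trail_ind_order}.9; and if $K'=M_{i,b}$ has the same defining literal as $K=M_{i,a}$, then $K'\prec_T K$ gives $b<a$, hence $K'\prec_{\Gamma^*}K$ by item~\ref{appr_4:trail_ind_order}.1 (the subcases $K'\in\Gamma\cup\comp(\Gamma)$ have block index $\leq i$ and are covered by items~\ref{appr_4:trail_ind_order}.6--\ref{appr_4:trail_ind_order}.7 in the same way).

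The step I expect to be the main obstacle is exactly the "crucial step" above: proving that $(l\approx r)\sigma$ is $\beta$-true in $\Gamma_0$. It demands a careful account of what a regular run must and must not have left undone at level $0$, including the effect of normalizing a possibly reducible $\sigma$ through a reduction chain, and it is precisely this fact that makes the remainder — the exclusion of $K\in\Gamma\cup\comp(\Gamma)$ and the level bound $\lev(K')\le\lev(K)$ — go through, thereby preventing the rewritten literal $K'$ from landing in a strictly later block of $\prec_{\Gamma^*}$ than $K$ (which is where a naive attempt fails).
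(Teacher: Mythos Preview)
Your proposal is correct and follows essentially the same route as the paper's proof: reduce to the single-literal comparison $K'\prec_{\Gamma^*}K$, split on whether the involved literal lies below $\beta$, and in the main case use regularity (exhaustive level-$0$ propagation of the unit $l\approx r$) to pin the rewriting equation at level $0$, after which the items of Definition~\ref{appr_4:trail_ind_order} finish the job. Your treatment is in fact more careful than the paper's on two points the paper handles tersely: you argue explicitly that $K\notin\Gamma\cup\comp(\Gamma)$ via irreducibility of trail literals, and you prove only $\max_\Gamma(K')\preceq_\Gamma\max_\Gamma(K)$ rather than the paper's stronger (and not-quite-accurate) assertion that the defining literal is unchanged---your weaker statement is what is actually needed.
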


In addition, any notion that is based on a literal subset relationship is also compatible with ordering changes. The standard example is subsumption.

\begin{lemma}
  \label{lem:redundancy}
  Let $C,D$ be two clauses. If there exists a substitution $\sigma$ such that $C\sigma\subset D$, then $D$ is redundant with respect to $C$ and any $\prec_{\Gamma^*}$.
\end{lemma}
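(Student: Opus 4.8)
The plan is to show that if $C\sigma \subset D$ (strict subset, as multisets of literals), then for every grounding substitution $\tau$ of $D$, the ground clause $D\tau$ is redundant with respect to $\gnd(C)$ and the given ordering $\prec_{\Gamma^*}$; by Definition~\ref{appr_4:prelim:def:redundancy} this is exactly what it means for $D$ to be redundant with respect to $C$ and $\prec_{\Gamma^*}$. The key observation is that from $C\sigma \subset D$ we obtain, for any grounding $\tau$ of $D$, a ground instance $C\sigma\tau$ of $C$ whose literal multiset is a strict sub-multiset of $D\tau$. Hence $D\tau$ is a superset of $C\sigma\tau$ and so $C\sigma\tau \models D\tau$ semantically (adding literals to a clause only weakens it).

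It then remains to check the ordering side condition of the redundancy definition, namely that $C\sigma\tau \preceq_{\Gamma^*} D\tau$, so that $C\sigma\tau \in \gnd(C)^{\preceq_{\Gamma^*} D\tau}$ and therefore $\gnd(C)^{\preceq_{\Gamma^*} D\tau} \models D\tau$. This follows from the fact that $\prec_{\Gamma^*}$ on clauses is the multiset extension of a total strict order on ground literals (Lemma~\ref{gammastar:properties}): a strict sub-multiset is always strictly smaller in the multiset extension of any strict order, regardless of which concrete order $\prec_{\Gamma^*}$ happens to be — this is the standard property of multiset orderings. In fact we get $C\sigma\tau \prec_{\Gamma^*} D\tau$ strictly, which is more than enough. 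Note that the argument is completely uniform in the trail, which is why the statement holds for \emph{any} $\prec_{\Gamma^*}$; one does not need to inspect the clauses 1--11 of Definition~\ref{appr_4:trail_ind_order} at all, only that each induced ordering is a total strict order lifted by multiset extension.

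Assembling: fix an arbitrary $D\tau \in \gnd(D)$; set $\rho := \sigma\tau$, which is grounding for $C$ since $C\sigma\tau = (C\sigma)\tau$ and $C\sigma \subset D$ forces $\vars(C\sigma)\subseteq\vars(D)$, so $\tau$ grounds $C\sigma$; then $C\rho \in \gnd(C)$, $C\rho$'s literals form a strict sub-multiset of $D\tau$, hence $C\rho \prec_{\Gamma^*} D\tau$ and $C\rho \models D\tau$; therefore $\gnd(C)^{\preceq_{\Gamma^*}D\tau} \models D\tau$, i.e.\ $D\tau$ is redundant. Since $D\tau$ was arbitrary, $D$ is redundant with respect to $C$ and $\prec_{\Gamma^*}$.

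**The main obstacle** is essentially bookkeeping rather than mathematics: one must be careful that $C\sigma \subset D$ is read as a multiset (or set) inclusion of literals and that this does guarantee the sub-multiset relation survives applying the common grounding $\tau$ — in particular that no two distinct literals of $D$ collapse in a way that breaks the inclusion, which they cannot, since applying a substitution to both sides of a multiset inclusion preserves it. The only genuinely load-bearing external fact is the multiset-ordering lemma (strict sub-multiset $\Rightarrow$ strictly smaller), together with Lemma~\ref{gammastar:properties} telling us $\prec_{\Gamma^*}$ is a total strict order so that its multiset extension is well-behaved.
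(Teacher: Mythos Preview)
Your proposal is correct and follows essentially the same approach as the paper: take an arbitrary grounding $\tau$ of $D$, observe that $C\sigma\tau\subset D\tau$ as multisets, conclude $C\sigma\tau\prec_{\Gamma^*}D\tau$ from the multiset extension and $C\sigma\tau\models D\tau$ from the subset relation, and apply Definition~\ref{appr_4:prelim:def:redundancy}. The paper's proof is a terse three-line version of exactly this argument; your extra bookkeeping about $\vars(C\sigma)\subseteq\vars(D)$ and the appeal to Lemma~\ref{gammastar:properties} is fine but not strictly needed beyond what the paper uses.
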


The notion of redundancy, Definition~\ref{appr_4:prelim:def:redundancy},
only supports a strict subset relation for Lemma~\ref{lem:redundancy}, similar to the superposition calculus. However, the newly generated clauses
of \SCLEQ\ are the result of paramodulation inferences~\cite{RobinsonWos69}. In a recent contribution to dynamic, abstract redundancy \cite{WaldmannEtAl20} it is shown that also the non-strict
subset relation in Lemma~\ref{lem:redundancy}, i.e., $C\sigma\subseteq D$, preserves completeness.

If all stuck states, see below Definition~\ref{appr_4:def:stuck-state}, with respect to a fixed $\beta$ are visited before increasing $\beta$ then
this provides a simple dynamic fairness strategy.

When unit reduction or any other form of supported rewriting is applied to clauses smaller than the current $\beta$, it can be applied independently from the current trail.
If, however, unit reduction is applied to clauses larger than the current $\beta$ then the calculus must do a restart to its initial state, in particular the trail must be emptied,
as for otherwise rewriting may result generating a conflict that did not exist with respect to the current trail before the rewriting.
This is analogous to a restart in CDCL once a propositional unit clause is derived and used for simplification.
More formally, we add the following new Restart rule to the calculus to reset the trail to its initial state after a unit reduction.

\bigskip
\shortrules{Restart}
{$(\Gamma;N;U; \beta;k;\top)$}
{$(\epsilon;N;U; \beta; 0;\top)$}
{}{\SCLEQ}{8}

\bigskip

Next we show refutation completeness of~\SCLEQ. To achieve this we first give a definition of a stuck state.
Then we show that stuck states only occur if all ground literals $L\prec_T\beta$ are $\beta\mhyphen\mathit{defined}$ in $\Gamma$ and not during conflict resolution.
Finally we show that conflict resolution will always result in an application of Backtrack. This allows us to show termination (without application of Grow) and refutational completeness.

\begin{definition}[Stuck State] \label{appr_4:def:stuck-state}
  A state $(\Gamma;N;U;\beta;k;D)$ is called \emph{stuck} if $D\neq \bot$ and none of the rules of the calculus, except for Grow, is applicable.
\end{definition}

\begin{lemma}[Form of Stuck States]
  \label{form_of_stuck_states}
  If a regular run (without rule Grow) ends in a stuck state $(\Gamma;N;U;\beta;k;D)$, then $D=\top$ and all ground literals $L\sigma\prec_T \beta$, where $L\lor C\in (N\cup U)$
  are $\beta\mhyphen\mathit{defined}$ in $\Gamma$.
\end{lemma}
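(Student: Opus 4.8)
The plan is to argue by contradiction: assume the regular run (without Grow) ends in a stuck state $(\Gamma;N;U;\beta;k;D)$ and show that if either $D\neq\top$ or some ground literal $L\sigma\prec_T\beta$ with $L\lor C\in(N\cup U)$ is not $\beta$-defined in $\Gamma$, then one of the calculus rules other than Grow is still applicable, contradicting stuckness. I would split the argument into two halves accordingly.

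First I would handle the case $D\neq\top$, i.e.\ we are in conflict resolution. By soundness (Theorem~\ref{appr_4:scleq_sound}, item~\ref{appr_4:sound_5}), $D = C\cdotp\sigma$ with $C\sigma$ being $\beta$-false in $\Gamma$ and $N\cup U\models C$. If $C\sigma$ is of level $0$, then Conflict would have produced $\bot$, contradicting $D\neq\bot$; so some literal of $C\sigma$ has level $>0$. Now I would do a case analysis on the form of $C$ and of the trail. If $C$ contains two literals unifiable under $\sigma$, Factorize applies. If $C$ contains a literal $s\not\approx s'$ with $s\sigma = s'\sigma$, Equality-Resolution applies. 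Otherwise, look at the top of the trail $\Gamma = \Gamma',L$: if $C\sigma$ is already $\beta$-false in $\Gamma'$ then Skip applies (we must check the regularity side conditions, but those only constrain \emph{when} Backtrack may be applied, not Skip). If $C\sigma$ is not $\beta$-false in $\Gamma'$, then $L$ is needed to make $C\sigma$ false; here I would use the machinery of the defining literal together with the fact that $D$'s falsity is witnessed by a rewrite proof w.r.t.\ $\conv(\Gamma)$, to locate the $\prec_{\Gamma^*}$-maximal literal $(s\#t)\sigma$ in $C\sigma$ and show its defining literal is exactly the trail literal $L$ (or a suitable one), so that Explore-Refutation is applicable — the refutation from $\Gamma$ and $(s\#t)\sigma$ exists because $(s\#t)\sigma$ is $\beta$-false, and Lemma~\ref{rewrite_inference_properties} plus the construction of $\prec_{\Gamma^*}$ guarantees some clause in that refutation is strictly smaller and $\beta$-false. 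Finally, if the level of $D$ is already below $k$ and no literal of the top of the trail is needed, Backtrack applies. Chasing these cases to show one rule always fires is, I expect, the main obstacle: it essentially re-proves that conflict resolution never deadlocks, and the subtle point is coordinating the regularity conditions (4) and (5) — that Skip must be used at least once unless Conflict followed Decide — with the existence of an applicable rule at each intermediate step.

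Second I would handle the case $D=\top$ but some ground literal $K := L\sigma \prec_T\beta$ with $L\lor C\in(N\cup U)$ is not $\beta$-defined (equivalently, $K$ is $\beta$-undefined) in $\Gamma$. Among all such clauses and groundings, pick one; I would like to argue that either Propagate or Decide is then applicable. Consider the clause $C'\sigma$ where $C' = C\lor L$: if $C'\sigma$ is $\beta$-false in $\Gamma$, then Conflict would be applicable (using irreducibility of $\sigma$, which can be arranged by normalizing $\sigma$ via $\conv(\Gamma)$ — a step I would need to justify carefully, since normalization must keep the grounding below $\beta$, which holds because $\prec_T$ is a reduction ordering), contradicting $D=\top$ in a stuck state. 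So $C'\sigma$ is not $\beta$-false; then some literal of it is $\beta$-undefined or $\beta$-true. Splitting off the $\beta$-false part as $C_0$, the remaining literals are candidates for the propagated/decided literal $L$; after factoring repeated occurrences (mgu $\mu$), either the Propagate side conditions are met (the undefined literal together with $\beta$-false context and a reduction chain application — which always exists and terminates since $\conv(\Gamma)$ is convergent and $\prec_T$ well-founded), or, if the context $C_0$ is $\beta$-undefined or $\beta$-true rather than $\beta$-false, the Decide side conditions are met. The regularity restriction (3) forbidding deciding the complement of a propagable literal does not block us, because we are always free to instead \emph{perform} that propagation. Hence a rule applies, contradiction.

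Throughout, I would lean on: Theorem~\ref{appr_4:scleq_sound} for the structural invariants of the state; the existence and termination of reduction chain applications and refutations from the definitions in Section~\ref{scleq_rules}; Lemma~\ref{rewrite_inference_properties} for the size-decrease and entailment properties needed to exhibit Explore-Refutation's side conditions; and Lemma~\ref{gammastar:properties} to ensure $\prec_{\Gamma^*}$-maximal literals exist and the ``strictly smaller $\beta$-false clause in the refutation'' can be selected. The genuinely delicate part, as noted, is the conflict-resolution case: making sure that in \emph{every} non-terminal configuration of conflict resolution at least one of Factorize, Equality-Resolution, Skip, Explore-Refutation, Backtrack is enabled, and that this is consistent with the regular-run discipline.
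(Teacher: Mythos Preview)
Your overall plan matches the paper's: show that during conflict resolution some rule always fires, and separately that when $D=\top$ with an undefined literal either Propagate or Decide applies (the paper packages the latter as an auxiliary lemma, using essentially the normalisation-of-$\sigma$ argument you sketch). Two concrete gaps in your conflict-resolution half, however, would block the proof as written.

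First, your sentence ``If $C\sigma$ is of level~$0$, then Conflict would have produced $\bot$'' establishes that the \emph{initial} conflict clause has positive level, but you then rely on this invariant throughout conflict resolution without justifying it. After Explore-Refutation replaces the conflict closure by a clause extracted from a refutation, there is no a~priori reason the new clause is still of level $\geq 1$; your later case analysis (picking the $\prec_{\Gamma^*}$-maximal literal and arguing its defining literal is the top trail literal~$L$) silently assumes a defining literal exists, i.e.\ positive level. The paper proves this invariant as a standalone lemma, and the argument genuinely uses regularity condition~(2): since propagation is exhaustive at level~$0$, every level-$0$ trail literal has a \emph{unit} justification (up to level-$0$ context), so any refutation step using the defining literal~$L$ drags the non-trivial annotation of~$L$ into the new clause, keeping the level positive. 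You need this or an equivalent.

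Second, you do not isolate the one case where Explore-Refutation fails to apply even though Skip does not: when $L$ is a decision literal and the strictly $\prec_{\Gamma^*}$-maximal literal of the conflict clause is exactly $\comp(L)$. Here any refutation of $\comp(L)$ from $\Gamma$ must use $L$'s justification, which is the tautology $L\lor\comp(L)$; the resulting clause just re-introduces $\comp(L)$ and is not $\prec_{\Gamma^*}$-smaller, so Explore-Refutation's side condition is not met. The escape is that in this configuration the remaining literals of the conflict clause are of level $<k$ (since $\comp(L)$ is the $\prec_{\Gamma^*}$-least $\beta$-false literal of level~$k$ and Factorize is unavailable), so Backtrack \emph{is} applicable --- and regularity condition~(5) is precisely what licenses Backtrack here without a prior Skip. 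You flag conditions (4)--(5) as ``the subtle point'' but do not name this case; without it the case split is incomplete.
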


\begin{lemma}
  \label{backtrack_finally_applicable}
  Suppose a sound state $(\Gamma; N;U;\beta;k; D)$ resulting from a regular run where $D\not\in\{\top,\bot\}$.
  If $Backtrack$ is not applicable then any set of applications of $\mathit{Explore\mhyphen Refutation}$, $Skip$, $\mathit{Factorize}$, $\mathit{Equality\mhyphen Resolution}$
  will finally result in a sound state $(\Gamma'; N;U;\beta;k;D')$, where $D' \prec_{\Gamma^*} D$. Then Backtrack will be finally applicable.
\end{lemma}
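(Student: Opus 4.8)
The plan is to prove the statement in three stages: (i) every application of the four conflict-resolution rules preserves soundness, (ii) every maximal sequence of such applications terminates, and (iii) in the resulting state the conflict closure has strictly decreased with respect to $\prec_{\Gamma^*}$ and rule Backtrack has become applicable. Soundness is immediate: each of Explore-Refutation, Skip, Factorize and Equality-Resolution is an \SCLEQ\ rule, so the resulting state again results from a run and is sound by Theorem~\ref{appr_4:scleq_sound}; in particular the conflict status stays a closure $C'\cdot\sigma'$ with $C'\sigma'$ being $\beta$-false in the current trail, and it never becomes $\top$ (only rule Conflict sets $\top$ to a conflict, and that requires status $\top$ beforehand).

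For termination I would use the lexicographic measure $(|\Gamma|,\,D)$, where the first component is the length of the current trail and the second is the current conflict closure compared by $\prec_{\Gamma^*}$ (lifted to clauses by multiset extension). Skip strictly decreases $|\Gamma|$. The three remaining rules leave $\Gamma$ — hence $\prec_{\Gamma^*}$ — unchanged and strictly decrease the conflict closure: Explore-Refutation by its explicit side condition $(s_j\#t_j\lor C_j)\sigma_j\prec_{\Gamma^*}(D\lor s\,\#\,t)\sigma$, and Factorize and Equality-Resolution because the resulting ground clause is a strict submultiset of the previous one (a duplicated literal, respectively a trivial $s\not\approx s$, is deleted), hence $\prec_{\Gamma^*}$-smaller. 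Since the $\prec_{\Gamma^*}$-comparison is only ever invoked with $\Gamma$ fixed (equal first components), the second component lives in a fixed well-founded order ($\prec_{\Gamma^*}$ is well-founded by Lemma~\ref{gammastar:properties}), so the measure strictly decreases in a well-founded order and every maximal sequence is finite.

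Next I would show that no maximal sequence terminates after applying only Skip, which simultaneously yields the strict decrease claimed for $D'$. If Skip is not applicable then removing the top literal $K$ of $\Gamma$ makes the conflict clause no longer $\beta$-false, so $K$ is essential; consequently some literal of the conflict clause is $\beta$-defined only because of $K$, and since $K$ is the $\prec_\Gamma$-largest trail literal, the $\prec_{\Gamma^*}$-maximal literal $(s\,\#\,t)\sigma$ of the conflict clause has $K$ as its defining literal (this uses the structure of $\prec_{\Gamma^*}$, items~6–9 of Definition~\ref{appr_4:trail_ind_order}). A short case analysis on the shape of $(s\,\#\,t)\sigma$ and on whether the clause has repeated literals shows that one of Factorize, Equality-Resolution or Explore-Refutation applies; for the last, a refutation from $\Gamma$ and $(s\,\#\,t)\sigma$ exists because $(s\,\#\,t)\sigma$ is $\beta$-false, and the final step $I_m$ of such a refutation provides a legal choice of $I_j$: its clause $(s'\not\approx s'\lor C_m)\sigma_m$ collects only literals of the $\beta$-false conflict clause and literals of justifications of trail literals used for rewriting, all of which are $\beta$-false in $\Gamma$ (by soundness of the state) and $\prec_{\Gamma^*}$-below $(s\,\#\,t)\sigma$ (by items~1, 6, 7, 9 of Definition~\ref{appr_4:trail_ind_order} and Lemma~\ref{rewrite_inference_properties}). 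Hence along any maximal sequence at least one non-Skip rule fires, so $D$ strictly decreases; and since defining literals — and therefore $\prec_{\Gamma^*}$ — are stable under passing to trail prefixes (Lemma~\ref{lem:lvlfix}, noting that a clause still $\beta$-false after Skip keeps its defining cores), this decrease holds with respect to the original $\prec_{\Gamma^*}$ as well, giving $D'\prec_{\Gamma^*}D$.

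Finally, in the terminal state none of the four conflict-resolution rules applies and the status is not $\top$; rules Conflict, Propagate, Decide, Restart and Grow all require status $\top$, so the state would be stuck unless Backtrack applies — but a stuck state has status $\top$ by Lemma~\ref{form_of_stuck_states}, a contradiction, so Backtrack is applicable. (If some step produced the empty clause, i.e.\ status $\bot$, a refutation has already been found and nothing is to prove.) The main obstacle is the third stage: one must pin down, through the rather intricate Definition~\ref{appr_4:trail_ind_order}, that the defining literal of the $\prec_{\Gamma^*}$-maximal conflict literal is exactly the top of the trail whenever Skip is blocked, and that a refutation then always supplies a clause satisfying all side conditions of Explore-Refutation; a secondary subtlety, handled by the lexicographic packaging, is legitimising the termination measure despite $\prec_{\Gamma^*}$ changing whenever Skip is applied.
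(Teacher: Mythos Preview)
Your proposal is broadly correct and follows the same outline as the paper---show that the conflict-resolution rules keep decreasing the conflict closure until Backtrack becomes applicable---but packages termination more cleanly via the lexicographic measure $(|\Gamma|,D)$, and replaces the paper's direct level-based argument for the final step (the paper argues that the smallest $\beta$-false literal of any level $l$ is $\comp(L)$ for the decision literal $L$ of that level, so one must eventually reach a clause $\comp(L)\lor C$ with $C$ of smaller level) by a contraposition against Lemma~\ref{form_of_stuck_states}. Both routes work; yours is shorter here because Lemma~\ref{form_of_stuck_states} has already absorbed the hard case analysis.

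There is one genuine gap in your sketch. In the case analysis showing that Explore-Refutation applies when Skip is blocked, you claim that the final refutation step $I_m$ always yields a clause strictly $\prec_{\Gamma^*}$-below the conflict clause because the collected justification literals are all $\prec_{\Gamma^*}$-below $(s\,\#\,t)\sigma$. This fails precisely when the top trail literal $L$ is a \emph{decision} literal and $(s\,\#\,t)\sigma=\comp(L)$: the justification of $L$ is the tautology $L\lor\comp(L)$, so rewriting contributes $\comp(L)=(s\,\#\,t)\sigma$ itself, and no step of the refutation is strictly smaller. The paper treats this case explicitly (inside the proof of Lemma~\ref{form_of_stuck_states}): here Explore-Refutation is \emph{not} applicable, but Backtrack is, because $\comp(L)$ being strictly $\prec_{\Gamma^*}$-maximal forces every other literal of the conflict clause to have level strictly below the level of $L$ (by items~4 and~6 of Definition~\ref{appr_4:trail_ind_order}), so the Backtrack side condition and regular-run condition~5 are met. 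Your overall conclusion survives because you invoke Lemma~\ref{form_of_stuck_states} at the end anyway, but your intermediate claim ``one of Factorize, Equality-Resolution or Explore-Refutation applies'' is false in this boundary case and should be amended to ``one of these three or Backtrack applies''. A minor additional point: Lemma~\ref{lem:lvlfix} is stated for trail \emph{extensions}, not prefixes; the stability of defining literals under Skip that you need follows instead from the observation that Skip only removes the $\prec_\Gamma$-largest literal, which cannot lie in any defining core of a literal still defined in the shorter trail.
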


\begin{corollary}[Satisfiable Clause Sets] \label{coro:soundcomp:satclausesets}
  Let $N$ be a satisfiable clause set. Then any regular run without rule Grow will end in a stuck state, for any $\beta$.
\end{corollary}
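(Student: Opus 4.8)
The plan is to terminate the run and then inspect its final state. First I would invoke termination of regular runs (Lemma~\ref{lem:soundcomp:termination}): since rule Grow is excluded, $\beta$ stays fixed, and by Lemma~\ref{appr_4:non-red} there are only finitely many clauses non-redundant with respect to a fixed $\beta$, so the regular run is finite and ends in a state $(\Gamma;N;U;\beta;k;D)$ on which no rule of \SCLEQ\ is applicable. By Theorem~\ref{appr_4:scleq_sound} this state is sound. It then suffices to prove $D=\top$, since by Definition~\ref{appr_4:def:stuck-state} a state with $D\neq\bot$ on which no rule except Grow applies is exactly a stuck state.

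To rule out $D\notin\{\top,\bot\}$, I would apply Lemma~\ref{backtrack_finally_applicable} to the final state: being sound and obtained by a regular run, it admits at least one of Explore-Refutation, Skip, Factorize, Equality-Resolution, Backtrack (either Backtrack is already applicable, or one of the other conflict-resolution rules is, after which Backtrack eventually becomes applicable), so the state would not be final --- contradiction. To rule out $D=\bot$, I would use satisfiability of $N$. The value $\bot$ can only have been produced by Conflict from a clause $D'\in N\cup U$ and a grounding $\sigma$ with $D'\sigma$ $\beta$-false in $\Gamma$ and of level $0$; hence every trail literal needed to refute $D'\sigma$ lies in the level-$0$ prefix $\Gamma_0$ of $\Gamma$, which by regularity (Propagate has precedence over Decide while $k=0$) consists solely of propagated literals. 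Using soundness condition~\ref{appr_4:sound_2} and an induction along $\Gamma_0$ --- each propagation's justification clause is entailed by $N\cup U$, its remaining literals are already false in the entailed prefix, and any trivially false literal $s\not\approx s$ in a justification is unsatisfiable and may be dropped --- one obtains $N\cup U\models L$ for every $L\in\Gamma_0$, hence $N\cup U\models comp(D'\sigma)$. As $D'\in N\cup U$ also gives $N\cup U\models D'\sigma$, the set $N\cup U$ is unsatisfiable, and since $N\models U$ (soundness condition~\ref{appr_4:sound_4}) so is $N$, contradicting the hypothesis. Hence $D=\top$ and the final state is stuck.

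The appeals to Lemma~\ref{lem:soundcomp:termination}, Theorem~\ref{appr_4:scleq_sound} and Lemma~\ref{backtrack_finally_applicable} are routine; I expect the $D=\bot$ case to be the only delicate point, since it requires turning a level-$0$ conflict into a genuine logical refutation of $N\cup U$. The care there is in the equational reading of ``$\beta$-false'': one argues through $\conv(\Gamma_0)$ and must account for the trivially false literals that propagation justifications may carry, so that the level-$0$ trail can be shown to be entailed by $N\cup U$ literal by literal.
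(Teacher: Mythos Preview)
Your argument is correct. The paper does not spell out a proof for this corollary; the intended one-line argument combines Lemma~\ref{lem:soundcomp:termination} (the run terminates in a state with $D=\top$ and no rule applicable, or $D=\bot$) with Lemma~\ref{lem:soundcomp:unsat} (reaching $\bot$ forces $N$ unsatisfiable), so satisfiability of $N$ leaves only the stuck case.

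Your route agrees on the skeleton but expands two points where a citation would suffice. For $D\notin\{\top,\bot\}$ you invoke Lemma~\ref{backtrack_finally_applicable}; the more direct device is Lemma~\ref{form_of_stuck_states}, which already says that a stuck state (by definition any terminal state with $D\neq\bot$) must have $D=\top$. For $D=\bot$ you reconstruct, via the level-$0$ prefix of the trail, that $N\cup U\models\bot$; this is precisely the level-zero branch of the soundness proof of rule Conflict (Theorem~\ref{appr_4:scleq_sound}) and is packaged as Lemma~\ref{lem:soundcomp:unsat}. So your longer treatment is not wrong, just redundant with results the paper already provides. One small wording issue: your sentence ``by Lemma~\ref{appr_4:non-red} there are only finitely many clauses non-redundant with respect to a fixed $\beta$'' misattributes the finiteness; Lemma~\ref{appr_4:non-red} gives non-redundancy of learned clauses, while the finiteness of candidate clauses below $\beta$ is the separate counting argument inside the proof of Lemma~\ref{lem:soundcomp:termination}. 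Since you are citing Lemma~\ref{lem:soundcomp:termination} anyway, you can simply drop that clause.
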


Thus a stuck state can be seen as an indication for a satisfiable clause set. Of course, it remains to be investigated whether the clause set
is actually satisfiable. Superposition is one of the strongest approaches to detect satisfiability and constitutes a decision procedure
for many decidable first-order fragments~\cite{BachmairGanzingerEtAl93a,GanzingerNivelle99}.
Now given a stuck state and some specific ordering such as KBO, LPO, or some polynomial ordering~\cite{DershowitzPlaisted01handbook},
it is decidable whether the ordering can be instantiated from a stuck state such that $\Gamma$ coincides with the superposition model operator
on the ground terms smaller than $\beta$. In this case it can be effectively checked whether the clauses derived so far are actually saturated
by the superposition calculus with respect to this specific ordering. In this sense, \SCLEQ\ has the same power to decide satisfiability of
first-order clause sets than superposition.

\begin{definition}
 A regular run terminates in a state $(\Gamma; N;U;\beta;k; D)$ if $D=\top$ and no rule is applicable, or $D = \bot$.
\end{definition}

\begin{lemma} \label{lem:soundcomp:termination}
 Let $N$ be a set of clauses and $\beta$ be a ground term. Then any regular run that never uses Grow terminates.
\end{lemma}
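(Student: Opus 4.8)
The plan is to show that an infinite regular run without Grow would have to apply Backtrack infinitely often, and then to contradict this using the non‑redundant clause learning of Lemma~\ref{appr_4:non-red} together with a finiteness argument for the fixed bound $\beta$. First I would use that $\beta$ is constant along the whole run, since Grow is excluded and every other rule leaves $\beta$ untouched. By the defining property of a desired term ordering, only finitely many ground terms are $\prec_T\beta$, hence there are only finitely many ground literals $L\prec_T\beta$, only a finite set $G$ of ground clauses all of whose literals are $\prec_T\beta$, and only finitely many admissible trails over those literals. Moreover every literal added by Propagate or Decide is $\beta$-undefined in, hence not defined by, the current trail, so the literals on the trail are pairwise distinct; thus $|\Gamma|$ and the decision level $k$ stay bounded throughout the run.

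Next I would analyse the shape of a run. Outside conflict resolution only Propagate, Decide, Conflict (and Restart) can fire, and Propagate/Decide strictly extend a bounded trail, so any maximal block of such steps is finite; once Conflict has fired, only Skip, Explore-Refutation, Factorize and Equality-Resolution apply, and by Lemma~\ref{backtrack_finally_applicable} such a block is finite and terminates with Backtrack becoming applicable (or with $D=\bot$, which ends the run). Restart, by the regularity convention, is performed only immediately after a simplification step, which strictly shrinks $N\cup U$ in a well-founded measure, so only finitely many Restarts occur; hence if Backtrack is never applied again the run ends, either at $\bot$ or in a stuck state (Lemma~\ref{form_of_stuck_states}). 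Consequently an infinite regular run without Grow must contain infinitely many applications of Backtrack.

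The core step is to bound the number of Backtracks by $|G|$ using the finite, non-negative quantity $\Phi:=|\,G\setminus\gnd(N\cup U)\,|$. Each Backtrack adds to $U$ a clause $C:=D\lor L$ together with a grounding $\tau$ for which $C\tau$ is $\beta$-false in the current trail, so $C\tau\in G$. By Lemma~\ref{appr_4:non-red} — and, concretely, by its proof, where non-redundancy is witnessed precisely by this $\beta$-false, $\prec_{\Gamma^*}$-small instance — $C\tau$ is non-redundant with respect to $\gnd(N\cup U)$ and the current $\prec_{\Gamma^*}$; in particular, by Definition~\ref{appr_4:prelim:def:redundancy}, $C\tau\notin\gnd(N\cup U)$ before the Backtrack, since otherwise $C\tau$ would be entailed by itself, an $\preceq_{\Gamma^*}$-smaller-or-equal member of $\gnd(N\cup U)$. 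After the Backtrack $C\tau\in\gnd(C)\subseteq\gnd(N\cup U)$, and among the rules of \SCLEQ\ only Backtrack changes $U$ and only by adding clauses, so $\gnd(N\cup U)$ never shrinks; therefore $\Phi$ strictly decreases at every Backtrack. Since $0\le\Phi\le|G|$, only finitely many Backtracks can occur, contradicting the previous paragraph, so the run is finite.

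The main obstacle I foresee is the interface with Lemma~\ref{appr_4:non-red}: one must make sure that the instance of the learned clause that the non-redundancy argument singles out is exactly the $\beta$-false one, so that it lies in the fixed finite set $G$; and one must check that the dynamic change of $\prec_{\Gamma^*}$ between steps does not destroy the monotonicity of $\Phi$ — which it does not, precisely because membership in $\gnd(N\cup U)$, unlike redundancy in general, does not depend on the ordering and is monotone in $U$. A secondary, more routine point is to confirm that the optional simplification and Restart steps cannot by themselves generate an infinite run, which follows from the well-founded measure on $N\cup U$ used above.
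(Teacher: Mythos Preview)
Your overall architecture is the paper's: bound the number of Backtracks by a finiteness argument on ground clauses below $\beta$, relying on Lemma~\ref{appr_4:non-red}, and observe that the remaining rules can only fill finite segments between Backtracks. The block analysis via Lemma~\ref{backtrack_finally_applicable} and the potential $\Phi=|G\setminus\gnd(N\cup U)|$ are more explicit than the paper's two-line argument, but they encode the same idea.

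There is, however, one genuine gap. You claim that ``there are only finitely many ground literals $L\prec_T\beta$, only a finite set $G$ of ground clauses all of whose literals are $\prec_T\beta$''. In this calculus clauses are multisets of literals (this is why Factorize exists and why $\prec_{\Gamma^*}$ is lifted to clauses via the multiset extension), so $G$ as you define it is infinite: $\{a\approx b\}$, $\{a\approx b,a\approx b\}$, $\ldots$ are all distinct members. With $G$ infinite, $\Phi$ is infinite and your descent argument collapses. The fix is exactly what the paper's proof supplies and what your sketch omits: by condition~1 of Definition~\ref{appr_4:def:regular-run}, Factorize has precedence over all other rules, so at the moment Backtrack fires on $(D\lor L)\cdot\sigma$ the ground instance $(D\lor L)\sigma$ contains no repeated literal (otherwise Factorize would apply instead). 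Hence every learned ground instance lies in the \emph{finite} set $G_0$ of duplicate-free ground clauses with all literals $\prec_T\beta$, and you should run your potential argument with $G_0$ in place of $G$. Once you insert this step, your use of the proof of Lemma~\ref{appr_4:non-red} to conclude $C\sigma\notin\gnd(N\cup U)$ (since $C\sigma\in\gnd(N\cup U)$ would make $C\sigma$ trivially redundant via $\{C\sigma\}\models C\sigma$) goes through, and the rest of your argument is sound.

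A minor side remark: the Restart rule is not part of the core calculus whose termination is being proved here; the paper introduces it only as an optional addendum tied to external simplification, and its own termination proof does not mention it. Your hand-wave about a well-founded measure on $N\cup U$ is not needed for Lemma~\ref{lem:soundcomp:termination} as stated.
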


\begin{lemma}
  \label{lem:soundcomp:unsat}
 If a regular run reaches the state $(\Gamma;N;U;\beta;k;\bot)$ then $N$ is unsatisfiable.
\end{lemma}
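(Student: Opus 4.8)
We must show that if a regular run reaches a state $(\Gamma;N;U;\beta;k;\bot)$ then $N$ is unsatisfiable.

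\textbf{Proof plan.} The plan is to trace how the status component $D$ can have become $\bot$ and then invoke soundness of the state immediately preceding that step, which holds by Theorem~\ref{appr_4:scleq_sound}. Inspecting all rules, the only ways to reach a state whose status is $\bot$ (equivalently, the empty clause) are: (i) an application of $\mathit{Conflict}$ whose chosen clause $D'\in N\cup U$ satisfies that $D'\sigma$ is $\beta\mhyphen\mathit{false}$ in $\Gamma$ and of level~$0$; or (ii) an application of $\mathit{Equality\mhyphen Resolution}$ that reduces a conflict closure $(s\not\approx s')\cdotp\sigma$ with $s\sigma=s'\sigma$ to the empty clause --- the rules $\mathit{Factorize}$, $\mathit{Explore\mhyphen Refutation}$, $\mathit{Skip}$ and $\mathit{Backtrack}$ always keep at least one literal in the conflict clause and hence cannot create it. In either case the state just before the step is sound, so in particular $N\models U$ (Definition~\ref{appr_4:soundness}, item~\ref{appr_4:sound_4}); it therefore suffices to show that $N\cup U$ is unsatisfiable, since every model of $N$ is then also a model of $N\cup U$.

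First I would establish the following auxiliary fact: \emph{in any sound state, every model of $N\cup U$ satisfies every annotated literal on the level-$0$ prefix $\Gamma_0$ of $\Gamma$.} Since levels are non-decreasing along the trail, the level-$0$ literals do form a prefix, and because $\mathit{Decide}$ always produces a literal of level at least $1$, each of them is a propagated literal. Arguing by induction along $\Gamma_0$: for a propagated literal with justification $(C\lor L)\cdotp\sigma$, Definition~\ref{appr_4:soundness}, item~\ref{appr_4:sound_2}, gives that $C\sigma$ is $\beta\mhyphen\mathit{false}$ in the preceding prefix and that $N\cup U\models C\lor L$; a model $\mathcal{A}$ of $N\cup U$ satisfies that prefix by the induction hypothesis, hence falsifies every literal of $C\sigma$, hence $\mathcal{A}\not\models C\sigma$, hence $\mathcal{A}\models L\sigma$.

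For case~(i): as $D'\sigma$ is $\beta\mhyphen\mathit{false}$ in $\Gamma$, every literal $K$ of $D'\sigma$ satisfies $\Gamma\models\comp(K)$, so by consistency of $\Gamma$ no such $K$ lies on $\Gamma$. Since moreover $D'\sigma$ is of level~$0$, each $K$ either has no defining core (and is thus trivially false, so $\comp(K)$ is valid) or has a defining core whose defining literal has level~$0$; as levels are non-decreasing along the trail, that defining core lies entirely inside $\Gamma_0$, and with $\Gamma\models\comp(K)$ and consistency we obtain $\Gamma_0\models\comp(K)$. Hence a model $\mathcal{A}$ of $N\cup U$, which satisfies $\Gamma_0$ by the auxiliary fact, falsifies every literal of $D'\sigma$ and thus $D'\sigma$ itself; but $D'\in N\cup U$ forces $\mathcal{A}\models D'\sigma$ --- a contradiction, so $N\cup U$ and hence $N$ is unsatisfiable. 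For case~(ii): applying Definition~\ref{appr_4:soundness}, item~\ref{appr_4:sound_5}, to the state before the $\mathit{Equality\mhyphen Resolution}$ step yields $N\cup U\models s\not\approx s'$, and since $\mgu(s,s')$ makes the two sides syntactically equal, any ground instance of this entailed clause has the form $t\not\approx t$, which no model satisfies; hence again $N\cup U$, and so $N$, is unsatisfiable.

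The step I expect to be the main obstacle is case~(i): one must argue carefully from the bookkeeping of defining cores, defining literals and literal levels that a $\beta\mhyphen\mathit{false}$ clause of level~$0$ is already falsified by the level-$0$ fragment of the trail --- which, in contrast to the full trail, is semantically forced by $N\cup U$. The remaining pieces are a routine induction and two short semantic arguments. If one adopts the convention that the status $\bot$ \emph{is} the empty closure, the argument collapses to a one-liner: Theorem~\ref{appr_4:scleq_sound} makes the final state sound, so Definition~\ref{appr_4:soundness}, item~\ref{appr_4:sound_5}, states that $N\cup U$ entails the empty clause, and together with item~\ref{appr_4:sound_4} this gives $N$ unsatisfiable directly.
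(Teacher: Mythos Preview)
Your proposal is correct, and the one-liner you sketch in your final paragraph is essentially what the paper does. The paper's proof is a two-line appeal to soundness (Theorem~\ref{appr_4:scleq_sound}): the heavy lifting has already been done inside the proof of that theorem, namely in the $\mathit{Conflict}$ case where the conflict instance is of level~$0$. There the paper shows directly that every level-$0$ trail literal is entailed by $N$, that every literal of the level-$0$ conflict instance has a core consisting only of such literals, and hence that $N\cup U\models\bot$; Lemma~\ref{lem:soundcomp:unsat} then just quotes soundness and $N\models U$.

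Your detailed argument is thus a self-contained re-derivation of exactly the piece the paper tucked into Theorem~\ref{appr_4:scleq_sound}: your ``auxiliary fact'' about the level-$0$ prefix being forced by $N\cup U$, together with the defining-core reasoning in case~(i), is precisely that argument. Two minor remarks. First, in the paper's formalism the symbol $\bot$ is set \emph{only} by $\mathit{Conflict}$; $\mathit{Equality\mhyphen Resolution}$ produces a closure, not the distinguished status $\bot$, so your case~(ii) does not arise literally (and even on the level of closures it is excluded by Lemma~\ref{confl_res_at_least_lvl_1}, since a unit $s\not\approx s'$ with $s\sigma=s'\sigma$ would have level~$0$). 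Treating it does no harm, of course. Second, the paper's terse proof cites items~\ref{appr_4:sound_5} and~\ref{appr_4:sound_1} of Definition~\ref{appr_4:soundness}, which is somewhat opaque; your reconstruction via item~\ref{appr_4:sound_4} ($N\models U$) together with the level-$0$ argument is the cleaner way to spell out the same reasoning.
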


\begin{theorem}[Refutational Completeness]
  \label{theo:soundcomp:refutcomp}
  Let $N$ be an unsatisfiable clause set, and $\prec_T$ a desired term ordering.
  For any ground term $\beta$ where $gnd_{\prec_T\beta}(N)$ is unsatisfiable, any regular $\SCLEQ$ run
  without rule Grow will terminate by deriving $\bot$.
\end{theorem}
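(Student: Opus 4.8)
The plan is to combine termination of Grow-free regular runs with the characterisation of stuck states, and then to exclude the stuck alternative using the hypothesis on $\beta$.

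First I would apply Lemma~\ref{lem:soundcomp:termination}: since the run is regular and never uses Grow, it terminates in some state $(\Gamma;N;U;\beta;k;D)$. By the definition of a terminating state --- using Lemma~\ref{backtrack_finally_applicable} to rule out getting blocked in the middle of conflict resolution --- this terminal state satisfies either $D=\bot$, in which case we are done, or $D=\top$ with no rule except Grow applicable, i.e.\ it is stuck in the sense of Definition~\ref{appr_4:def:stuck-state}. So it suffices to show that, when $gnd_{\prec_T\beta}(N)$ is unsatisfiable, the terminal state cannot be stuck.

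Assume towards a contradiction that the run ends in a stuck state $(\Gamma;N;U;\beta;k;\top)$. By Lemma~\ref{form_of_stuck_states} every ground literal $L\sigma\prec_T\beta$ with $L\lor C\in N\cup U$ is $\beta$-defined in $\Gamma$; hence for every $C\sigma\in gnd_{\prec_T\beta}(N)$ all of its literals are (genuinely) defined in $\Gamma$. The central step is to show that no such $C\sigma$ is $\beta$-false in $\Gamma$: if one were, I would first normalise $\sigma$ by $\conv(\Gamma)$ to obtain an irreducible grounding $\sigma'$; since truth in $\Gamma$ is invariant under rewriting by $\conv(\Gamma)$ (as $\Gamma$ entails those equations) and $\conv(\Gamma)$-rewriting is $\prec_T$-decreasing, $C\sigma'$ is still $\beta$-false in $\Gamma$ and still lies in $gnd_{\prec_T\beta}(N)$, so rule Conflict would be applicable --- contradicting stuckness. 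Therefore every $C\sigma\in gnd_{\prec_T\beta}(N)$ has a literal that is true in $\Gamma$. Since $\Gamma$ is a consistent sequence of ground literals (it is a trail), it has a model $\mathcal{A}\models\Gamma$; for each $C\sigma$, picking a literal $L$ true in $\Gamma$ gives $\Gamma\models L$, hence $\mathcal{A}\models L$ and $\mathcal{A}\models C\sigma$. Thus $\mathcal{A}\models gnd_{\prec_T\beta}(N)$, contradicting unsatisfiability of $gnd_{\prec_T\beta}(N)$. Consequently $D=\bot$, i.e.\ the run terminates by deriving $\bot$; that this is consistent with $N$ being unsatisfiable is Lemma~\ref{lem:soundcomp:unsat}.

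I expect the main obstacle to be the central step: spelling out that stuckness forbids Conflict on \emph{every} small ground instance, which forces one to deal with the irreducibility side condition of Conflict (replacing a grounding by its $\conv(\Gamma)$-normal form while staying inside $gnd_{\prec_T\beta}(N)$ and preserving $\beta$-falsity) and to check that ``$\beta$-defined and not $\beta$-false'' genuinely forces some literal to be $\beta$-true --- i.e.\ that all literals of the ground instances in $gnd_{\prec_T\beta}(N)$ are themselves $\prec_T\beta$. Most of this bookkeeping already appears in the proofs of Lemma~\ref{form_of_stuck_states} and Corollary~\ref{coro:soundcomp:satclausesets}, so the work here is largely an assembly of those facts together with the observation that unsatisfiability of $gnd_{\prec_T\beta}(N)$ is precisely what eliminates the stuck-state alternative.
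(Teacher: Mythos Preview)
Your proposal is correct and follows essentially the same line as the paper's proof: invoke termination, argue that a stuck state would force every clause in $\gnd_{\prec_T\beta}(N)$ to be $\beta$-true in $\Gamma$, and derive a contradiction with unsatisfiability. The paper packages the ``no false clause / all literals defined'' step into the auxiliary Lemma~\ref{always_rule_applicable_D=top} (proved via the same $\conv(\Gamma)$-normalisation of the grounding that you spell out inline), whereas you reach it through Lemma~\ref{form_of_stuck_states} plus the explicit normalisation argument; and the paper stops at $\Gamma\models\gnd_{\prec_T\beta}(N)$ while you additionally pull out a model $\mathcal{A}$ of $\Gamma$---both are fine. Your appeal to Lemma~\ref{backtrack_finally_applicable} is harmless but redundant: the paper's definition of ``terminates'' already forces $D\in\{\top,\bot\}$, so Lemma~\ref{lem:soundcomp:termination} alone gives you that dichotomy.
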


\section{Discussion} \label{conclusion}

We presented SCL(EQ), a new sound and complete calculus for reasoning in first-order logic with equality.
We will now discuss some of its aspects and present ideas for future work beyond the scope of this paper.

The trail induced ordering, Definition~\ref{appr_4:trail_ind_order}, is the result of letting the calculus
follow the logical structure of the clause set on the literal level and at the same time  supporting rewriting at the term
level. It can already be seen by examples on ground clauses over (in)equations over constants that this combination requires
a layered approach as suggested by Definition~\ref{appr_4:trail_ind_order}, see Example~\ref{exa:propsmeq} from the Appendix.

In case the calculus runs into a stuck state, i.e., the current trail is a model for the set of considered
ground instances, then the trail information can be effectively used for a guided continuation. For example,
in order to use the trail to certify a model, the trail literals can be used to guide the design of a lifted ordering
for the clauses with variables such that propagated trail literals are maximal in respective clauses.
Then it could be checked by superposition, if the current clause is saturated by such an ordering.
If this is not the case, then there must be a superposition inference larger than the
current $\beta$, thus giving a hint on how to extend $\beta$.
Another possibility is to try to extend the finite set of ground terms considered in a stuck state
to the infinite set of all ground terms by building extended equivalence classes following patterns
that ensure decidability of clause testing, similar to the ideas in~\cite{BrombergerFW21}.
If this fails, then again this information can be used to find an appropriate extension term $\beta$ for rule Grow.

In contrast to superposition, \SCLEQ\ does also inferences below variable level. Inferences in \SCLEQ\ are guided by
a false clause with respect to a partial model assumption represented by the trail. Due to this guidance and the
different style of reasoning this does not result in an explosion
in the number of possibly inferred clauses but also rather in the derivation of more general clauses, see Example~\ref{exa:rewvarlev} from the Appendix.

Currently, the reasoning with solely positive equations is done on and with respect to the trail.
It is well-known that also inferences from this type of reasoning can be used to speed up the overall
reasoning process. The SCL(EQ) calculus already provides all information for such a type of reasoning,
because it computes the justification clauses for trail reasoning via rewriting inferences. By an assessment of the quality of these clauses,
e.g., their reduction potential with respect to trail literals, they could also be added, independently from resolving a conflict.

The trail reasoning is currently defined with respect to rewriting. It could
also be performed by congruence closure~\cite{NelsonOppen80}.

Towards an implementation, the aspect of how to find
interesting ground decision or propagation literals for the trail can be treated similar to CDCL~\cite{MSS96,BayardoSchrag96,MoskewiczMadiganZhaoZhangMalik01,BiereEtAl09handbook}. A simple heuristic
may be used from the start, like counting the number of instance relationships of some ground literal
with respect to the clause set, but later on a bonus system can focus the search towards the structure
of the clause sets. Ground literals involved in a conflict or the process of learning a new clause
get a bonus or preference. The regular strategy requires the propagation of all ground unit clauses smaller
than $\beta$. For an implementation a propagation of the (explicit and implicit) unit clauses with variables to the trail will be
a better choice. This complicates the implementation of refutation proofs and rewriting (congruence closure),
but because every reasoning is layered by a ground term $\beta$ this can still be efficiently done.

\paragraph{Acknowledgments:}{%
This work was partly funded by DFG grant 389792660 as part of
TRR~248, see \url{https://perspicuous-computing.science}.
We thank the anonymous reviewers and Martin Desharnais for their thorough reading,
detailed comments, and corrections.
}

%
%
%

%

\newpage

\section*{Appendix}
\subsection{Proofs and Auxiliary Lemmas}

\subsubsection{Proof of Lemma~\ref{lem:lvlfix}}
Let $\Gamma_1$ be a trail and $K$ a defined literal that is of level $i$ in $\Gamma_1$. Then $K$ is of level $i$ in a trail $\Gamma:=\Gamma_1,\Gamma_2$.
\begin{proof}
  Assume a trail $\Gamma_1$ and a literal $K$ that is of level $i$ in $\Gamma_1$. Let $\Gamma:=\Gamma_1,\Gamma_2$ be a trail. Then we have two cases:
  \begin{enumerate}
    \item $K$ has no defining literal in $\Gamma_1$. Then $cores(\Gamma_1;K) = \{[]\}$ contains only the empty core and $K$ is of level $0$ in $\Gamma_1$. Then $cores(\Gamma;K) = \{[]\}$ as well and thus $K$ is of level $0$ in $\Gamma$.
    \item $K$ has a defining literal $L:=max_{\Gamma_1}(K)$ and $L$ is of level $i$. Then there exists a core $\Delta\in cores(\Gamma_1;K)$ such that $L$ is the maximum literal in $\Delta$ according to $\prec_{\Gamma}$ and for all $\Lambda\in cores(\Gamma_1;K)$ it holds $max_{\prec_{\Gamma}}(\Delta)\preceq_{\Gamma}max_{\prec_{\Gamma}}(\Lambda)$.
    Thus $\Delta$ is a defining core. Now any $\Lambda\in (cores(\Gamma;K)\setminus cores(\Gamma_1;K))$ has a higher maximum literal according to $\prec_\Gamma$. Thus $\Delta$ is also a defining core in $\Gamma$ and $L$ is the defining literal of $K$ in $\Gamma$ and thus $K$ is of level $i$ in $\Gamma$.
  \end{enumerate}
\end{proof}

\subsubsection{Auxiliary Lemmas for the Proofs of Lemma~\ref{gammastar:properties}}
\begin{lemma}
  \label{gammastar:properties:proofs:uniqueness}
Let $\Gamma$ be a trail. Then any literal in $\Gamma$ occurs exactly once.
\end{lemma}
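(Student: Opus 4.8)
**Plan for proving Lemma~\ref{gammastar:properties:proofs:uniqueness} (every literal in a trail $\Gamma$ occurs exactly once).**

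The plan is to unfold the definition of a trail and exploit the invariant built into it. Recall that a trail $\Gamma = [L_1^{i_1:C_1\cdotp\sigma_1},\ldots,L_n^{i_n:C_n\cdotp\sigma_n}]$ is required to be a \emph{consistent} sequence such that, for each index $j$, the literal $L_j$ is undefined in the prefix $[L_1,\ldots,L_{j-1}]$ and irreducible by $\conv(\{L_1,\ldots,L_{j-1}\})$. So the argument is essentially: if a literal $L$ appeared twice, say $L = L_a = L_b$ with $a < b$, then $L_b$ would already be \emph{defined} in the prefix $[L_1,\ldots,L_{b-1}]$ — indeed $L_a$ is literally a member of that prefix, and a literal trivially entails itself, so $[L_1,\ldots,L_{b-1}] \models L_b$. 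This contradicts the trail condition that $L_b$ is undefined in $[L_1,\ldots,L_{b-1}]$. Hence no literal can occur twice.

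The key steps, in order, are: (i) suppose for contradiction that some literal occurs at two positions $a < b$ in $\Gamma$; (ii) observe that the prefix $[L_1,\ldots,L_{b-1}]$ contains $L_a$; (iii) invoke the definition of ``defined'' — since $\{L_a\} \models L_a = L_b$, we get $[L_1,\ldots,L_{b-1}] \models L_b$, i.e.\ $L_b$ is defined (in fact true) in the prefix; (iv) note this contradicts the trail invariant that each $L_j$ is undefined in its own prefix; (v) conclude that each literal appears at most once, and since it appears in $\Gamma$ by hypothesis, exactly once. One should also make the trivial remark that we only need the ``undefined in the prefix'' half of the invariant here; the irreducibility half is not needed for this particular statement.

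I do not expect a genuine obstacle here — the lemma is essentially a restatement of part of the trail definition. The only point requiring a word of care is making sure that the notion of ``occurs'' is taken at face value as positional occurrence in the sequence, and that ``$L$ is defined in $\Gamma'$'' indeed covers the case where $L \in \Gamma'$ syntactically (which it does, via $\Gamma' \models L$). So the ``hard part,'' such as it is, is merely to phrase the contradiction cleanly against Definition of Trail; everything else is immediate.
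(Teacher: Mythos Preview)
Your proposal is correct and follows essentially the same approach as the paper: assume two equal literals at positions $a<b$, note that $L_b$ is then defined in the prefix $[L_1,\ldots,L_{b-1}]$ because $L_a$ already occurs there, and derive a contradiction with the trail invariant that each $L_j$ is undefined in its own prefix. If anything, your write-up is slightly more explicit than the paper's in spelling out why membership in the prefix yields definedness.
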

\begin{proof}
Let  $\Gamma :=[L_1^{i_1:C_1\cdotp \sigma_1},...,L_n^{i_n:C_n\cdotp \sigma_n}]$. Now suppose there exist $L_i, L_j$ with $i<j$ and $1\leq i,j\leq n$ such that $L_i=L_j$. By definition of $\Gamma$, $L_j$ is undefined in $[L_1,...,L_i,...,L_{j-1}]$. But obviously $L_j$ is defined in $\Gamma$. Contradiction.
\end{proof}

\begin{lemma}
  \label{gammastar:properties:proofs:litlvlunique}
Let $\Gamma$ be a trail. If a literal $L$ is of level $i$, then it is not of level $j\not= i$.
\end{lemma}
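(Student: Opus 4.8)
The plan is to reduce the claim to the observation that a defined ground literal $L$ has a uniquely determined \emph{defining literal} (or none at all), and that the level of $L$ is read off deterministically from that defining literal. I would split into two cases according to whether $L$ occurs in $\Gamma$ or not. If $L\in\Gamma$, then by Lemma~\ref{gammastar:properties:proofs:uniqueness} the literal $L$ occurs exactly once in $\Gamma$, hence it carries exactly one annotation $i$, and by the Literal Level definition its level is precisely that $i$; there is no competing value. If $L\notin\Gamma$ but $L$ is defined in $\Gamma$, then I would invoke the remark following the Defining Core/Defining Literal definition: although there may be several defining cores, there is \emph{only one} defining literal $\max_\Gamma(L)$, because it is characterized as $\max_{\prec_\Gamma}(\Delta)$ for any core $\Delta$ minimizing this maximum, and all such minimizing cores share the same $\prec_\Gamma$-maximum by definition of "minimum". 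Since the level of $L$ is defined to be the level of $\max_\Gamma(L)$, and $\max_\Gamma(L)$ is a literal occurring in $\Gamma$ whose level is unique by the first case, the level of $L$ is uniquely determined. Finally, if $L$ has no defining core (only the empty core), then by definition $L$ is of level $0$ and of no other level.

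The one genuinely load-bearing point is the uniqueness of the defining literal even in the presence of multiple defining cores; I would make this explicit by arguing that if $\Delta_1,\Delta_2\in\cores(\Gamma;L)$ both satisfy $\max_{\prec_\Gamma}(\Delta)\preceq_\Gamma\max_{\prec_\Gamma}(\Lambda)$ for all $\Lambda\in\cores(\Gamma;L)$, then in particular $\max_{\prec_\Gamma}(\Delta_1)\preceq_\Gamma\max_{\prec_\Gamma}(\Delta_2)$ and $\max_{\prec_\Gamma}(\Delta_2)\preceq_\Gamma\max_{\prec_\Gamma}(\Delta_1)$, so by antisymmetry of $\preceq_\Gamma$ on the finitely many literals of $\Gamma$ (the trail ordering $\prec_\Gamma$ is the strict sequence order, hence a strict total order on the elements of $\Gamma$, whose reflexive closure is antisymmetric) the two maxima coincide. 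Thus $\max_\Gamma(L)$ is well defined, and the level of $L$ inherits the uniqueness already established for literals occurring in $\Gamma$.

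I expect no serious obstacle here; the only care needed is to phrase the case split cleanly (namely: (i) $L\in\Gamma$, (ii) $L\notin\Gamma$ with a defining literal, (iii) $L$ has only the empty core) and to not conflate "defining core" with "defining literal". The argument is essentially bookkeeping over the definitions of Literal Level and Defining Literal, plus Lemmas~\ref{gammastar:properties:proofs:uniqueness} and the antisymmetry of the trail order.
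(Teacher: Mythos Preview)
Your proposal is correct and follows essentially the same approach as the paper: both arguments hinge on the uniqueness of trail literals (Lemma~\ref{gammastar:properties:proofs:uniqueness}) together with the observation that any two defining cores $\Delta_1,\Delta_2$ minimizing the $\prec_\Gamma$-maximum must have $\max_{\prec_\Gamma}(\Delta_1)=\max_{\prec_\Gamma}(\Delta_2)$ by antisymmetry, so the defining literal is unique. Your explicit three-way case split ($L\in\Gamma$, $L\notin\Gamma$ with a defining literal, $L$ with only the empty core) mirrors the structure of the Literal Level definition more faithfully than the paper's somewhat terser two-case presentation, but the content is the same.
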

\begin{proof}
Let $\Gamma$ be a trail. By lemma~\ref{gammastar:properties:proofs:uniqueness} any literal in $\Gamma$ is unique.
Suppose there exists a literal $L$ such that $L$ is of level $i$ and of level $j$. If the core is empty for $L$ then $L$ is of level $0$ by definition.
Otherwise there must exist cores $\Delta,\Lambda\in cores(\Gamma;L)$ such that $max_{\prec_\Gamma}(\Delta)\preceq_{\Gamma} max_{\prec_\Gamma}(\Lambda')$ and $max_{\prec_\Gamma}(\Lambda)\preceq_{\Gamma} max_{\prec_\Gamma}(\Lambda')$ for all $\Lambda'\in cores(\Gamma;L)$. But then $max_{\prec_\Gamma}(\Lambda)=max_{\prec_\Gamma}(\Delta)$. Contradiction.
\end{proof}

\begin{lemma}
  \label{gammastar:properties:proofs:alldefhavelevel}
  Let $L$ be a ground literal and $\Gamma$ a trail. If $L$ is defined in $\Gamma$ then $L$ has a level.
\end{lemma}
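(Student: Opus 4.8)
The plan is to unwind the definition of \emph{literal level} and perform a direct case analysis on whether $L$ occurs in $\Gamma$. First I would note that since $L$ is defined in $\Gamma$ and $\Gamma$ is trivially a subsequence of itself in which $L$ is defined, the set $\cores(\Gamma;L)$ is non-empty; moreover $\Gamma$ is finite, so $\cores(\Gamma;L)$ is a finite set of finite subsequences of $\Gamma$, and $\prec_\Gamma$ is a total (sequence) order on the literals occurring in $\Gamma$. These facts make every "maximum" and "minimal maximum" used below well defined.

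The key steps are then the following. If $L\in\Gamma$, then by definition $L$ is annotated with some level $i$, so $L$ is of level $i$ and we are done. If $L\notin\Gamma$, I split according to the definition of defining core/literal: either $\cores(\Gamma;L)$ contains only the empty core, in which case $L$ has no defining literal and the definition of literal level stipulates directly that $L$ is of level $0$; or $\cores(\Gamma;L)$ contains a non-empty core, in which case I pick a core $\Delta$ whose $\prec_\Gamma$-maximum is $\preceq_\Gamma$-minimal among the (finitely many) cores, so that $K:=\max_\Gamma(L)=\max_{\prec_\Gamma}(\Delta)$ is the defining literal of $L$. Since $\Delta\subseteq\Gamma$ is non-empty, $K$ is a literal occurring in $\Gamma$ and is therefore annotated with some level $i$; hence $K$ is of level $i$, and by the definition of literal level $L$ is also of level $i$. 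In all cases $L$ has a level.

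The only point that needs care — and what I would flag as the (very mild) main obstacle — is justifying that the two subcases of the case $L\notin\Gamma$ are exhaustive and that the defining literal genuinely lies in $\Gamma$. The first holds because a core is by definition a \emph{minimal} defining subsequence, so if the empty subsequence already defines $L$ it must be the unique core, which is exactly the dichotomy "empty core only" versus "some non-empty core exists". The second holds because cores are defined as subsequences of $\Gamma$, so any literal in a core, in particular its $\prec_\Gamma$-maximum, is an annotated literal of $\Gamma$. No induction is required; the argument is a straightforward unfolding of the definitions together with finiteness and totality of $\prec_\Gamma$ on $\Gamma$.
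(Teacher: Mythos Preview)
Your proposal is correct and follows essentially the same approach as the paper: a case split on whether $L$ has a defining literal (level inherited from the annotated trail literal) or not (level $0$ by definition). You add an explicit preliminary case for $L\in\Gamma$ and spell out the finiteness/totality justifications for the existence of $\max_\Gamma(L)$, whereas the paper's proof is terser and leaves these routine points implicit; the underlying argument is the same unfolding of the definition of literal level.
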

\begin{proof}
  Let $\Gamma$ be a trail. Suppose that $L$ is defined in $\Gamma$. Then it either has a defining literal or it has no defining literal. If it has a defining literal $K$, then the level of $K$ is the level of $L$. Since $K\in\Gamma$ it is annotated by a level. Thus $L$ has a level. If $L$ does not have a defining literal, then $L$ is of level $0$ by definition of a literal level.
\end{proof}

\subsubsection{Proof of Lemma~\ref{gammastar:properties}-\ref{gammastar:properties:welldefined}}
$\prec_{\Gamma^*}$ is well-defined.
\begin{proof}
  \label{well-defined-proof}
Suppose a trail $\Gamma :=[L_1^{i_1:C_1\cdotp \sigma_1},...,L_n^{i_n:C_n\cdotp \sigma_n}]$ and a term $\beta$ such that $\{L_1,...,L_n\}\prec_T\beta$.
We have to show that the rules \ref{appr_4:trail_ind_order}.1-\ref{appr_4:trail_ind_order}.11 are pairwise disjunct.
Consider the rules \ref{appr_4:trail_ind_order}.1-\ref{appr_4:trail_ind_order}.7. These rules are pairwise disjunct, if the sets $\{L_1,...,L_n\}$, $\{comp(L_1),...,comp(L_n)\}$ and $\{M_{i,j}~|~i\leq n\}$ are pairwise disjunct.
Obviously, $\{L_1,...,L_n\} \cap \{comp(L_1),...,comp(L_n)\} = \emptyset$.
Furthermore $(\{L_1,...,L_n\}\cup$\\ $\{comp(L_1),...,comp(L_n)\})\cap \{M_{i,j}~|~i\leq n\} = \emptyset$ follows directly from the definition of a trail induced ordering.
\ref{appr_4:trail_ind_order}.8 and \ref{appr_4:trail_ind_order}.9 are disjunct since $\{L~|~L~is~of~level~0\}$ and $\{L~|~L~is~of~level~greater~0\}$ are disjunct by lemma~\ref{gammastar:properties:proofs:litlvlunique}.
It follows that \ref{appr_4:trail_ind_order}.1-\ref{appr_4:trail_ind_order}.9 are pairwise disjunct, since all relations in \ref{appr_4:trail_ind_order}.1-\ref{appr_4:trail_ind_order}.7 contain only $\beta\mhyphen \mathit{defined}$ literals of level 1 or higher
and all relations in \ref{appr_4:trail_ind_order}.8, \ref{appr_4:trail_ind_order}.9 contain at least one $\beta\mhyphen \mathit{defined}$ literal of level 0.
\ref{appr_4:trail_ind_order}.10 and \ref{appr_4:trail_ind_order}.11 are disjunct since a literal cannot be both $\beta\mhyphen \mathit{defined}$ and $\beta\mhyphen \mathit{undefined}$. It follows that \ref{appr_4:trail_ind_order}.1-\ref{appr_4:trail_ind_order}.11 are pairwise disjunct, since all relations in \ref{appr_4:trail_ind_order}.1-\ref{appr_4:trail_ind_order}.9 contain only $\beta\mhyphen \mathit{defined}$ literals
and all relations in \ref{appr_4:trail_ind_order}.10, \ref{appr_4:trail_ind_order}.11 contain at least one $\beta\mhyphen \mathit{undefined}$ literal.
\end{proof}

\subsubsection{Proof of Lemma~\ref{gammastar:properties}-\ref{gammastar:properties:strictorder}}
$\prec_{\Gamma^*}$ is a total strict order, i.e. $\prec_{\Gamma^*}$ is irreflexive, transitive and total.
\begin{proof}
  Suppose a trail $\Gamma :=[L_1^{i_1:C_1\cdotp \sigma_1},...,L_n^{i_n:C_n\cdotp \sigma_n}]$ and a term $\beta$ such that $\{L_1,...,L_n\}\prec_T\beta$.\\
  \emph{Irreflexivity}.   We have to show that there is no ground literal $L$ such that $L\prec_{\Gamma^*} L$. Suppose two literals $L$ and $K$ such that $L\prec_{\Gamma^*} K$ and $L=K$. Now we have several cases:
  \begin{enumerate}
    \item Suppose that $L,K$ are $\beta\mhyphen\mathit{defined}$ and of level $1$ or higher. Then we have several cases:
    \begin{enumerate}
      \item $L=M_{i,j}$ and $K=M_{k,l}$. Then by \ref{appr_4:trail_ind_order}.1 $M_{i,j}\prec_{\Gamma^*} M_{k,l}$ if $i<k$ or $(i=k$ and $j<l)$. Thus $i\not=k$ or $j\not=l$.
      We show that for $M_{i,j}$, $M_{k,l}$ with $i\not=k$ or $j\not=l$ it holds $M_{i,j}\not=M_{k,l}$. Assume  that $M_{i,j} = M_{k,l}$ and $k\not=i$ or $j\not= l$. Assume that $k=i$.
      Then, by definition \ref{appr_4:trail_ind_order} $M_{i,j} \prec_T M_{k,l}$ or $M_{k,l} \prec_T M_{i,j}$. Thus $M_{i,j} \not= M_{k,l}$ since $\prec_T$ is a rewrite ordering.
      Now assume that $k\not=i$. Since $M_{i,j} = M_{k,l}$ it holds $max_\Gamma(M_{i,j}) = max_\Gamma(M_{k,l})$, since both have the same level by lemma~\ref{gammastar:properties:proofs:litlvlunique}.
      But then $k=i$. Thus $M_{i,j} \not= M_{k,l}$ for $k\not=i$ or $j\not= l$. Thus if by \ref{appr_4:trail_ind_order}.1 $M_{i,j}\prec_{\Gamma^*} M_{k,l}$ if $i<k$ or ($i=k$ and $j<l$), then $M_{i,j}\not= M_{k,l}$.
      \item $L=L_i$ and $K=L_j$. Then by \ref{appr_4:trail_ind_order}.2 $L_i\prec_{\Gamma^*} L_j$ if $L_i\prec_\Gamma L_j$.
      Then by lemma~\ref{gammastar:properties:proofs:uniqueness} $L_i\not=L_j$.
      \item $L=comp(L_i)$ and $K=L_j$. Then by \ref{appr_4:trail_ind_order}.3 $comp(L_i)\prec_{\Gamma^*} L_j$ if $L_i\prec_\Gamma L_j$.
      $L_i\not=L_j$ by lemma~\ref{gammastar:properties:proofs:uniqueness}. $L\not= K$ has to hold since $\Gamma$ is consistent.
      \item $L=L_i$ and $K=comp(L_j)$. Then by \ref{appr_4:trail_ind_order}.4 $L_i\prec_{\Gamma^*} comp(L_j)$ if $L_i\prec_\Gamma L_j$ or $i=j$. If $i\not=j$ then we can proceed analogous to the previous step. If $i=j$ then obviously $L_i\not=comp(L_i)$.
      \item $L=comp(L_i)$ and $K=comp(L_j)$. Then by \ref{appr_4:trail_ind_order}.5 $comp(L_i)\prec_{\Gamma^*} comp(L_j)$ if $L_i\prec_\Gamma L_j$.
      By lemma~\ref{gammastar:properties:proofs:uniqueness} $L_i\not= L_j$. Thus $comp(L_i)\not= comp(L_j)$.
      \item $L=L_i$ and $K=M_{k,l}$. Then by \ref{appr_4:trail_ind_order}.6 $L_i\prec_{\Gamma^*} M_{k,l}$,  $\comp(L_i)\prec_{\Gamma^*} M_{k,l}$ if $i\leq k$.
      $M_{k,l}\not=L_i$ and $M_{k,l}\not=comp(L_i)$ follows directly from the definition \ref{appr_4:trail_ind_order}.
      Thus if $L_i\prec_{\Gamma^*} M_{k,l}$ or  $\comp(L_i)\prec_{\Gamma^*} M_{k,l}$ if $i\leq k$ by \ref{appr_4:trail_ind_order}.6, then $L_i\not= M_{k,l}$ and $\comp(L_i)\not= M_{k,l}$.
      \item $L=M_{k,l}$ and $K=L_i$. Then we can proceed analogous to the previous step for \ref{appr_4:trail_ind_order}.7.
    \end{enumerate}
      \item Suppose that $L$ and $K$ are $\beta\mhyphen\mathit{defined}$ and of level zero. Since $\prec_T$ is irreflexive, $L\not\prec_T K$ has to hold. Since $\prec_{\Gamma^*} = \prec_T$ for literals of level zero $L\not\prec_{\Gamma^*} K$ has to hold too.
      \item Suppose that $L,K$ are $\beta\mhyphen\mathit{defined}$ and $L$ is of level zero and $K$ is of level greater than zero. But then $L\not=K$ has to hold by lemma~\ref{gammastar:properties:proofs:litlvlunique}. Thus $L\not\prec_{\Gamma^*} K$ for \ref{appr_4:trail_ind_order}.9.
      \item Suppose that $L$ and $K$ are $\beta\mhyphen\mathit{undefined}$. Then by \ref{appr_4:trail_ind_order}.10 $K\prec_{\Gamma^*} H$ if $K\prec_T H$. Since $\prec_T$ is a rewrite ordering $K\prec_T H$ iff $K\not= H$.
      \item Suppose that $L$ is $\beta\mhyphen\mathit{defined}$ and $K$ is $\beta\mhyphen\mathit{undefined}$. Then by \ref{appr_4:trail_ind_order}.11 $L\prec_{\Gamma^*} K$. Then $L\not=K$ has to hold since otherwise $L,K$ would be both $\beta\mhyphen\mathit{defined}$ and $\beta\mhyphen\mathit{undefined}$, contradicting consistency of $\Gamma$.
    \end{enumerate}
  \emph{Transitivity}. Suppose there exist literals $L, K, H$ such that $H\prec_{\Gamma^*} K$ and $K \prec_{\Gamma^*} L$ but not $H\prec_{\Gamma^*} L$. We have several cases:
  \begin{enumerate}
    \item Suppose all literals are $\beta\mhyphen\mathit{undefined}$. Then $K \prec_T L$ and $H\prec_T K$. Otherwise $K \prec_{\Gamma^*} L$ and $H\prec_{\Gamma^*} K$ would not hold. Thus also $H\prec_T L$ by transitivity of $\prec_T$. Thus $H\prec_{\Gamma^*} L$ by \ref{appr_4:trail_ind_order}.10.
    \item Suppose two literals are $\beta\mhyphen\mathit{undefined}$. If $K$ would be $\beta\mhyphen\mathit{defined}$, then $K\prec_{\Gamma^*} H$ by \ref{appr_4:trail_ind_order}.11 contradicting assumption. If $L$ would be $\beta\mhyphen\mathit{defined}$, then $L \prec_{\Gamma^*} K$ by \ref{appr_4:trail_ind_order}.11 again contradicting assumption. Thus $H$ has to be $\beta\mhyphen\mathit{defined}$.
    Then $H \prec_{\Gamma^*} L$ by definition \ref{appr_4:trail_ind_order}.11.
    \item Suppose one literal is $\beta\mhyphen\mathit{undefined}$. If $K$ would be $\beta\mhyphen\mathit{undefined}$, then $L \prec_{\Gamma^*} K$ by definition \ref{appr_4:trail_ind_order}.11 contradicting assumption. If $H$ would be $\beta\mhyphen\mathit{undefined}$, then $K \prec_{\Gamma^*} H$ by definition \ref{appr_4:trail_ind_order}.11 again contradicting assumption. Thus $L$ has to be $\beta\mhyphen\mathit{undefined}$.
    Then $H \prec_{\Gamma^*} L$ by definition \ref{appr_4:trail_ind_order}.11.
    \item Suppose all literals are $\beta\mhyphen\mathit{defined}$. Then we have multiple subcases:
    \begin{enumerate}
      \item Suppose all literals have the same defining literal $L_i$ and $L_i$ is of level 1 or higher. By \ref{appr_4:trail_ind_order}.6 $L_i\prec_{\Gamma^*} M_{i,j}$ and $comp(L_i)\prec_{\Gamma^*} M_{i,j}$ for all $j$.  By \ref{appr_4:trail_ind_order}.4 $L_i\prec_{\Gamma^*} comp(L_i)$.
      Thus $L_i\prec_{\Gamma^*} comp(L_i)\prec_{\Gamma^*} M_{i,j}$ for all $j$.
      Since $K \prec_{\Gamma^*} L$ either $K=L_i$ and $L\not=L_i$ or $L=M_{i,j}$ and $K=comp(L_i)$ or $L=M_{i,j}$ and $K=M_{i,k}$ with $k<j$.
      \begin{enumerate}
      \item Assume $K=L_i$ and $L\not=L_i$.  Since $K$ is the smallest literal with defining literal $L_i$, $K = H$ has to hold. But then $K \prec_{\Gamma^*} K$ contradicting irreflexivity.
      \item Assume $L=M_{i,j}$ and $K=comp(L_i)$. Since $H \prec_{\Gamma^*} K$ and all literals have the same defining literal, $H = L_i$ has to hold by \ref{appr_4:trail_ind_order}.6 and \ref{appr_4:trail_ind_order}.4.
      Then, again by \ref{appr_4:trail_ind_order}.6, $H \prec_{\Gamma^*} L$.
      \item $L=M_{i,j}$ and $K=M_{i,k}$ with $k<j$.
      Since $H \prec_{\Gamma^*} K$ and all have the same defining literal either $H = M_{i,l}$ with $l<k$ by \ref{appr_4:trail_ind_order}.1, or $H=L_i$ or $H=comp(L_i)$ by \ref{appr_4:trail_ind_order}.6. In both cases $H \prec_{\Gamma^*} L$ holds by \ref{appr_4:trail_ind_order}.1 and \ref{appr_4:trail_ind_order}.6.
      \end{enumerate}
      \item Suppose $H,K,L$ have at least one different defining literal and $max_\Gamma(L) = L_i$ with $L_i$ of level 1 or higher. First, we have to show that if  $L_j = max_\Gamma(K') \prec_\Gamma max_\Gamma(L') = L_i$ and $L_i$ is of level 1 or higher, then $K' \prec_{\Gamma^*} L'$.
      Suppose that $L_j$ is of level 0. Then $K' \prec_{\Gamma^*} L'$ by \ref{appr_4:trail_ind_order}.9.
      Suppose that $L'= M_{i,k}$ and $K'= M_{j,l}$. Then $K' \prec_{\Gamma^*} L'$ by \ref{appr_4:trail_ind_order}.1.
      Suppose that $L'= M_{i,k}$ and $K'=L_j$ or $K'=comp(L_j)$. Then $K' \prec_{\Gamma^*} L'$ by \ref{appr_4:trail_ind_order}.6.
      Suppose that $L'= L_i$ or $L'=comp(L_i)$ and $K'=L_j$ or $K'=comp(L_j)$. Then $K' \prec_{\Gamma^*} L'$ by \ref{appr_4:trail_ind_order}.2-\ref{appr_4:trail_ind_order}.5.
      Suppose that $L'= L_i$ or $L'=comp(L_i)$ and $K'= M_{j,l}$. Then $K' \prec_{\Gamma^*} L'$ by \ref{appr_4:trail_ind_order}.7.\\
      Now by assumption $H \prec_{\Gamma^*} K$ and $K \prec_{\Gamma^*} L$. If  $max_\Gamma(K) \prec_\Gamma max_\Gamma(H)$ then $K \prec_{\Gamma^*} H$ contradicting assumption. The same holds for $L$ and $K$.
      Thus either $max_\Gamma(H) \prec_\Gamma max_\Gamma(L)$ or $max_\Gamma(K) \prec_\Gamma max_\Gamma(L)$. In the first case $H \prec_{\Gamma^*} L$ follows from above. In the second case $max_\Gamma(H) \preceq_\Gamma max_\Gamma(K)$ has to hold. Thus $H \prec_{\Gamma^*} L$ follows again.
      \item Suppose that $max_\Gamma(L) = L_i$ where $L_i$ is of level 0. Since $K \prec_{\Gamma^*} L$, $max_\Gamma(K) = L_j$ with $L_j$ of level 0 has to hold by \ref{appr_4:trail_ind_order}.9 and \ref{appr_4:trail_ind_order}.11.
      Now assume that $L\prec_T K$. Then $L\prec_{\Gamma^*} K$ by \ref{appr_4:trail_ind_order}.8 contradicting assumption. Thus $K\prec_T L$ has to hold since $K\not= L$.
      Since $H \prec_{\Gamma^*} K$, $max_\Gamma(H) = L_k$ with $L_k$ of level 0 has to hold by \ref{appr_4:trail_ind_order}.9 and \ref{appr_4:trail_ind_order}.11.
      Now assume that $K\prec_T H$. Then $K\prec_{\Gamma^*} H$ by \ref{appr_4:trail_ind_order}.8 contradicting assumption. Thus $H\prec_T K$ has to hold since $H\not= K$.
      By transitivity of $\prec_T$, $H\prec_T L$ and thus $H \prec_{\Gamma^*} L$ has to hold.
    \end{enumerate}
  \end{enumerate}
  \emph{Totality}.
  First we show that any ground literal is either $\beta\mhyphen\mathit{defined}$ and has a level or $\beta\mhyphen\mathit{undefined}$. Since $\Gamma$ is consistent, a literal is either $\beta\mhyphen\mathit{defined}$ or $\beta\mhyphen\mathit{undefined}$. We just need to show that if a literal is $\beta\mhyphen\mathit{defined}$, it has a level.
  By lemma \ref{gammastar:properties:proofs:alldefhavelevel} all defined literals have a level. $\beta\mhyphen\mathit{definedness}$ implies definedness. Thus all $\beta\mhyphen\mathit{defined}$ literals have a level. Now assume some arbitrary ground literals $L\neq K$. We have several cases:
  \begin{enumerate}
    \item $L,K$ are $\beta\mhyphen\mathit{undefined}$. Since $L\not=K$ we have $L\prec_T K$ or $K\prec_T L$ by totality of $\prec_T$ on ground literals. Thus by \ref{appr_4:trail_ind_order}.10 $L\prec_{\Gamma^*} K$ or $K\prec_{\Gamma^*} L$.
    \item One is $\beta\mhyphen\mathit{defined}$. Then either $L\prec_{\Gamma^*} K$ or $K\prec_{\Gamma^*} L$ by \ref{appr_4:trail_ind_order}.11.
    \item Both are $\beta\mhyphen\mathit{defined}$. Then we have several subcases:
    \begin{enumerate}
      \item $L$ is of level zero and $K$ is of level greater than zero or vice versa. Then by \ref{appr_4:trail_ind_order}.9 $L\prec_{\Gamma^*} K$ or $K\prec_{\Gamma^*} L$ has to hold.
      \item $max_\Gamma(L)= L_i$ and $max_\Gamma(K)= L_j$ and both are of level 1 or higher.
      \begin{enumerate}
        \item $L=M_{i,k}$ and $K=M_{j,l}$. Then either $M_{i,k}\prec_{\Gamma^*} M_{j,l}$ or $M_{j,l}\prec_{\Gamma^*} M_{i,k}$ by \ref{appr_4:trail_ind_order}.1.
        \item $L=M_{i,k}$ and $K=L_j$ or $K=comp(L_j)$. If $i\geq j$ then by \ref{appr_4:trail_ind_order}.6 $L_j\prec_{\Gamma^*} M_{i,k}$ or $ comp(L_j)\prec_{\Gamma^*} M_{i,k}$.
        If $i<j$ then by \ref{appr_4:trail_ind_order}.7 $M_{i,k}\prec_{\Gamma^*} L_j$ or $M_{i,k}\prec_{\Gamma^*} comp(L_j)$.
        \item $K=M_{i,k}$ and $L=L_j$ or $L=comp(L_j)$. Analogous to previous step.
        \item $L=L_i$ and $K=L_j$. Then if $i<j$ by \ref{appr_4:trail_ind_order}.2 $L_i\prec_{\Gamma^*} L_j$ and $L_j\prec_{\Gamma^*} L_i$ otherwise.
        \item $L=comp(L_i)$ and $K=comp(L_j)$ analogous to previous step for \ref{appr_4:trail_ind_order}.5.
        \item $L=L_i$ and $K=comp(L_j)$. Then if $i\leq j$ by \ref{appr_4:trail_ind_order}.4 $L_i\prec_{\Gamma^*} comp(L_j)$. If $j<i$ by \ref{appr_4:trail_ind_order}.3 $comp(L_j)\prec_{\Gamma^*} L_i$.
        \item $L=comp(L_i)$ and $K=L_j$. Then if $i< j$ by \ref{appr_4:trail_ind_order}.3 $comp(L_i)\prec_{\Gamma^*} L_j$. If $j\leq i$ by \ref{appr_4:trail_ind_order}.4 $L_j\prec_{\Gamma^*} comp(L_i)$.
      \end{enumerate}
      \item $max_\Gamma(L)= L_i$ and $max_\Gamma(K)= L_j$ and both are of level 0. Now either $L\prec_T K$ or $K\prec_T L$. Thus by \ref{appr_4:trail_ind_order}.8 $L\prec_{\Gamma^*} K$ or $K\prec_{\Gamma^*} L$.
    \end{enumerate}
  \end{enumerate}
\end{proof}

\subsubsection{Proof of Lemma~\ref{gammastar:properties}-\ref{gammastar:properties:wellfounded}}
$\prec_{\Gamma^*}$ is a well-founded ordering.
\begin{proof}
 Suppose some arbitrary subset $M$ of all ground literals, a trail $\Gamma:=[L_1^{i_1:C_1\cdotp \sigma_1},...,L_n^{i_n:C_n\cdotp \sigma_n}]$ and a term $\beta$ such that $\{L_1,...,L_n\}\prec_T \beta$. We have to show that $M$ has a minimal element. We have several cases:
 \begin{enumerate}
   \item $L$ is $\beta\mhyphen\mathit{undefined}$ in $\Gamma$ for all literals $L\in M$. Then $\prec_{\Gamma^*} = \prec_T$. Since $\prec_T$ is well-founded there exists a minimal element in $M$. Thus there exists a minimal element in $M$ with regard to $\prec_{\Gamma^*}$.
   \item there exists at least one literal in $M$ that is $\beta\mhyphen\mathit{defined}$. Then we have two cases:
   \begin{enumerate}
     \item there exists a literal in $M$ that is of level zero. Then let $L\in M$ be the literal of level zero, where $L\prec_T K$ for all $K\in M$ with $K$ of level zero. We show that $L$ is the minimal element. Suppose there exists a literal $L'\in M$ that is smaller than $L$.
     Since $L$ is of level zero, $L\prec_{\Gamma^*} K$ for all literals $K$ of level greater than zero by \ref{appr_4:trail_ind_order}.9 and $L\prec_{\Gamma^*} H$ for all $\beta\mhyphen\mathit{undefined}$ literals  $H$ by \ref{appr_4:trail_ind_order}.11. Thus $L'$ must be of level zero. But then $L'\prec_T L$ has to hold, contradicting assumption.
     \item There exists no literal in $M$ that is of level zero.
     Let $L\in M$ be the literal where $max_\Gamma(L)\preceq_\Gamma max_\Gamma(K)$ for all $K\in M$ and
    \begin{enumerate}
      \item $L= max_\Gamma(L)$ or
      \item $L= comp(max_\Gamma(L))$ and $max_\Gamma(L)\not\in M$ or
      \item $L\prec_T H$ for all $H\in M$ such that $max_\Gamma(L)= max_\Gamma(H)$  and $max_\Gamma(L)\not\in M$ and $comp(max_\Gamma(L))\not\in M$.
    \end{enumerate}
     We show that $L$ is the minimal element. Suppose there exists a literal $L'\in M$ that is smaller than $L$. We have three cases:
     \begin{enumerate}
       \item $max_\Gamma(L) = L = L_i$. Since $L_i\prec_T \beta$ we have either $L' = L_j$ with $j < i$ by \ref{appr_4:trail_ind_order}.2 or $L'= comp(L_j)$ with $j< i$  by \ref{appr_4:trail_ind_order}.3 or $L' = M_{k,l}$ with $k < i$  by \ref{appr_4:trail_ind_order}.7.
       In all three cases we have $max_\Gamma(L')\prec_\Gamma max_\Gamma(L)$ contradicting assumption that the defining literal of $L$ is minimal in $M$.
       \item $L=comp(L_i) = comp(max_\Gamma(L))$ and $max_\Gamma(L)\not\in M$.
       Since $L_i\prec_T\beta$ either $L' = L_j$ with $j < i$ by \ref{appr_4:trail_ind_order}.4 or $L'= comp(L_j)$ with $j< i$  by \ref{appr_4:trail_ind_order}.5 or $L' = M_{k,l}$ with $k < i$  by \ref{appr_4:trail_ind_order}.7.
       In all three cases we have $max_\Gamma(L')\prec_\Gamma max_\Gamma(L)$ contradicting assumption that the defining literal of $L$ is minimal in $M$.
       \item $L = M_{k,l}$ and $max_\Gamma(L)\not\in M$ and $comp(max_\Gamma(L))\not\in M$. Then either $L'=M_{i,j}$ with $i<k$ or $(i=k$ and $j<l)$ by \ref{appr_4:trail_ind_order}.1 or $L'=L_i$ or $L'=comp(L_i)$ with $i\leq k$.
       Suppose that  $L'=M_{i,j}$ and $i<k$. Then $max_\Gamma(L')\prec_\Gamma max_\Gamma(L)$ contradicting assumption.
       Suppose that  $L'=M_{i,j}$ and $i=k$ and $j<l$. Then $L'\prec_T L$ and $max_\Gamma(L)= max_\Gamma(L')$. For $L$ it holds $L\prec_T H$ for all $H\in M$ such that $max_\Gamma(L)= max_\Gamma(H)$. Contradiction.
       Suppose that $L'=L_i$ or $L'=comp(L_i)$ with $i = k$. Then  $max_\Gamma(L) = L_i$.
       By assumption $max_\Gamma(L)\not\in M$ and $comp(max_\Gamma(L))\not\in M$. Contradiction.
       Suppose that $L'=L_i$ or $L'=comp(L_i)$ with $i < k$. Then we have $max_\Gamma(L')\prec_\Gamma max_\Gamma(L)$ contradicting assumption that the defining literal of $L$ is minimal in $M$.
     \end{enumerate}

   \end{enumerate}
 \end{enumerate}
\end{proof}

\subsubsection{Proof of Lemma~\ref{lem:np-complete}}
  Assume that all ground terms $t$ with $t\prec_T \beta$ for any $\beta$ are polynomial in the size of $\beta$.
  Then testing Propagate (Conflict) is NP-Complete, i.e., the problem of checking for a given clause $C$ whether there exists a grounding
  substitution $\sigma$ such that $C\sigma$ propagates (is false) is NP-Complete.
\begin{proof}
  Let $C\sigma$ be propagable (false). The problem is in NP because $\beta$ is constant and for all $t\in codom(\sigma)$ it holds that $t$ is polynomial in the size of $\beta$.
  Checking if $C\sigma$ is propagable (false) can be done in polynomial time with Congruence Closure~\cite{NelsonOppen80} since $\sigma$ has polynomial size.

We reduce 3-SAT to testing rule Conflict. Consider a 3-place predicate $R$, a unary function $g$, and a mapping
from propositional variables $P$ to first-order variables $x_P$. Assume a 3-SAT clause set $N =\{ \{L_0,L_1,L_2\},...,\{L_{n-2},L_{n-1},L_{n}\}\}$, where $L_i$ may denote both $P_i$ and $\neg P_i$.
Now we create the clause $$\{R(t_{0},t_{1},t_{2})\not\approx true,...,R(t_{n-2},t_{n-1},t_{n}\not\approx true)\}$$ where $t_{i} := x_{P_{i}}$ if $L_{i} = P_{i}$ and $t_{i} := g(x_{P_{i}})$ otherwise.
Now let $\Gamma:=\{R(x_0,x_1,x_2)\mid x_i\in\{0,1,g(0),g(1)\}$ such that $(x_0\lor x_1 \lor x_2)\downarrow_{\{g(x)\mapsto (\neg x)\}}$ is true $\}$ be the set of all $R$-atoms that evaluate to true
if considered as a three literal propositional clause. Now $N$ is satisfiable
if and only if Conflict is applicable to the new clause. The reduction is analogous for Propagate.
\end{proof}

\subsubsection{Proof of Theorem~\ref{appr_4:scleq_sound}}
Assume a state $(\Gamma;N;U;\beta;k;D)$ resulting from a run. Then $(\Gamma;N;U;\beta;k;D)$ is sound.
\begin{proof}
 Proof by structural induction on $(\Gamma;N;U;\beta;k;D)$. Let $(\Gamma;N;U;\beta;k; D) = (\epsilon, N, \emptyset, \beta, 0, \top)$, the initial state. Then it is sound according to lemma \ref{appr_4:init_state_sound}. Now assume that $(\Gamma;N;U;\beta;k;D)$ is sound. We need to show that any application of a rule results in a sound state.
\paragraph{Propagate:} Assume $\mathit{Propagate}$ is applicable. Then there exists $C\in N\cup U$ such that $C=C_0\lor C_1\lor L$, $L\sigma$ is $\beta\mhyphen\mathit{undefined}$ in $\Gamma$, $C_1\sigma = L\sigma\lor ...\lor L\sigma$, $C_1=L_1\lor ...\lor L_n$,$\mu=mgu(L_1,...,L_n,L)$ and $C_0\sigma$ is $\beta\mhyphen\mathit{false}$ in $\Gamma$.
       Then a reduction chain application $[I_1,...,I_m]$ from $\Gamma$ to $L\sigma^{k:(C_0\lor L)\mu\cdotp \sigma}$ is created with $I_m := (s_m\# t_m\cdotp\sigma_m, s_m\# t_m\lor C_m\cdotp\sigma_m, I_j, I_k, p_m)$.
       Finally $s_m\# t_m\sigma_m^{k:(s_m\# t_m\lor C_m)\cdotp\sigma_m}$ is added to $\Gamma$.\\
       By definition of a reduction chain application $(s_m\# t_m)\sigma_m = L\sigma{\downarrow_{conv(\Gamma)}}$. Thus, $(s_m\# t_m)\sigma_m$ must be $\beta\mhyphen\mathit{undefined}$ in $\Gamma$ and irreducible by $conv(\Gamma)$, since $(C_0\lor L)\mu\sigma\prec_T \beta$ by definition of Propagate.
    \begin{itemize}
    \item \ref{appr_4:soundness}.\ref{appr_4:sound_1}:
    Since $(s_m\# t_m)\sigma_m$ is $\beta\mhyphen\mathit{undefined}$ in $\Gamma$, adding $(s_m\# t_m)\sigma_m$ does not make $\Gamma$ inconsistent. Thus $\Gamma,(s_m\# t_m)\sigma_m$ remains consistent.
    \item \ref{appr_4:soundness}.\ref{appr_4:sound_2}: $(s_m\# t_m)\sigma_m$ is $\beta\mhyphen\mathit{undefined}$ in $\Gamma$ and irreducible by $conv(\Gamma)$.
    It remains to show that $C_m\sigma_m$ is $\beta\mhyphen\mathit{false}$ in $\Gamma$, $N \cup U \models s_m\# t_m\lor C_m$ and $(s_m\# t_m\lor C_m)\sigma_m\prec_T \beta$.
    By i.h. for all $L'\sigma'^{l:(L'\lor C')\cdotp \sigma'}\in\Gamma$ it holds that $C'\sigma'$ is $\beta\mhyphen\mathit{false}$ in $\Gamma$, $(L'\lor C')\sigma'\prec_T \beta$ and $N\cup U \models (L'\lor C')$. By definition of Propagate $C_0\sigma$ is $\beta\mhyphen\mathit{false}$ in $\Gamma$ and $C\sigma\prec_T \beta$ and $N\cup U \models C$.
    $(C_0\lor C_1\lor L)\mu$ is an instance of $C$. Thus $C \models (C_0\lor C_1\lor L)\mu$. $C_0\mu = L\mu \lor...\lor L\mu$ by definition of Propagate. Thus $C \models (C_1\lor L)\mu$ and by this $N\cup U \models (C_1\lor L)\mu$.
    By definition of a reduction chain application $I_j$ either contains a clause annotation from $\Gamma,L\sigma^{k:(C_0\lor L)\cdotp \sigma}$ or it is a rewriting inference from smaller rewrite steps for all $1\leq j\leq m$.
    Thus, by lemma \ref{rewrite_inference_properties} it follows by induction that for any rewriting inference $I_j := (s_j\# t_j\cdotp\sigma_j, s_j\# t_j\lor C_j\cdotp\sigma_j, I_i, I_k, p_j)$ it holds $C_j\sigma_j$ is $\beta\mhyphen\mathit{false}$ in $\Gamma$, $N \cup U \models s_j\# t_j\lor C_j$ and $(s_j\# t_j\lor C_j)\sigma_j\prec_T \beta$.
    \item \ref{appr_4:soundness}.\ref{appr_4:sound_3} and \ref{appr_4:soundness}.\ref{appr_4:sound_4} trivially hold by induction hypothesis.
    \item \ref{appr_4:soundness}.\ref{appr_4:sound_5}: trivially holds since $D = \top$.
   \end{itemize}
\paragraph{Decide:} Assume $Decide$ is applicable. Then there exists $C\in N\cup U$ such that $C=C_0\lor L$, $L\sigma$ is ground and $\beta\mhyphen\mathit{undefined}$ in $\Gamma$ and $C_0\sigma$ is ground and $\beta\mhyphen\mathit{undefined}$ or $\beta\mhyphen\mathit{true}$ in $\Gamma$.
   Then a reduction chain application $[I_1,...,I_m]$ from $\Gamma$ to $L\sigma^{k+1:(C_0\lor L)\cdotp \sigma}$ is created with $I_m := (s_m\# t_m\cdotp\sigma_m, s_m\# t_m\lor C_m\cdotp\sigma_m, I_j, I_k, p_m)$.
   Finally $s_m\# t_m\sigma_m^{k+1:(s_m\# t_m\lor comp(s_m\# t_m))\cdotp\sigma_m}$ is added to $\Gamma$.\\
   By definition of a reduction chain application $(s_m\# t_m)\sigma_m = L\sigma{\downarrow_{conv(\Gamma)}}$. Thus, $(s_m\# t_m)\sigma_m$ must be $\beta\mhyphen\mathit{undefined}$ in $\Gamma$ and irreducible by $conv(\Gamma)$, since $(C_0\lor L)\sigma\prec_T \beta$ by definition of Decide.
   \begin{itemize}
     \item \ref{appr_4:soundness}.\ref{appr_4:sound_1}:
     Since $(s_m\,\#\, t_m)\sigma_m$ is $\beta\mhyphen\mathit{undefined}$ in $\Gamma$ adding $(s_m\# t_m)\sigma_m$ does not make $\Gamma$ inconsistent. Thus $\Gamma,(s_m\# t_m)\sigma_m$ remains consistent.
     \item \ref{appr_4:soundness}.\ref{appr_4:sound_3}: $(s_m\# t_m)\sigma_m$ is $\beta\mhyphen\mathit{undefined}$ in $\Gamma$ and irreducible by $conv(\Gamma)$. $N\cup U\models (s_m\# t_m)\lor comp(s_m\# t_m)$ obviously holds.
     $(s_m\# t_m)\sigma_m\prec_T \beta$ holds inductively by lemma \ref{rewrite_inference_properties} and since $L\sigma\prec_T \beta$.
     \item \ref{appr_4:soundness}.\ref{appr_4:sound_2} and \ref{appr_4:soundness}.\ref{appr_4:sound_4} trivially hold by induction hypothesis.
     \item \ref{appr_4:soundness}.\ref{appr_4:sound_5}: trivially holds since $D = \top$.
   \end{itemize}

\paragraph{Conflict:}  Assume $\mathit{Conflict}$ is applicable. Then there exists a $D'\sigma$ such that $D'\sigma$ is $\beta\mhyphen\mathit{false}$ in $\Gamma$. Then:
    \begin{itemize}
     \item \ref{appr_4:soundness}.\ref{appr_4:sound_1} - \ref{appr_4:soundness}.\ref{appr_4:sound_4} trivially hold by induction hypothesis
     \item \ref{appr_4:soundness}.\ref{appr_4:sound_5}: $D'\sigma$ is $\beta\mhyphen\mathit{false}$ in $\Gamma$ by definition of $\mathit{Conflict}$. Now we have two cases:
     \begin{enumerate}
     \item $D'\sigma$ is of level greater than zero. Then $N \cup U \models D'$ since $D'\in N\cup U$ by definition of $\mathit{Conflict}$.
     \item $D'\sigma$ is of level zero. Then we have to show that $N \cup U \models \bot$. For any literal $L_0^{0:(L_0\lor D_0)\cdotp \sigma}\in \Gamma$ it holds $N\models L_0$, since any literal of level $0$ is a propagated literal. By definition of a level, for any $K\in D'\sigma$ there exists a core $core(\Gamma;K)$ that contains only literals of level $0$.
     Thus $N\cup U\models core(\Gamma;K)$ and $core(\Gamma;K)\models \neg K$ for any such $K$. Then $N\cup U\models \neg D'\sigma$ and $N\cup U\models D'\sigma$ and therefore $N\cup U\models \bot$.
     \end{enumerate}
    \end{itemize}

\paragraph{Skip:} Assume $\mathit{Skip}$ is applicable. Then $\Gamma = \Gamma', L$ and $D = D'\clsr\sigma$ and $D'\sigma$ is $\beta\mhyphen\mathit{false}$ in $\Gamma'$.
    \begin{itemize}
     \item \ref{appr_4:soundness}.\ref{appr_4:sound_1}: By i.h. $\Gamma$ is consistent. Thus $\Gamma'$ is consistent as well.
     \item \ref{appr_4:soundness}.\ref{appr_4:sound_2}- \ref{appr_4:soundness}.\ref{appr_4:sound_4}: trivially hold by induction hypothesis and since $\Gamma'$ is a prefix of $\Gamma$.
     \item \ref{appr_4:soundness}.\ref{appr_4:sound_5}: By i.h. $D'\sigma$ is $\beta\mhyphen\mathit{false}$ in $\Gamma$ and $N\cup U\models D'$. By definition of $\mathit{Skip}$ $D'\sigma$ is $\beta\mhyphen\mathit{false}$ in $\Gamma'$.
    \end{itemize}

\paragraph{Explore-Refutation:} Assume $\mathit{Explore\mhyphen Refutation}$ is applicable. Then $D=(D'\lor s\,\#\,t)\cdotp\sigma$, $(s\,\#\,t)\sigma$ is strictly $\prec_{\Gamma^*}$ maximal in $(D'\lor s\,\#\,t)\sigma$,
      $[I_1,...,I_m]$ is a refutation from $\Gamma$ and $(s\,\#\,t)\sigma$,
      $I_j = (s_j\# t_j\cdotp\sigma_j, (s_j\# t_j\lor C_j)\cdotp\sigma_j, I_l, I_k, p_j)$, $1\leq j\leq m$,
      $ (s_j\,\#\,t_j\lor C_j)\sigma_j \prec_{\Gamma^*} (D'\lor s\,\#\,t)\sigma$, $(s_j\# t_j\lor C_j)\sigma_j$ is $\beta\mhyphen\mathit{false}$ in $\Gamma$.
      \begin{itemize}
       \item \ref{appr_4:soundness}.\ref{appr_4:sound_1}-\ref{appr_4:soundness}.\ref{appr_4:sound_4} trivially hold by i.h.
       \item \ref{appr_4:soundness}.\ref{appr_4:sound_5}. By definition $(C_j\lor s_j\,\#\,t_j)\sigma_j$ is $\beta\mhyphen\mathit{false}$ in $\Gamma$.
       By i.h. for all $L'\sigma'^{l:(L'\lor C')\cdotp \sigma'}$ $\in \Gamma$ it holds that $N\cup U \models (L'\lor C')$. By i.h. $N\cup U \models D'\lor s\,\#\,t$.
       By definition of a refutation $I_j := (s_j\# t_j\cdotp\sigma_j, s_j\# t_j\lor C_j\cdotp\sigma_j, I_i, I_k, p_j)$ either contains a clause annotation from $\Gamma,(s\,\#\,t)\sigma^{k:(D'\lor s\,\#\,t)\cdotp \sigma}$ or it is a rewriting inference from smaller rewrite steps for all $1\leq j\leq m$.
       Thus it follows inductively by lemma \ref{rewrite_inference_properties} that $N\cup U \models (s_j\# t_j\lor C_j)$.
     \end{itemize}

\paragraph{Factorize:}  Assume $\mathit{Factorize}$ is applicable. Then $D=D'\cdotp \sigma$.
    \begin{itemize}
     \item \ref{appr_4:soundness}.\ref{appr_4:sound_1} - \ref{appr_4:soundness}.\ref{appr_4:sound_4} trivially hold by induction hypothesis.
     \item \ref{appr_4:soundness}.\ref{appr_4:sound_5}: By i.h. $D'\sigma$ is $\beta\mhyphen\mathit{false}$ in $\Gamma$ and $N\cup U\models D'$.
     By the definition of $\mathit{Factorize}$ $D' = D_0 \lor L \lor L'$ such that $L\sigma=L'\sigma$ and $\mu=mgu(L,L')$. $(D_0\lor L\lor L')\mu$ is an instance of $D'$. Thus $N \cup U\models (D_0 \lor L\lor L')\mu$.
     Since $L\mu = L'\mu$, $(D_0\lor L\lor L')\mu\models(D_0\lor L)\mu$.
     Thus $N \cup U\models (D_0 \lor L)\mu$ and $(D_0 \lor L)\mu\sigma$ is $\beta\mhyphen\mathit{false}$ since $(D_0\lor L)\mu\sigma = (D_0\lor L)\sigma$ by definition of an mgu.
    \end{itemize}

\paragraph{Equality-Resolution:}  Assume $\mathit{Equality\mhyphen Resolution}$ is applicable. Then $D= (D'\lor s \not\approx s')\sigma$ and $s\sigma=s'\sigma$, $\mu=mgu(s,s')$. Then
    \begin{itemize}
     \item \ref{appr_4:soundness}.\ref{appr_4:sound_1} - \ref{appr_4:soundness}.\ref{appr_4:sound_4} trivially hold by induction hypothesis.
     \item \ref{appr_4:soundness}.\ref{appr_4:sound_5}: By i.h. $(D'\lor s\not\approx s')\sigma$ is $\beta\mhyphen\mathit{false}$ in $\Gamma$ and $N\cup U\models (D'\lor s \not\approx s')$.
     $D'\mu$ is an instance of $(D'\lor s \not\approx s')$. Thus $(D'\lor s \not\approx s')\models D'\mu$. Thus $N\cup U\models D'\mu$. $D'\mu\sigma$ is $\beta\mhyphen\mathit{false}$ since $(D'\lor s\not\approx s')\sigma$ is $\beta\mhyphen\mathit{false}$ and $D'\mu\sigma = D'\sigma$ by definition of a mgu.
    \end{itemize}

\paragraph{Backtrack:}  Assume $Backtrack$ is applicable. Then $\Gamma = \Gamma', K, \Gamma''$ and $D=(D'\lor L)\sigma$, where $L\sigma$ is of level $k$, and $D'\sigma$ is of level $i$.
    \begin{itemize}
     \item \ref{appr_4:soundness}.\ref{appr_4:sound_1}: By i.h. $\Gamma$ is consistent. Thus $\Gamma' \subseteq \Gamma$ is consistent.
     \item \ref{appr_4:soundness}.\ref{appr_4:sound_2} - \ref{appr_4:soundness}.\ref{appr_4:sound_3}: Since $\Gamma'$ is a prefix of $\Gamma$ by i.h. this holds.
     \item \ref{appr_4:soundness}.\ref{appr_4:sound_4}: By i.h. $N\cup U \models D'\lor L$ and $N\models U$. Thus $N\models U \cup \{D'\lor L\}$
     \item \ref{appr_4:soundness}.\ref{appr_4:sound_5}: trivially holds since $D = \top$ after backtracking.
    \end{itemize}
\end{proof}


\subsubsection{Proof of Lemma~\ref{lem:soundcomplete:unitrewriting}}
Assume a state $(\Gamma;N;U;\beta;k;D)$ resulting from a regular run where the current level $k>0$ and a unit clause $l\approx r\in N$. Now assume a clause $C\lor L[l']_p \in N$ such that $l' = l\mu$ for some matcher $\mu$.
Now assume some arbitrary grounding substitutions $\sigma'$ for $C\lor L[l']_p$, $\sigma$ for $l\approx r$ such that $l\sigma=l'\sigma'$ and $r\sigma\prec_T l\sigma$.
Then $ (C\lor L[r\mu\sigma\sigma']_p)\sigma'\prec_{\Gamma^*} (C\lor L[l']_p)\sigma'$.
\begin{proof}
  Let $(\Gamma;N;U;\beta;k;D)$ be a state resulting from a regular run where $k> 0$ and $\Gamma = [L_1,...,L_n]$.
  Now we have two cases:
  \begin{enumerate}
  \item $\beta\prec_T (l\approx r)\sigma$. Since $(l\approx r)\sigma$ rewrites $L[l']_p\sigma'$, $\beta\prec_T L[l']_p\sigma'$ has to hold as well. Thus $(l\approx r)\sigma$ is $\beta\mhyphen\mathit{undefined}$ in $\Gamma$ and $L[l']_p\sigma'$ is $\beta\mhyphen\mathit{undefined}$ in $\Gamma$.
  By definition of a trail induced ordering $\prec_{\Gamma^*} := \prec_T$ for $\beta\mhyphen\mathit{undefined}$ literals. Thus, in case that $L[r\mu]_p)\sigma\sigma'$ is still undefined, $(L[r\mu]_p)\sigma\sigma' \prec_{\Gamma^*} (L[l']_p)\sigma'$ has to hold since $(L[r\mu]_p)\sigma\sigma' \prec_T (L[l']_p)\sigma'$. Thus, according to the definition of multiset orderings, $ (C\lor L[r\mu]_p)\sigma\sigma' \prec_{\Gamma^*} (C\lor L[l']_p)\sigma'$.
  In the case that $(L[r\mu]_p)\sigma\sigma'$ is defined, $(L[r\mu]_p)\sigma\sigma' \prec_{\Gamma^*} (L[l']_p)\sigma'$ has to hold as well by definition~\ref{appr_4:trail_ind_order}.11. Thus, according to the definition of multiset orderings, $ (C\lor L[r\mu]_p)\sigma\sigma' \prec_{\Gamma^*} (C\lor L[l']_p)\sigma'$.
  \item $(l\approx r)\sigma\prec_T \beta$.
  Since propagation is exhaustive for literals of level $0$ (cf. \ref{appr_4:def:regular-run}.2) $(l\approx r)\sigma$ is on the trail or defined and of level $0$. Now we have two cases:
    \begin{enumerate}
    \item $(L[l']_p)\sigma'$ is of level 1 or higher. Since $(L[l']_p)\sigma'$ is reducible by $(l\approx r)\sigma$, $(L[l']_p)\sigma' \neq {L_i}$ and $(L[l']_p)\sigma' \neq {comp(L_i)}$ for all $L_i\in\Gamma$. Since $(L[l']_p)\sigma'$ is of level 1 or higher, rewriting with $(l\approx r)\sigma$ does not change the defining literal of $(L[l']_p)\sigma'$. Thus $(L[r\mu]_p)\sigma\sigma' \prec_{\Gamma^*} (L[l']_p)\sigma'$ has to hold since $(L[r\mu]_p)\sigma\sigma' \prec_T (L[l']_p)\sigma'$. Thus, according to the definition of multiset orderings, $(C\lor L[r\mu]_p)\sigma\sigma' \prec_{\Gamma^*} (C\lor L[l']_p)\sigma'$
    \item $(L[l']_p)\sigma'$ is of level 0. First we show that $(L[r\mu]_p)\sigma\sigma'$ is still of level $0$. Suppose that $(L[l']_p)\sigma' = s\,\#\, s$. Then rewriting either the left or right side of the equation results in  $(L[r\mu]_p)\sigma\sigma'$. Then $core(\Gamma; (l\approx r)\sigma)$ is also a core for $(L[r\mu]_p)\sigma\sigma'$ and thus $(L[r\mu]_p)\sigma\sigma'$ must be of level $0$.
    Now suppose that $(L[r\mu]_p)\sigma\sigma' = s\,\#\, s$. Then it is of level $0$ by definition of a level.
    Finally suppose that $(L[r\mu]_p)\sigma\sigma' \not= s\,\#\, s$ and $(L[l']_p)\sigma' \not= s\,\#\, s$. Then $core(\Gamma;(L[l']_p)\sigma')\cup core(\Gamma; (l\approx r)\sigma)$ is a core for $(L[r\mu]_p)\sigma\sigma'$. Thus $(L[r\mu]_p)\sigma\sigma'$ is of level $0$.
    Since $(L[r\mu]_p)\sigma\sigma' \prec_T (L[l']_p)\sigma'$, $(L[r\mu]_p)\sigma\sigma' \prec_{\Gamma^*} (L[l']_p)\sigma'$ according to the definition of $\prec_{\Gamma^*}$. Thus, according to the definition of multiset orderings, $(C\lor L[r\mu]_p)\sigma\sigma' \prec_{\Gamma^*} (C\lor L[l']_p)\sigma'$.
  \end{enumerate}
\end{enumerate}
\end{proof}

\subsubsection{Proof of Lemma~\ref{lem:redundancy}}
Let $C,D$ be two clauses. If there exists a substitution $\sigma$ such that $C\sigma\subset D$, then $D$ is redundant with respect to $C$ and any $\prec_{\Gamma^*}$.
\begin{proof}
Let $\tau$ be a grounding substitution for $D$. Since $C\sigma\subset D$, $C\sigma\tau\subset D\tau$. Thus, for any $L\in C\sigma\tau$ it holds $L\in D\tau$ and $C\sigma\tau\not= D\tau$.
Thus, $C\sigma\tau\prec_{\Gamma^*}D\tau$ by definition of a multiset extension and $C\sigma\tau$ makes $D\tau$ redundant by definition~\ref{appr_4:prelim:def:redundancy}.
\end{proof}



\subsubsection{Auxiliary Lemmas for the Proof of Lemma~\ref{appr_4:non-red}}

\begin{lemma}
  \label{no_false_clause_after_backtrack}
  During a regular run, if $(\Gamma;N;U;\beta;k;\top)$ is the immediate result of an application of $Backtrack$, then there exists no clause $C\in N\cup U$ and a substitution $\sigma$ such that $C\sigma$ is $\beta\mhyphen\mathit{false}$ in $\Gamma$.
\end{lemma}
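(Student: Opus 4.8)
The plan is to fix the Backtrack application that yields the state under consideration, so that the preceding state is $(\Gamma_1,K,\Gamma_2; N; U_0; \beta; k; (D\lor L)\cdotp\sigma)$ and the resulting state is $(\Gamma_1; N; U_0\cup\{D\lor L\}; \beta; j-i; \top)$ with $U := U_0\cup\{D\lor L\}$, and then to argue separately that (i) no grounding of the freshly learned clause $D\lor L$ and (ii) no grounding of any clause in $N\cup U_0$ is $\beta$-false in $\Gamma_1$. For (i) I would exploit the minimality built into the side condition of Backtrack: $\Gamma_1,K$ is a \emph{shortest} trail subsequence (in fact prefix) of $\Gamma_1,K,\Gamma_2$ for which some grounding $\tau$ makes $(D\lor L)\tau$ $\beta$-false in $\Gamma_1,K$ but not in $\Gamma_1$. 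Assume toward a contradiction that $(D\lor L)\sigma'$ is $\beta$-false in $\Gamma_1$ for some grounding $\sigma'$. Setting aside the degenerate case in which $(D\lor L)\sigma'$ is already $\beta$-false in the empty trail (each literal having unifiable sides), there is a shortest prefix $\Pi_0,K'$ of $\Gamma_1$ in which $(D\lor L)\sigma'$ is $\beta$-false, and by minimality it is not $\beta$-false in $\Pi_0$; then $\Pi_0,K'$ is a prefix of $\Gamma_1,K,\Gamma_2$ witnessing the Backtrack side condition (with $\tau:=\sigma'$) that is strictly shorter than $\Gamma_1,K$, contradicting minimality. I expect the degenerate case not to arise for a clause learned in a regular run — it is precisely the situation in which conflict resolution should already have produced the level-$0$ conflict $\bot$ — and I would discharge it by inspecting the form of conflict clauses together with the behaviour of Conflict and Equality-Resolution.

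For (ii), let $S_0$ be the state on which the Conflict rule that opens this conflict-resolution phase is applied; its trail $\Gamma^{\mathrm{conf}}$ is shortened by the ensuing Skip steps to the prefix $\Gamma_1,K,\Gamma_2$ on which Backtrack fires, so $\Gamma_1$ is a \emph{proper} prefix of $\Gamma^{\mathrm{conf}}$. The step of the run entering $S_0$ produces a $\top$-state, so it is neither Conflict nor a conflict-resolution step; it is not Backtrack either, because by the induction hypothesis (this very lemma applied to the previous Backtrack) no clause of $N\cup U_0$ is $\beta$-false in the trail of $S_0$, so Conflict could not be the immediately following step; and it is not Grow or Restart, since those would force $\Gamma^{\mathrm{conf}}=\epsilon$, incompatible with the existence of the present Backtrack. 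Hence $S_0$ is reached by Propagate or Decide, and by the precedence of Conflict and Factorize in a regular run (Definition~\ref{appr_4:def:regular-run}.1) the rule Conflict is \emph{not} applicable in the preceding state $S_{-1}$, whose trail is $\Gamma^{\mathrm{conf}}$ with its last literal removed. Using that a reducible conflict grounding can always be replaced by its normal form w.r.t.\ $conv$ of the trail, which is irreducible, of no larger $\prec_T$-size, and still makes the clause $\beta$-false, non-applicability of Conflict gives that no clause of $N\cup U_0$ has \emph{any} grounding that is $\beta$-false in the trail of $S_{-1}$. Since $\Gamma_1$ is a prefix of that trail and $\beta$-falsity is monotone under trail extension, no clause of $N\cup U_0$ has a grounding that is $\beta$-false in $\Gamma_1$.

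The main obstacle I anticipate is the bookkeeping in part (ii): pinning down $S_{-1}$, verifying $\Gamma_1\subseteq$ its trail in all sub-cases (including the ``conflict immediately after Decide'' case where no Skip is performed), and running the mild induction on the number of Backtrack applications together with the facts that no clause is learned between two Backtracks and that the trail grows monotonically within a phase. It may be cleaner to fold everything into a single induction on the run establishing the invariant ``in every $\top$-state of a regular run, either Conflict is applicable or no clause of $N\cup U$ is $\beta$-false in the trail'', and then read the lemma off from the Backtrack case of that induction, still using the minimality clause of Backtrack for the freshly learned clause; the degenerate-grounding corner of part (i) would then be the only genuinely fiddly remaining point.
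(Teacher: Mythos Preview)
Your proposal is correct and follows essentially the same route as the paper: an induction on the number of Backtrack applications, splitting into (i) the freshly learned clause, handled via the minimality clause in the Backtrack side condition, and (ii) the old clauses in $N\cup U_0$, handled by tracing back to the Propagate/Decide step immediately preceding Conflict and invoking the precedence of Conflict in a regular run. The paper's proof is terser and does not explicitly flag the irreducibility-of-$\sigma$ subtlety or the degenerate empty-trail grounding (it simply reads (i) off the Backtrack rule directly), but the structure of the argument is the same.
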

\begin{proof}

  We prove this by induction. For the induction start assume the state $(\Gamma';N;U\cup\{D\};\beta;i;\top)$ after the first application of $Backtrack$ in a regular run, where $D$ is the learned clause. Since $Backtrack$ was not applied before, the previous (first) application of $\mathit{Conflict}$ in a state $(\Gamma,K;N;U;\beta;k;\top)$ was immediately preceded by an application of $Propagate$ or $Decide$. By the definition of a regular run there is no clause $C\in N$ with substitution $\sigma$ such that $C\sigma$ is $\beta\mhyphen\mathit{false}$ in $\Gamma$. Otherwise $\mathit{Conflict}$ would have been applied earlier.
  By the definition of $Backtrack$, there exists no substition $\tau$ such that $D\tau$ is $\beta\mhyphen\mathit{false}$ in $\Gamma'$. Since there existed such a substitution before the application of $Backtrack$, $\Gamma'$ has to be a prefix of $\Gamma$ and $\Gamma \not=\Gamma'$. Thus there exists no clause $C\in N\cup U\cup\{D\}$ and a grounding substitution $\delta$ such that $C\delta$ is $\beta\mhyphen\mathit{false}$ in $\Gamma'$.

  For the induction step assume the state $(\Gamma';N;U\cup\{D\};\beta;i;\top)$ after $n$th application of $Backtrack$. By i.h. the previous application of $Backtrack$ did not produce any $\beta\mhyphen\mathit{false}$ clause. It follows that the the previous application of $\mathit{Conflict}$ in a state $(\Gamma,K;N;U;\beta;k;\top)$ was immediately preceded by an application of $Propagate$ or $Decide$.
  By the definition of a regular run there is no clause $C\in N\cup U$ with substitution $\sigma$ such that $C\sigma$ is $\beta\mhyphen\mathit{false}$ in $\Gamma$. Otherwise $\mathit{Conflict}$ would have been applied earlier.
  By the definition of $Backtrack$, there exists no substition $\tau$ such that $D\tau$ is $\beta\mhyphen\mathit{false}$ in $\Gamma'$. Since there existed such a substitution before the application of $Backtrack$, $\Gamma'$ has to be a prefix of $\Gamma$ and $\Gamma \not=\Gamma'$. Thus there exists no clause $C\in N\cup U\cup\{D\}$ and a grounding substitution $\delta$ such that $C\delta$ is $\beta\mhyphen\mathit{false}$ in $\Gamma'$.
\end{proof}

\begin{corollary}
  \label{confl_always_after_propdecide}
  If $\mathit{Conflict}$ is applied in a regular run, then it is immediately preceded by an application of $Propagate$ or $Decide$, except if it is applied to the initial state.
\end{corollary}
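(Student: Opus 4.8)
The plan is to prove the statement by a short case analysis on the rule applied immediately before the $\mathit{Conflict}$ step in question. The first observation to establish is that $\mathit{Conflict}$ can only be applied to a state whose status component is $\top$. Hence the state to which it is applied is either the initial state $(\epsilon;N;\emptyset;\beta;0;\top)$ --- the exception mentioned in the statement --- or it is the output of the immediately preceding rule application $R$, whose resulting status is therefore $\top$. I would then go through the rules of the calculus and note that only $\mathit{Propagate}$, $\mathit{Decide}$, $\mathit{Backtrack}$, $\mathit{Grow}$ (and, in the extension of the calculus by $\mathit{Restart}$, also $\mathit{Restart}$) produce a state with status $\top$; in contrast $\mathit{Conflict}$ itself, $\mathit{Skip}$, $\mathit{Explore\mhyphen Refutation}$, $\mathit{Factorize}$ and $\mathit{Equality\mhyphen Resolution}$ all leave the status equal to $\bot$ or to a closure $C\cdotp\sigma$, so none of them can be the predecessor of $\mathit{Conflict}$. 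This reduces the claim to excluding $\mathit{Backtrack}$, $\mathit{Grow}$ and $\mathit{Restart}$.

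For $\mathit{Backtrack}$, I would invoke Lemma~\ref{no_false_clause_after_backtrack}: in the state $(\Gamma;N;U;\beta;k;\top)$ immediately resulting from $\mathit{Backtrack}$ in a regular run there is no clause $C\in N\cup U$ together with a substitution $\sigma$ such that $C\sigma$ is $\beta\mhyphen\mathit{false}$ in $\Gamma$. Since the precondition of $\mathit{Conflict}$ requires exactly such a clause, $\mathit{Conflict}$ is not applicable right after $\mathit{Backtrack}$. For $\mathit{Grow}$ and $\mathit{Restart}$ the resulting trail is empty, so a clause could be $\beta'\mhyphen\mathit{false}$ in $\epsilon$ only if each of its grounded literals is false already in the empty trail, \ie of the form $s\not\approx s$; but then the clause has an unsatisfiable ground instance, so by the precedence of $\mathit{Conflict}$ over all other rules in a regular run it would already have triggered an application of $\mathit{Conflict}$ before $\mathit{Grow}$ or $\mathit{Restart}$ could ever be applied, contradicting that the run reached the post-$\mathit{Grow}$/$\mathit{Restart}$ state. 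Hence $R\in\{\mathit{Propagate},\mathit{Decide}\}$, which is what we want.

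I expect the only genuine obstacle to be the $\mathit{Backtrack}$ case, and this is already fully settled by Lemma~\ref{no_false_clause_after_backtrack}; everything else is a mechanical inspection of which rules produce a $\top$-status state together with the easy remark about the empty trail left behind by $\mathit{Grow}$ and $\mathit{Restart}$.
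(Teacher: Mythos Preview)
Your argument is essentially what the paper intends: the corollary is stated immediately after Lemma~\ref{no_false_clause_after_backtrack} without its own proof, so the $\mathit{Backtrack}$ case---handled exactly by that lemma---is the substantive content, and the remaining case analysis on which rules yield a $\top$-status state is routine.

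One small gap concerns your treatment of $\mathit{Grow}$. After $\mathit{Grow}$ the bound increases from $\beta$ to some $\beta'\succ_T\beta$, so a clause instance $D'\sigma$ whose literals are all trivially false inequations $s_i\not\approx s_i$ with $\beta\preceq_T s_i\not\approx s_i\prec_T\beta'$ is $\beta'\mhyphen\mathit{false}$ in the empty post-$\mathit{Grow}$ trail, yet need not have been $\beta\mhyphen\mathit{false}$ at any earlier point of the run; the precedence of $\mathit{Conflict}$ therefore does not force such a conflict to have been detected before $\mathit{Grow}$ was applied. (For $\mathit{Restart}$ your argument is fine, since $\beta$ is unchanged and a clause trivially false in $\epsilon$ is trivially false in every trail.) This does not actually break anything the paper needs: in that situation $D'\sigma$ is of level $0$, so $\mathit{Conflict}$ sets $D=\bot$ immediately and no clause learning follows; moreover the post-$\mathit{Grow}$ state $(\epsilon;N;U;\beta';0;\top)$ is structurally an initial state and is naturally absorbed into the stated exception. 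The paper itself does not address this edge case and uses the corollary only in the proof of Lemma~\ref{appr_4:non-red}, where $\mathit{Conflict}$ is followed by conflict resolution and $\mathit{Backtrack}$ rather than by deriving $\bot$.
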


\begin{lemma}
  \label{no_decision_compl_clause}
  Assume a state $(\Gamma;N;U;\beta;k;D)$ resulting from a regular run. Then there exists no clause $(C\lor L)\in N\cup U$ and a grounding substitution $\sigma$ such that $(C\lor L)\sigma$ is $\beta\mhyphen\mathit{false}$ in $\Gamma$, $comp(L\sigma)$ is a decision literal of level $i$ in $\Gamma$ and $C\sigma$ is of level $j<i$.
\end{lemma}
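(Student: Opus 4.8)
The plan is to prove the claim by induction on the number of rule applications of the regular run, i.e.\ to show that the statement is an invariant of all reachable states. In the initial state the trail is empty, so there are no decision literals and the claim is vacuous. For the inductive step we inspect the applied rule. Conflict, Explore-Refutation, Factorize and Equality-Resolution change neither $\Gamma$ nor $N\cup U$, so the invariant is kept. Grow and Restart reset the trail to $\epsilon$, after which the claim is again vacuous. For Backtrack, Lemma~\ref{no_false_clause_after_backtrack} gives that in the resulting state \emph{no} clause of $N\cup U$ is $\beta$-false in the new trail, so there is nothing to check. For Skip the trail shrinks from $\Gamma,K,L$ to $\Gamma,K$ with $N\cup U$ unchanged; since $\Gamma,K$ is a prefix of $\Gamma,K,L$, monotonicity of $\beta$-falseness, the fact that whether a trail literal is a decision literal depends only on the prefix up to and including it, and Lemma~\ref{lem:lvlfix} (a literal, hence a clause, keeps its level under trail extension) together imply that a counterexample for $\Gamma,K$ would also be a counterexample for $\Gamma,K,L$, contradicting the induction hypothesis. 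So the only nontrivial cases are Propagate and Decide, where the trail grows from $\Gamma$ (of maximal level $k$) to $\Gamma,P$ with $P$ of level $\ell$ ($\ell=k$ for Propagate, $\ell=k+1$ for Decide).

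Assume a counterexample for $\Gamma,P$: a clause $C\lor L\in N\cup U$ and a grounding $\sigma$ with $(C\lor L)\sigma$ $\beta$-false in $\Gamma,P$, with $comp(L\sigma)$ a decision literal of level $i$ in $\Gamma,P$ and $C\sigma$ of level $j<i$. Suppose first that $P\neq comp(L\sigma)$ (this is automatic for Propagate, since $P$ is then a propagated literal while $comp(L\sigma)$ is a decision literal). Then $comp(L\sigma)\in\Gamma$, so $L\sigma$ is already false in $\Gamma$. If $(C\lor L)\sigma$ were not $\beta$-false in $\Gamma$, some literal $K$ of it would be $\beta$-undefined in $\Gamma$ but $\beta$-false in $\Gamma,P$; by the previous sentence $K$ lies in $C\sigma$. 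Every defining core of $K$ in $\Gamma,P$ must contain $P$, which is $\prec_{\Gamma,P}$-maximal, so $P$ is the defining literal of $K$ and $K$ has level $\ell\geq k\geq i>j$ (using $i\leq k$ because $comp(L\sigma)\in\Gamma$); hence $C\sigma$ has level $\geq\ell>j$, contradicting that it has level $j$. So $(C\lor L)\sigma$ is $\beta$-false in $\Gamma$, and then, by Lemma~\ref{lem:lvlfix} and uniqueness of levels, the whole counterexample transfers verbatim to $\Gamma$, contradicting the induction hypothesis. This disposes of Propagate and of the subcase of Decide with $P\neq comp(L\sigma)$.

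It remains to treat Decide with $P=comp(L\sigma)$, so $i=k+1$. Since $P=comp(L\sigma)$ is on the trail, it is irreducible by $\conv(\Gamma)$ and was $\beta$-undefined in $\Gamma$ (the preconditions of Decide), whence $L\sigma$ is irreducible by $\conv(\Gamma)$ and $\beta$-undefined in $\Gamma$; and as $C\sigma$ has level $j\leq k$ it cannot be defined using the level-$(k{+}1)$ literal $P$, so, exactly as above, $C\sigma$ is $\beta$-false already in $\Gamma$ and $(C\lor L)\sigma\prec_T\beta$. Replacing $\sigma$ by its $\conv(\Gamma)$-normal form $\sigma'$ leaves $L\sigma$ unchanged (its subterms are irreducible) and preserves all truth values and $\prec_T$-comparisons of the literals of $(C\lor L)\sigma$, so $\sigma'$ is an irreducible grounding with $C\sigma'$ $\beta$-false in $\Gamma$ and $L\sigma'=L\sigma$ $\beta$-undefined in $\Gamma$; moreover no literal of $C$ can equal $L$ under $\sigma'$, as it would be simultaneously $\beta$-false and $\beta$-undefined. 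Thus Propagate is applicable on $\Gamma$ with the clause $C\lor L\in N\cup U$ and grounding $\sigma'$ (no literals factored together), adding the annotated literal $L\sigma'$ at level $k$ with justification $C\lor L$. But condition~\ref{appr_4:def:regular-run}.3 of a regular run then forbids Decide from adding $comp(L\sigma')=comp(L\sigma)=P$ at level $k+1$ on $\Gamma$ — precisely the step under consideration. Contradiction, which completes the induction.

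The main obstacle is this last case: one must recover a genuine Propagate application that was available exactly when the decision was made, so that regularity condition~\ref{appr_4:def:regular-run}.3 applies. This requires the level bookkeeping that forces $C\sigma$ to be $\beta$-false already before the decision (because its level is too small to depend on the newly decided literal), and the normalization of the grounding, needed because Propagate insists on an irreducible substitution whereas the counterexample only supplies an arbitrary one.
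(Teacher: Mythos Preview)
Your proof is correct and follows essentially the same approach as the paper: induction on the length of the run, with the Propagate and Decide cases handled by a level-bookkeeping argument that pushes a putative counterexample back to the previous trail (contradicting the induction hypothesis) or, in the critical Decide subcase $P=comp(L\sigma)$, exhibits an available propagation that regularity condition~\ref{appr_4:def:regular-run}.3 would have blocked. Your treatment is in fact more careful than the paper's: you make explicit why the newly-defined literal must have level $\ell$ (every core contains $P$), and you normalize $\sigma$ to meet Propagate's irreducibility precondition and verify $L\sigma'=L\sigma$, points the paper's proof passes over in its ``since otherwise Propagate must be applied''.
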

\begin{proof}
  Proof is by induction. Assume the initial state $(\epsilon;N;\emptyset;\beta;0;\top)$. Then any clause $C\in N$ is undefined in $\Gamma$. Then this trivially holds.

  Now for the induction step assume a state $(\Gamma;N;U;\beta;k;D)$. Only $Propagate$, $Decide$, $Backtrack$ and $Skip$ change the trail and only $Backtrack$ adds a new literal to $U$. By i.h. there exists no clause with the above properties in $N\cup U$.

  Now assume that $Propagate$ is applied. Then a literal $L$ is added to the trail. Let $C_1\lor L_1,...,C_n\lor L_n$ be the ground clause instances that get $\beta\mhyphen\mathit{false}$ in $\Gamma$ by the application such that $L$ is the defining literal of $L_1,...,L_n$. Then $L_i$ is of level $k$ for $1\leq i\leq n$. Thus $L_i\not=comp(K)$ for the decision literal $K\in \Gamma$ of level $k$. Thus $C_1\lor L_1,...,C_n\lor L_n$ do not have the above properties.

  Now assume that $Decide$ is applied. Then a literal $L$ of level $k+1$ is added to the trail. Let $C_1\lor L_1,...,C_n\lor L_n$ be the (ground) clause instances that get $\beta\mhyphen\mathit{false}$ in $\Gamma$ by the application such that $L$ is the defining literal of $L_1,...,L_n$. By the definition of a regular run for all $L_i$ with $1\leq i\leq n$ it holds that $L_i\not=comp(L)$ or there exists another literal $K_i\in C_i$ such that $K_i$ is of level $k+1$ and $L_i\not=K_i$, since otherwise $Propagate$ must be applied.
  Thus $C_1\lor L_1,...,C_n\lor L_n$ do not have the above properties.

  Now assume that $Skip$ is applied. Then there are no new clauses that get $\beta\mhyphen\mathit{false}$ in $\Gamma$. Thus this trivially holds.

  Now assume that $Backtrack$ is applied. Then a new clause $D\lor L$ is added to $U$ and $\Gamma=\Gamma',K,\Gamma''$ such that
   there is a grounding substitution $\tau$ with $(D\lor L)\tau$ $\beta\mhyphen\mathit{false}$ in $\Gamma',K$,
   there is no grounding substitution $\delta$ with $(D\lor L)\delta$ $\beta\mhyphen\mathit{false}$ in $\Gamma'$. $\Gamma'$ is the trail resulting from the application of $Backtrack$.
   By lemma~\ref{no_false_clause_after_backtrack}, after application of $Backtrack$ there exists no clause $C\in N\cup U$ and a substitution $\sigma$ such that $C\sigma$ is $\beta\mhyphen\mathit{false}$ in $\Gamma'$. Thus there exists no clause with the above properties.
\end{proof}

\subsubsection{Proof of Lemma~\ref{appr_4:non-red}}
Let $N$ be a clause set. The clauses learned during a regular run in \SCLEQ\ are not redundant with respect to $\prec_{\Gamma^*}$ and $N \cup U$.
For the trail only
non-redundant clauses need to be considered.

We first prove that learned clauses are non-redundant and then that only
non-redundant clauses need to be considered, Lemma~\ref{lem:trailnonredcl}, below.
\begin{proof}
 Consider the following fragment of a derivation learning a clause:
 \begin{flalign}
 &\Rightarrow^{\mathit{Conflict}}_{\SCLEQ} &(\Gamma; N; U; \beta;k; D\,\cdotp \sigma) \nonumber\\
 &\Rightarrow^{\{Explore\mhyphen Refutation, Skip, Eq\mhyphen Res,Factorize\}^*}_{\SCLEQ} &(\Gamma'; N; U; \beta;l; C\,\cdotp \sigma) \nonumber\\
 &\Rightarrow^{Backtrack}_{\SCLEQ} &(\Gamma''; N; U \cup \{C\}; \beta; k'; \top) \nonumber
 \end{flalign}
 Assume there are clauses in $N' \subseteq (gnd(N\cup U)^{\preceq_{\Gamma^*} C\sigma})$ such that $N' \models C\sigma$. Since $N'\preceq_{\Gamma^*} C\sigma$ and $C\sigma$ is $\beta\mhyphen\mathit{defined}$ in $\Gamma$, there is no $\beta\mhyphen\mathit{undefined}$ literal in $N'$, as all $\beta\mhyphen\mathit{undefined}$ literals are greater than all $\beta\mhyphen\mathit{defined}$ literals. If $\Gamma \models N'$ then $\Gamma \models C\sigma$, a contradiction. Thus there is a $C'\in N'$ with $C'\preceq_{\Gamma^*} C\sigma$ such that $C'$ is $\beta\mhyphen\mathit{false}$ in $\Gamma$. Now we have two cases:
 \begin{enumerate}
  \item $\Gamma' \not = \Gamma$. Then $\Gamma = \Gamma',\Delta$.
  Thus at least one Skip was applied, so $C\sigma$ does not contain a literal that is $\beta\mhyphen\mathit{undefined}$ without the rightmost literal of $\Gamma$, therefore $C\sigma\not=D\sigma$. Suppose that this is not the case, so $C\sigma=D\sigma$. Then $D\sigma$ is $\beta\mhyphen\mathit{false}$ in $\Gamma'$.
  But since $Backtrack$ does not produce any $\beta\mhyphen\mathit{false}$ clauses by lemma \ref{no_false_clause_after_backtrack}, $\mathit{Conflict}$ could have been applied earlier on $D\sigma$ contradicting a regular run.
  Since $C' \preceq_{\Gamma^*} C\sigma$ we have that $C'\not=D\sigma$ as well.
  Thus, again since $Backtrack$ does not produce any $\beta\mhyphen\mathit{false}$ clauses by lemma \ref{no_false_clause_after_backtrack}, at a previous point in the derivation there must have been a state such that $C'$ was $\beta\mhyphen\mathit{false}$ under the current trail and $\mathit{Conflict}$ was applicable but not applied, a contradiction to the definition of a regular run.
  \item $\Gamma' = \Gamma$, then conflict was applied immediately after an application of $\mathit{Decide}$ by corollary \ref{confl_always_after_propdecide} and the definition of a regular run. Thus $\Gamma= \Delta,\allowbreak K^{(k-1):D'\cdotp\delta},\allowbreak L^{k:D''\cdotp\tau}$. $C'$ does not have any $\beta\mhyphen\mathit{undefined}$ literals.
  Suppose that $C'$ has no literals of level $k$. Then all literals in $C'$ are of level $i<k$. Since $C'$ is $\beta\mhyphen\mathit{false}$ in $\Gamma$, $C'$ is $\beta\mhyphen\mathit{false}$ in $\Delta,K$ as well, since it does not have any literals of level $k$. Thus, again since $Backtrack$ does not produce any $\beta\mhyphen\mathit{false}$ clauses by lemma \ref{no_false_clause_after_backtrack}, at a previous point in the derivation there must have been a state such that $C'$ was $\beta\mhyphen\mathit{false}$ under the current trail and $\mathit{Conflict}$ was applicable but not applied, a contradiction to the definition of a regular run.\\
  Since $C'\preceq_{\Gamma^*} C\sigma$, it may have at most one literal of level $k$, namely $comp(L)$, since $comp(L) \in C\sigma$ by definition of a regular run, since Skip was not applied, and there exists only $L$ such that $L \prec_{\Gamma^*} comp(L)$ and $L$ is of level $k$. But $L$ is $\beta\mhyphen\mathit{true}$ in $\Gamma$. Thus $L\not\in C'$ has to hold.\\
  Now suppose that $C'$ has one literal of level $k$. Thus $C' = C''\lor comp(L)$, where $C''$ is $\beta\mhyphen\mathit{false}$ in $\Delta,K$.
  But by lemma \ref{no_decision_compl_clause} there does not exist such a clause. Contradiction.\\
 \end{enumerate}
\end{proof}

\subsubsection{Auxiliary Lemma for the Proof of Lemma \ref{always_rule_applicable_D=top}}
\begin{lemma}
  \label{rewritable_below_var_lvl}
Assume a clause $L_1\lor...\lor L_m$, a trail $\Gamma$ resulting from a regular run starting from the initial state, and a reducible (by $conv(\Gamma)$) grounding substitution $\sigma$, such that $L_i\sigma$ is $\beta\mhyphen\mathit{false}$ ($\beta\mhyphen\mathit{true}$ or $\beta\mhyphen\mathit{undefined}$) in $\Gamma$ and $L_i\sigma\prec_T \beta$ for $1\leq i \leq m$.
Then there exists a substitution $\sigma'$ that is irreducible by $conv(\Gamma)$ such that $L_i\sigma'$ is $\beta\mhyphen\mathit{false}$ ($\beta\mhyphen\mathit{true}$ or $\beta\mhyphen\mathit{undefined}$) in $\Gamma$, $L_i\sigma'\prec_T \beta$ and $L_i\sigma{\downarrow_{conv(\Gamma)}} = L_i\sigma'{\downarrow_{conv(\Gamma)}}$.
\end{lemma}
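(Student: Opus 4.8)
The plan is to exhibit $\sigma'$ explicitly as the pointwise $\conv(\Gamma)$-normalization of $\sigma$: set $x\sigma' := (x\sigma){\downarrow_{\conv(\Gamma)}}$ for every variable $x$ (well defined since $\conv(\Gamma)$ is convergent). First I would dispatch the cheap properties. The substitution $\sigma'$ is again grounding, because a normal form of a ground term is ground; it is irreducible by $\conv(\Gamma)$, because every term in its codomain is by construction a $\conv(\Gamma)$-normal form; and since a lone variable is trivially irreducible we have $\dom(\sigma')\subseteq\dom(\sigma)$, so $\sigma'$ is a legitimate substitution. Note that it is only $\sigma'$ (not $L_i\sigma'$) that is claimed irreducible, so nothing more is needed on this point.

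Next I would relate $L_i\sigma$ and $L_i\sigma'$ by rewriting. Each maximal subterm of $L_i\sigma$ sitting at a variable position of $L_i$ is of the form $x\sigma$, and $x\sigma \to^*_{\conv(\Gamma)} x\sigma'$; rewriting inside those subterms gives $L_i\sigma \to^*_{\conv(\Gamma)} L_i\sigma'$ for each $i$. Convergence of $\conv(\Gamma)$ then yields $L_i\sigma{\downarrow_{\conv(\Gamma)}} = L_i\sigma'{\downarrow_{\conv(\Gamma)}}$, which is the last equality demanded by the statement. For the ordering condition I would use that $\conv(\Gamma)$ is built from the positive equations of $\Gamma$ with respect to $\prec_T$, so every rule $l\to r\in\conv(\Gamma)$ satisfies $r\prec_T l$; as $\prec_T$ is a rewrite ordering, $\to_{\conv(\Gamma)}$ is contained in $\prec_T$ (lifted to literals via the multiset measure), hence $L_i\sigma'\preceq_T L_i\sigma \prec_T \beta$, so $L_i\sigma'\prec_T\beta$.

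For the preservation of the $\beta$-status I would argue semantically: since $\Gamma\models (l\approx r)$ for every rule $l\to r\in\conv(\Gamma)$, we have $\Gamma\models s\approx s{\downarrow_{\conv(\Gamma)}}$ for every ground term $s$, and therefore $\Gamma\models L_i\sigma \leftrightarrow L_i\sigma'$. Consequently $\Gamma\models L_i\sigma'$ iff $\Gamma\models L_i\sigma$, and $\Gamma\models\comp(L_i\sigma')$ iff $\Gamma\models\comp(L_i\sigma)$, so being \emph{true}, \emph{false}, or \emph{undefined} in $\Gamma$ carries over from $L_i\sigma$ to $L_i\sigma'$. Together with $L_i\sigma'\prec_T\beta$ from the previous paragraph — which rules out the ``$\beta\prec_T L$'' branch of $\beta$-undefinedness — the $\beta$-false, resp. $\beta$-true, resp. $\beta$-undefined status of each $L_i\sigma$ transfers to $L_i\sigma'$, as required.

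I expect the only mild subtleties to be (i) making precise that rewriting ``inside the substitution'' lifts to a rewrite derivation on the instance $L_i\sigma$, which is immediate from the definition of substitution application together with closure of $\to_{\conv(\Gamma)}$ under contexts, and (ii) that $\conv(\Gamma)$, being a completion rather than literally the oriented equations of $\Gamma$, still has every rule entailed by $\Gamma$ and oriented by $\prec_T$ — which is exactly how $\conv(\Gamma)$ was defined in the preliminaries. The regular-run hypothesis on $\Gamma$ plays no real role beyond ensuring $\Gamma$ is a consistent trail so that $\conv(\Gamma)$ is well defined. So there is no genuine hard step here; the content is the choice of the normalizing $\sigma'$ and careful bookkeeping of the three invariants (irreducibility, identical normal forms, and $\beta$-status).
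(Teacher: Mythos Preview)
Your proposal is correct and follows essentially the same approach as the paper: define $\sigma'$ by normalizing each $x\sigma$ to $(x\sigma){\downarrow_{\conv(\Gamma)}}$, observe that $L_i\sigma \to^*_{\conv(\Gamma)} L_i\sigma'$, and use convergence together with the fact that $\Gamma$ entails each rewrite step to transfer the $\beta$-status and obtain identical normal forms. Your write-up is in fact more explicit than the paper's (particularly on the ordering bound and on ruling out the ``$\beta\prec_T L$'' branch of $\beta$-undefinedness), but the underlying argument is the same.
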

\begin{proof}
Let $L_1\lor...\lor L_m$ be a clause, $\Gamma$ a trail resulting from a regular run.
Let $\sigma := \{x_1\rightarrow t_1, ..., x_n\rightarrow t_n\}$. Now set $\sigma' := \{x_1\rightarrow ({t_1}{\downarrow_{conv(\Gamma)}}), ..., x_n\rightarrow ({t_n}{\downarrow_{conv(\Gamma)}})\}$. Obviously $\sigma'$ is irreducible by $conv(\Gamma)$ and $L_i\sigma'\prec_T \beta$ for all $1\leq i \leq m$.
By definition, $conv(\Gamma)$ is a confluent and terminating rewrite system.
Since $\Gamma$ is consistent, ${t_j}{\downarrow_{conv(\Gamma)}}\approx t_j$ is $\beta\mhyphen\mathit{true}$ in $\Gamma$ for $1\leq j \leq n$. Thus there exists a chain such that $L_i\sigma\rightarrow_{conv(\Gamma)}...\rightarrow_{conv(\Gamma)}L_i\sigma'$ and $L_i\sigma'$ is $\beta\mhyphen\mathit{false}$ ($\beta\mhyphen\mathit{true}$ or $\beta\mhyphen\mathit{undefined}$) in $\Gamma$. Now there also exists a chain $L_i\sigma\rightarrow_{conv(\Gamma)}...\rightarrow_{conv(\Gamma)}L_i\sigma{\downarrow_{conv(\Gamma)}}$.
By definition of convergence there must exist a chain $L_i\sigma'\rightarrow_{conv(\Gamma)}...\rightarrow_{conv(\Gamma)}L_i\sigma{\downarrow_{conv(\Gamma)}}$. Thus $L_i\sigma{\downarrow_{conv(\Gamma)}} = L_i\sigma'{\downarrow_{conv(\Gamma)}}$.
\end{proof}

\subsubsection{Auxiliary Lemma for the Proof of Lemmas~\ref{form_of_stuck_states} and~\ref{theo:soundcomp:refutcomp}}
\begin{lemma}
  \label{always_rule_applicable_D=top}
  Suppose a sound state $(\Gamma; N;U;\beta;k; \top)$ resulting from a regular run. If there exists a $C\in N\cup U$ and a grounding substitution $\sigma$ such that $C\sigma$ is $\beta\mhyphen\mathit{false}$ in $\Gamma$, then $\mathit{Conflict}$ is applicable.
  Otherwise, If there exists a $C\in N\cup U$ and a grounding substitution $\sigma$ such that $C\sigma\prec_T \beta$ and there exists at least one $L\in C$ such that $L\sigma$ is $\beta\mhyphen\mathit{undefined}$,
  then one of the rules $Propagate$ or $Decide$ is applicable and a $\beta\mhyphen\mathit{undefined}$ literal $K\in D$, where $D\in gnd_{\prec_T}\beta(N\cup U)$ is $\beta\mhyphen \mathit{defined}$ after application.
\end{lemma}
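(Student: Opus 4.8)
The plan is to prove the two assertions separately; in each case the real content is just to produce a grounding substitution that is irreducible by $conv(\Gamma)$ and then to check the premises of the relevant rule mechanically, with Lemma~\ref{rewritable_below_var_lvl} supplying the irreducibility.

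\emph{Conflict.} Given $C\in N\cup U$ and a grounding $\sigma$ with $C\sigma$ $\beta$-false in $\Gamma$, every literal of $C\sigma$ is $\prec_T\beta$ and false. By Lemma~\ref{rewritable_below_var_lvl} applied to $C$ (all of whose literals are $\beta$-false under $\sigma$) — or trivially, if $\sigma$ is already irreducible — there is a grounding $\sigma'$ irreducible by $conv(\Gamma)$ with $C\sigma'$ again $\beta$-false. Then all premises of Conflict hold for $C$ and $\sigma'$, so Conflict applies (with $D=\bot$ if $C\sigma'$ is of level $0$ and $D=C\cdot\sigma'$ otherwise).

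\emph{Propagate or Decide.} Fix $C\in N\cup U$, a grounding $\sigma$ with $C\sigma\prec_T\beta$, and $L\in C$ with $L\sigma$ $\beta$-undefined. Since $C\sigma\prec_T\beta$, every literal of $C\sigma$ is $\prec_T\beta$, hence $\beta$-false, $\beta$-true, or $\beta$-undefined; by Lemma~\ref{rewritable_below_var_lvl} (trivially if $\sigma$ is already irreducible) I may assume $\sigma$ irreducible by $conv(\Gamma)$, with the $\beta$-status of every literal preserved, so $L\sigma$ is still $\beta$-undefined. Write $C = C_0\lor L$ where $C_0$ is $C$ with one occurrence of $L$ deleted, and distinguish two cases. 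If $C_0\sigma$ is not $\beta$-false — equivalently, some literal of $C_0$ has a $\beta$-true or $\beta$-undefined $\sigma$-instance — then $C_0\sigma$ is $\beta$-undefined or $\beta$-true and the premises of Decide hold for $C=C_0\lor L$ and $\sigma$: $L\sigma$ is $\beta$-undefined, $(C_0\lor L)\sigma = C\sigma\prec_T\beta$, $\sigma$ is irreducible, and a reduction chain application from $\Gamma$ to $L\sigma$ exists since $conv(\Gamma)$ is convergent; hence Decide applies. If $C_0\sigma$ is $\beta$-false, then no literal of $C_0$ has $\sigma$-instance $L\sigma$ (it would be $\beta$-undefined), so Propagate applies with $C_0$ as the false part, empty $C_1$, and $\mu$ the identity mgu of $L$: $\Gamma\models\neg C_0\sigma$, $(C_0\lor L)\mu\sigma = C\sigma\prec_T\beta$, $\sigma$ is irreducible, and the reduction chain application exists as before. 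In either case the rule appends $(L\sigma){\downarrow_{conv(\Gamma)}}$ to the trail; this is legitimate since $(L\sigma){\downarrow_{conv(\Gamma)}}$ is $\prec_T\beta$ and $conv(\Gamma)$-equivalent to the $\beta$-undefined literal $L\sigma$, hence itself $\beta$-undefined, and is irreducible by $conv(\Gamma)$. The extended trail then entails $L\sigma$, because it contains $(L\sigma){\downarrow_{conv(\Gamma)}}$ together with the positive equations of $\Gamma$, from which $L\sigma\leftrightarrow(L\sigma){\downarrow_{conv(\Gamma)}}$ follows; so $L\sigma$, which occurs in $C\sigma\in gnd_{\prec_T\beta}(N\cup U)$, is $\beta$-defined after the application, giving the ``moreover'' part.

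I expect the main obstacle to be the precise handling of Decide's premise ``$C_0\sigma$ is $\beta$-undefined or $\beta$-true'': one has to recognise it as the exact complement of ``$C_0\sigma$ is $\beta$-false'', so that $\beta$-false literals may freely sit in $C_0$ provided not all literals of $C_0$ are $\beta$-false; this is what makes the dichotomy ``$C_0\sigma$ $\beta$-false or not'' exhaustive and determines when Decide rather than Propagate is the applicable rule. A secondary point requiring care is the interaction with the reduction chain: one must verify that the literal actually pushed onto the trail is the $conv(\Gamma)$-normal form of $L\sigma$, that it inherits $\beta$-undefinedness (normalization neither changes truth value nor raises a term above $\beta$), and that adding it $\beta$-defines $L\sigma$ itself, which is the literal witnessing the final clause of the statement.
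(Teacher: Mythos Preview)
Your proof follows essentially the same route as the paper's and is correct for the Conflict part and for the Propagate branch of the second part. There is, however, one genuine gap in the Decide branch.

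You verify the syntactic preconditions of Decide and conclude that Decide is applicable. But the state is assumed to come from a \emph{regular} run, and item~3 of Definition~\ref{appr_4:def:regular-run} forbids deciding a literal whose complement could currently be propagated. Concretely, if your normalized decision literal is $M=(L\sigma){\downarrow_{conv(\Gamma)}}$, regularity blocks deciding $M$ whenever $\operatorname{comp}(M)$ is propagatable from some clause $C'\in N\cup U$ with some grounding $\delta$. In that situation your argument for ``Decide applies'' does not go through within a regular run. The paper's proof handles exactly this case: if Decide is barred by regularity, it is precisely because Propagate \emph{is} applicable (to $C'$ and $\delta$), so one still concludes that one of Propagate or Decide applies, and the literal propagated from $C'\delta\in \gnd_{\prec_T\beta}(N\cup U)$ becomes $\beta$-defined afterwards. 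You should add this fallback; without it the dichotomy ``$C_0\sigma$ $\beta$-false $\Rightarrow$ Propagate; otherwise $\Rightarrow$ Decide'' is not exhaustive for regular runs.

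A minor stylistic difference: the paper gathers all duplicate occurrences of the $\sigma$-instance $L\sigma$ into $C_1$ and then tests whether the remaining $C_0\sigma$ is $\beta$-false, whereas you remove a single occurrence of $L$ and use an empty $C_1$. Your version is fine, since in your $\beta$-false branch you correctly observe that $C_0$ cannot contain any literal with $\sigma$-instance $L\sigma$; the case where such duplicates exist then falls into your Decide branch, which is harmless once the regularity fallback above is added.
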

\begin{proof}
  Let $(\Gamma; N;U;\beta;k; \top)$ be a state resulting from a regular run.
  Suppose there exists a $C\in N\cup U$ and a grounding $\sigma$ such that $C\sigma$ is $\beta\mhyphen\mathit{false}$ in $\Gamma$, then by lemma \ref{rewritable_below_var_lvl} there exists an irreducible substitution $\sigma'$ such that $C\sigma'$ is $\beta\mhyphen\mathit{false}$. Thus $\mathit{Conflict}$ is applicable.
  Now suppose there exists a $C\in N\cup U$ and a grounding substitution $\sigma$ such that $C\sigma\prec_T \beta$ and there exists at least one $L\in C$ such that $L\sigma$ is $\beta\mhyphen\mathit{undefined}$.
  By lemma \ref{rewritable_below_var_lvl} there exists a irreducible substitution $\sigma'$ such that $L\sigma'$ is $\beta\mhyphen\mathit{undefined}$.
  Now assume that $C = C_0\lor C_1\lor L$ such that $C_1\sigma' = L\sigma'\lor...\lor L\sigma'$ and $C_0\sigma'$ is $\beta\mhyphen\mathit{false}$ in $\Gamma$. Then $\mathit{Propagate}$ is applicable. Let $C_1 = L_1,...,L_n$ and $\mu = mgu(L_1,...,L_n,L)$. Now let $[I_1,...,I_m]$ be the reduction chain application from $\Gamma$ to $L\sigma'^{k:(L\lor C_0)\mu\cdotp\sigma'}$.
  Let $I_m = (s_m\# t_m\cdotp\sigma_m, (s_m\# t_m\lor C_m)\cdotp\sigma_m, I_j, I_k, p_m)$.
  Then $L\sigma'{\downarrow_{conv(\Gamma)}} = s_m\# t_m\sigma_m$ by definition of a reduction chain application. Thus $L\sigma'$ is $\beta\mhyphen\mathit{true}$ in $\Gamma,s_m\# t_m\sigma_m$.
  Since $L\sigma{\downarrow_{conv(\Gamma)}} = L\sigma'{\downarrow_{conv(\Gamma)}}$ by lemma \ref{rewritable_below_var_lvl}, $L\sigma$ is $\beta\mhyphen\mathit{true}$ in $\Gamma,s_m\# t_m\sigma_m$ as well.
  If $C_0\sigma$ is $\beta\mhyphen\mathit{undefined}$ or $\beta\mhyphen\mathit{true}$ in $\Gamma$ then Propagate is not applicable to $C\sigma'$. If Decide is not applicable by definition of a regular run, then there exists a clause $C'\in(N\cup U)$ and a substitution $\delta$ such that Propagate is applicable. Then we can apply Propagate by definition of a regular run and a previously undefined literal gets defined after application as seen above and we are done. Now suppose that there exists no such clause. Then let $[I'_1,...,I'_l]$ be the reduction chain application from $\Gamma$ to $L\sigma'^{k+1:C\cdotp\sigma'}$ and $I'_l = (s_l\# t_l\cdotp\sigma_l, (s_l\# t_l\lor C_l)\cdotp\sigma_l, I'_j, I'_k, p_l)$.
  Then $L\sigma'{\downarrow_{conv(\Gamma)}} = (s_l\# t_l)\sigma_l$ by definition of a reduction chain application. Thus $L\sigma'$ is $\beta\mhyphen\mathit{true}$ in $\Gamma,(s_l\# t_l)\sigma_l$.
  Since $L\sigma{\downarrow_{conv(\Gamma)}} = L\sigma'{\downarrow_{conv(\Gamma)}}$ by lemma \ref{rewritable_below_var_lvl}, $L\sigma$ is $\beta\mhyphen\mathit{true}$ in $\Gamma,(s_l\# t_l)\sigma_l$ as well.
  $(s_l\# t_l)\sigma_l^{k+1:(s_l\# t_l\lor comp(s_l\# t_l))\cdotp\sigma_l}$ can be added to $\Gamma$ by definition of a regular run and also by definition of Decide since $C\in N\cup U$, $\sigma'$ is grounding for $C$ and irreducible in $conv(\Gamma)$, $L\sigma'$ is $\beta\mhyphen\mathit{undefined}$ in $\Gamma$ and $C\sigma'\prec_T \beta$.
\end{proof}

\subsubsection{Auxiliary Lemma for the Proof of Lemmas~\ref{form_of_stuck_states}}
\begin{lemma}
  \label{confl_res_at_least_lvl_1}
  Suppose a sound state $(\Gamma; N;U;\beta;k; D\cdotp\sigma)$ resulting from a regular run. Then $D\sigma$ is of level $1$ or higher.
\end{lemma}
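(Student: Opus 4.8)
The plan is to argue by induction on the number of conflict‑resolution steps applied since the last application of $\mathit{Conflict}$. The key structural observation is that a state whose last component is a closure $D\cdot\sigma$ (rather than $\top$ or $\bot$) can only be produced by $\mathit{Conflict}$, and thereafter modified only by Explore‑Refutation, Skip, Equality‑Resolution and Factorize, since Backtrack resets the component to $\top$ and every other rule requires it to be $\top$. For the \emph{base case} (state immediately after $\mathit{Conflict}$): by the very definition of $\mathit{Conflict}$, if the false clause $D'\sigma$ is of level $0$ then $D=\bot$; hence whenever the component is a genuine closure $D'\cdot\sigma$ with $D'\cdot\sigma\neq\bot$ we have that $D'\sigma$ is of level $\geq 1$. (One may additionally pin this down using Corollary~\ref{confl_always_after_propdecide} and the regularity of the run: $\mathit{Conflict}$ is immediately preceded by Propagate or Decide, no clause of $N\cup U$ was $\beta$-false in the shorter trail, so $D'\sigma$ became $\beta$-false exactly because of the newly added trail literal, whose level is the current decision level; being nonzero it is $\geq 1$, and so is the level of $D'\sigma$.)

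For the \emph{easy inductive cases}: Factorize merges two literals that are equal under $\sigma$, so the corresponding ground clause is unchanged as a set of literals and its maximal level is preserved. Equality‑Resolution deletes the literal $(s\not\approx s')\sigma = (s\sigma\not\approx s\sigma)$, which is trivially false and has the empty defining core, hence is of level $0$; therefore the maximal level of the clause is attained by a surviving literal and remains $\geq 1$ by the induction hypothesis. Skip leaves the conflict clause unchanged but replaces $\Gamma$ by a prefix $\Gamma'$ in which $D\sigma$ is still $\beta$-false; then every literal of $D\sigma$ is defined in $\Gamma'$ and thus has a level there (Lemma~\ref{gammastar:properties:proofs:alldefhavelevel}), and by Lemma~\ref{lem:lvlfix} this level coincides with its level in $\Gamma=\Gamma',(\text{skipped literal})$; hence the maximal level of $D\sigma$ is the same before and after Skip, and $\geq 1$ by the induction hypothesis.

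The \emph{hard case} is Explore‑Refutation. First note that $(s\,\#\,t)\sigma$, the strictly $\prec_{\Gamma^*}$-maximal literal of the pre‑Explore conflict clause, is itself of level $\geq 1$: by Definition~\ref{appr_4:trail_ind_order}.9 every level‑$0$ literal is $\prec_{\Gamma^*}$-dominated by every level‑$\geq 1$ literal, and the clause has a level‑$\geq 1$ literal by the induction hypothesis, so its $\prec_{\Gamma^*}$-maximum is of level $\geq 1$; consequently its defining literal $L$ lies on $\Gamma$ at level $\geq 1$. Now suppose, for contradiction, that the new conflict clause $C'_j:=(s_j\,\#\,t_j\lor C_j)\sigma_j$ is of level $0$; then it is $\beta$-false already in the level‑$0$ prefix $\Gamma_0$ of $\Gamma$. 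I would analyze the rewrite‑inference tree rooted at $I_j$: by Lemma~\ref{rewrite_inference_properties}.1 the literal multiset of $C'_j$ contains the context parts of all base steps feeding $I_j$, and by Lemma~\ref{rewrite_inference_properties}.3 the clause $C'_j$ is entailed by the clauses annotating those base steps. If the conflict literal (whose non‑maximal part $D''$ may carry a level‑$\geq 1$ literal) or a level‑$\geq 1$ trail equation feeds into $I_j$, I show a level‑$\geq 1$ literal survives into $C'_j$, using Lemma~\ref{lem:lvlfix} and the level separation of Definition~\ref{appr_4:trail_ind_order}.9 (rewriting with sufficiently small‑level equations cannot lower a level‑$\geq 1$ literal to level $0$); otherwise all base steps are level‑$0$ trail equations, so $C'_j$ is entailed by $N\cup U$-clauses whose propagated heads are already true in $\Gamma_0$ (level‑$0$ propagation is exhaustive in a regular run, Definition~\ref{appr_4:def:regular-run}.2), whence $\Gamma_0\models C'_j$, contradicting that $C'_j$ is $\beta$-false in the consistent trail prefix $\Gamma_0$. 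Either way, $C'_j$ is of level $\geq 1$, closing the induction.

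\textbf{Main obstacle.} The delicate part is precisely the Explore‑Refutation case: tracking which base rewrite steps feed the chosen step $I_j$ of the refutation, and ruling out the scenario in which $I_j$ descends only from level‑$0$ trail equations together with the conflict literal while everything in $C'_j$ ends up at level $0$. This needs careful bookkeeping of the rewrite‑inference tree (context preservation and entailment, Lemma~\ref{rewrite_inference_properties}), the level‑$0$/level‑$\geq 1$ separation of $\prec_{\Gamma^*}$ (Definition~\ref{appr_4:trail_ind_order}.9) combined with Lemma~\ref{lem:lvlfix} to show that a level‑$\geq 1$ literal cannot be rewritten down to level $0$ by equations of small enough level, and the regular‑run facts that level‑$0$ propagation is exhaustive and that the level‑$0$ prefix of the trail is a consistent model of its own annotating clauses; the remaining bookkeeping (Factorize, Equality‑Resolution, Skip, and the base case) is routine.
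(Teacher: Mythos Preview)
Your overall plan is in the same spirit as the paper's: both isolate Explore-Refutation as the only rule that can lower the level of the conflict clause, and both ultimately rely on regularity (exhaustive level-$0$ propagation, Definition~\ref{appr_4:def:regular-run}.2). The paper, however, does not set up an induction over Factorize/Equality-Resolution/Skip at all. It argues by contradiction in one shot: if the conflict ever has level $0$, there is a single Explore-Refutation step that drops the level from some $l\geq 1$ to $0$; at that step the last trail literal $L$ (the defining literal of the $\prec_{\Gamma^*}$-maximal conflict literal) must be used in the refutation, so its annotation context $C\delta$ is a submultiset of the new conflict clause $C'_j$. Then it finishes in two lines: if $L$ is a decision, $C\delta=\operatorname{comp}(L)$ has level $l$; if $L$ is propagated with $C\delta$ of level $0$, regularity forces $L$ to have been propagated already at level $0$. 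Either way, contradiction.

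Your Explore-Refutation case has a genuine gap. In the branch ``a level-$\geq 1$ trail equation feeds into $I_j$'' you assert that a level-$\geq 1$ literal survives into $C'_j$. That is fine for a \emph{decision} literal $L_i$ (its context is $\operatorname{comp}(L_i)$, of level~$\geq 1$), but a level-$\geq 1$ \emph{propagated} literal $L_i$ may carry an annotation context $C_i\sigma_i$ entirely of level~$0$, so nothing of level~$\geq 1$ is contributed to $C'_j$ by that base step. Your appeal to ``rewriting with sufficiently small-level equations cannot lower a level-$\geq 1$ literal to level $0$'' concerns the rewrite literal $(s_j\,\#\,t_j)\sigma_j$, not the accumulated context, and does not rescue this sub-case. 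Moreover, your ``otherwise'' branch (only level-$0$ base steps) is vacuous: the conflict-literal base step must occur in $I_j$'s ancestry for the resulting clause to be $\beta$-false, so the entailment/$\Gamma_0\models C'_j$ argument there never actually fires. The missing ingredient is precisely the paper's short finish: apply the regularity argument to the annotating clause of a level-$\geq 1$ trail literal in $I_j$'s ancestry (the paper does this for the specific literal $L$), showing that a propagated level-$\geq 1$ literal cannot have a level-$0$ context.
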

\begin{proof}
  Let $(\Gamma; N;U;\beta;k; D\cdotp\sigma)$ be a state resulting from a regular run.
  Suppose that $D\sigma$ is not of level $1$ or higher, thus $D\sigma$ is of level $0$. Then $\mathit{Conflict}$ was applied earlier to a clause that was of level $1$ or higher.
  Thus there must have been an application of $\mathit{Explore\mhyphen Refutation}$ on a state $(\Gamma,\Gamma',L; N;U;\beta; l; D'\cdotp\sigma')$ between the state after the application of $\mathit{Conflict}$ and the current state resulting in a state $(\Gamma,\Gamma',L^{l:(L\lor C)\cdotp \delta}; N;U;\beta; l; D''\cdotp\sigma'')$ such that $D'\sigma'$ is of level $l$ and $D''\sigma''$ is of level $0$, since no other rule can reduce the level of $D'\sigma'$.
  Then there exists a $K\in D'\sigma'$ such that $L$ is the defining literal of $K$. Let $[I_1,...,I_m]$ be the refutation of $K$ and $I_j = (s_j\# t_j\cdotp\sigma_j, (s_j\# t_j\lor C_j)\cdotp\sigma_j, I_i, I_k, p_j)$ be the step that was chosen by $\mathit{Explore\mhyphen Refutation}$. Then $D''\sigma'' =(s_j\# t_j\lor C_j)\sigma_j$. $C\delta\subset C_j\sigma_j$ has to hold since $L$ is the defining literal of $K$.
  Then $C\delta$ must be of level $0$ or empty. Note that $C\delta$ is of level $l$ if $L$ is a decision literal. But then, by the definition of a regular run, $L^{l:(L\lor C)\cdotp\delta}$ must have been propagated before the first decision, since propagation is exhaustive at level $0$. Contradiction.
\end{proof}

\subsubsection{Proof of Lemma~\ref{form_of_stuck_states}}
  If a regular run (without rule Grow) ends in a stuck state $(\Gamma;N;U;\beta;k;D)$, then $D=\top$ and all ground literals $L\sigma\prec_T \beta$, where $L\lor C\in N\cup U$ are $\beta\mhyphen\mathit{defined}$ in $\Gamma$.
\begin{proof}
  First we prove that stuck states never appear during conflict resolution.
  Assume a sound state $(\Gamma;N;U;\beta;k;D\clsr\sigma)$ resulting from a regular run. Now we show that we can always apply a rule.
  Suppose that $D\sigma = (D'\lor L\lor L')\sigma$ such that $L\sigma = L'\sigma$. Then we must apply $\mathit{Factorize}$ by the definition of a regular run.
  Now suppose that $\mathit{Factorize}$ is not applicable and $\Gamma := \Gamma', L$ and $D\sigma$ is false in $\Gamma'$. If $D\sigma = (D'\lor s\not\approx s')\sigma$ such that $s\sigma = s'\sigma$, we can apply $\mathit{Equality\mhyphen Resolution}$. So suppose that $\mathit{Equality\mhyphen Resolution}$ is not applicable. Then we can apply $Skip$.
  Now suppose that $\Gamma := \Gamma', L^{k:(L\lor C)\delta}$ and $L$ is the defining literal of at least one literal in $D\sigma$, so Skip is not applicable. If $D\sigma = (D'\lor L')\sigma$ where $D'\sigma$ is of level $i<k$ and $L'\sigma$ is of level $k$ and $Skip$ was applied at least once during this conflict resolution, then Backtrack is applicable. If $Skip$ was not applied and $L=comp(L'\sigma)$ and $L$ is a decision literal, then Backtrack is also applicable.
  Otherwise, let $(s\,\#\, t)\sigma\in D\sigma$ such that $K \prec_{\Gamma^*} (s\,\#\, t)\sigma$ for all $K\in D\sigma$. $(s\,\#\, t)\sigma$ exists since Factorize is not applicable.
  By lemma~\ref{confl_res_at_least_lvl_1}, $(s\,\#\, t)\sigma$ must be of level $1$ or higher. By the definition of $\prec_{\Gamma^*}$, $L$ must be the defining literal of $(s\,\#\, t)\sigma$ since $L$ is of level $1$ or higher and any literal in $D\sigma$ that has another defining literal is smaller than $(s\,\#\, t)\sigma$.
  Now suppose that $L$ is a decision literal and $(s\,\#\, t)\sigma = comp(L)$. Then $(s\,\#\, t)\sigma$ is of level $k$ and all other literals $K\in D\sigma$ are of level $i<k$, since $(s\,\#\, t)\sigma$ is the smallest $\beta\mhyphen\mathit{false}$ literal of level $k$ and Factorize is not applicable. In this case Explore-Refutation is not applicable since a paramodulation step with the decision literal does not make the conflict clause smaller. But Backtrack is applicable in this case even if Skip was not applied earlier by the definition of a regular run. Thus $(s\,\#\, t)\sigma \not=comp(L)$ or $L$ is a propagated literal has to hold. We show that in this case Explore-Refutation is applicable.
  Let $[I_1,...,I_m]$ be a refutation of $(s\,\#\,t)\sigma$ from $\Gamma$,
  $I_m = (s_m\# t_m\cdotp\sigma_m, (s_m\# t_m\lor C_m)\cdotp\sigma_m, I_j, I_k, p_m)$. Since $[I_1,...,I_m]$ is a refutation $s_m\# t_m\sigma_m = s'\not\approx s'$. Furthermore any $I_i$ either contains a clause annotation from $\Gamma,(s\,\#\, t)\sigma^{k:D\cdotp\sigma}$ or it is a rewrite inference from $I_{j'},I_{k'}$ with $j',k'<i$.
  Thus by lemma \ref{rewrite_inference_properties} it inductively follows that $C_m\sigma_m = D'\sigma_m\lor...\lor D'\sigma_m \lor C'_1\sigma_m\lor ... \lor C'_n\sigma_m$, where $C'_1\sigma_m, ... ,C'_n\sigma_m$ are clauses from $\Gamma$ without the leading trail literal and $D\sigma=D'\sigma_m\lor (s\# t)\sigma$.
  Since $L$ is the defining literal of $(s\,\#\, t)\sigma$ there must exist at least one $C'_i$ such that $C'_i\sigma_m=C\delta$. If $L$ is a propagated literal, then any literal in $C'_i\sigma_m$ is smaller than $(s\,\#\, t)\sigma$, since they are already $\mathit{false}$ in $\Gamma'$. If $L$ is a decision literal, then $C'_i\sigma_m =comp(L)$.
  Then $comp(L)$ is smaller, since $(s\,\#\, t)\sigma \not=comp(L)$ and $(s\,\#\, t)\sigma \not= L$. Thus $comp(L)\prec_{\Gamma^*}(s\,\#\, t)\sigma$. Any other literal in $C_1\sigma_m,...,C'_n\sigma_m$ is smaller in $\prec_{\Gamma^*}$, since they are already defined in $\Gamma'$.
  Since $Factorize$ is not applicable $(s\,\#\, t)\sigma$ is also strictly maximal in $D'\sigma_m$. Thus $(s_m\# t_m\lor C_m)\sigma_m\prec_{\Gamma^*} D\sigma$ which makes $\mathit{Explore\mhyphen Refutation}$ applicable.\\
  Now by lemma \ref{always_rule_applicable_D=top} it holds that if there exists an $\beta\mhyphen\mathit{undefined}$ literal in $gnd_{\prec_T \beta}(N\cup U)$, we can always apply at least one of the rules $\mathit{Propagate}$ or $Decide$ which makes a previously $\beta\mhyphen\mathit{undefined}$ literal in $gnd_{\prec_T \beta}(N\cup U)$ $\beta\mhyphen\mathit{defined}$.
\end{proof}

\subsubsection{Proof of Lemma~\ref{backtrack_finally_applicable}}
Suppose a sound state $(\Gamma; N;U;\beta;k; D)$ resulting from a regular run where $D\not\in\{\top,\bot\}$.
If $Backtrack$ is not applicable then any set of applications of $\mathit{Explore\mhyphen Refutation}$, $Skip$, $\mathit{Factorize}$, $\mathit{Equality\mhyphen Resolution}$ will finally result in a sound state $(\Gamma'; N;U;\beta;k;D')$, where $D' \prec_{\Gamma^*} D$. Then Backtrack will be finally applicable.
\begin{proof}
  Assume a sound state $(\Gamma;N;U;\beta;k;D\clsr\sigma)$ resulting from a regular run. Let $(s\,\#\, t)\sigma\in D\sigma$ such that $L \preceq_{\Gamma^*} (s\,\#\, t)\sigma$ for all $L\in D\sigma$. If  $(s\,\#\, t)\sigma$ occurs twice in $D\sigma$, then $Factorize$ is applicable. Suppose that it is applied.
  Then $D\sigma = (D' \lor (s\,\#\, t) \lor L)\sigma$, where $L\sigma = (s\,\#\,t)\sigma$. Then $\mu = mgu(s\,\#\,t, L)$ and the new conflict clause is $(D' \lor s\,\#\, t)\mu\sigma \prec_{\Gamma^*} D\sigma$. Thus in this case we are done.
  If $Factorize$ is not applicable, then the only remaining applicable rules are $Skip$, $\mathit{Explore\mhyphen Refutation}$ and $\mathit{Equality\mhyphen Resolution}$.  If $\Gamma=\Gamma',L,\Gamma''$ where $L$ is the defining literal of $(s\,\#\, t)\sigma$, then $Skip$ is applicable $|\Gamma''|$ times, since otherwise $(s\,\#\, t)\sigma$ would not be maximal in $D\sigma$. So at some point it is no longer applicable. Since $D\sigma$ is finite, $\mathit{Equality\mhyphen Resolution}$ can be applied only finitely often.
  Thus we finally have to apply $\mathit{Explore\mhyphen Refutation}$.
  Then $[I_1,...,I_m]$ is a refutation of $(s\,\#\,t)\sigma$ from $\Gamma$, and there exists an $1\leq j\leq m$, such that
  $I_j = (s_j\# t_j\cdotp\sigma_j, (s_j\# t_j\lor C_j)\cdotp\sigma_j, I_l, I_k, p_j)$,
  $ (C_j\lor s_j\,\#\,t_j)\sigma_j \prec_{\Gamma^*} (D'\lor s\,\#\,t)\sigma$. Otherwise $\mathit{Explore\mhyphen Refutation}$ would not be applicable, contradicting lemma \ref{form_of_stuck_states}. Thus in this case we are done.\\
  Now we show that Backtrack is finally applicable. Since $\prec_{\Gamma^*}$ is well-founded and $\Gamma$ is finite there must be a state where $\mathit{Explore\mhyphen Refutation}$, $Skip$, $\mathit{Factorize}$, $\mathit{Equality\mhyphen Resolution}$ are no longer applicable. By lemma \ref{confl_res_at_least_lvl_1} the conflict clause in this state must be of level $1$ or higher, thus $\bot$ cannot be inferred.
  Suppose that it is always of level $i\geq l$ for some $l$. The smallest literal of level $l$ that is $\mathit{false}$ in $\Gamma$ is $comp(L)$, where $L$ is the decision literal of level $l$. Since we can always reduce if $\mathit{Backtrack}$ is not applicable and since we can always apply a rule by lemma \ref{form_of_stuck_states}, we must finally reach a conflict clause $comp(L)\lor C$, where $C$ is of level $j<l$. Thus $\mathit{Backtrack}$ is applicable.
\end{proof}

\subsubsection{Proof of Lemma~\ref{lem:soundcomp:termination}}
 Let $N$ be a set of clauses and $\beta$ be a ground term. Then any regular run that never uses Grow terminates.
\begin{proof}
 Assume a new ground clause $D\sigma$ is learned. By lemma~\ref{appr_4:non-red} all learned clauses are non-redundant. Thus $D\sigma$ is non-redundant. By the definition of a regular run $\mathit{Factorize}$ has precedence over all other rules. Thus $D\sigma$ does not contain any duplicate literals. By theorem~\ref{appr_4:scleq_sound}, $D\sigma\prec_T\beta$ has to hold. There are only finitely many clauses $C\sigma \prec_T\beta$, where $C\sigma$ is neither a tautology nor does it contain any duplicate literals. Thus there are only finitely many clauses $D\sigma$ that can be learned.
 Thus there are only finitely many literals that can be decided or propagated.
\end{proof}

\subsubsection{Proof of Lemma~\ref{lem:soundcomp:unsat}}
If a regular run reaches the state $(\Gamma;N;U;\beta;k;\bot)$ then $N$ is unsatisfiable.
\begin{proof}
 By definition of soundness, all learned clauses are consequences of $N\cup U$, definition \ref{appr_4:soundness}.\ref{appr_4:sound_5}, and $\Gamma$ is satisfiable, definition \ref{appr_4:soundness}.\ref{appr_4:sound_1}.
\end{proof}

\subsubsection{Proof of Theorem~\ref{theo:soundcomp:refutcomp}}
Let $N$ be an unsatisfiable clause set, and $\prec_T$ a desired term ordering.
For any ground term $\beta$ where $gnd_{\prec_T\beta}(N)$ is unsatisfiable, any regular $\SCLEQ$ run
without rule Grow will terminate by deriving $\bot$.
\begin{proof}
  Since regular runs of $\SCLEQ$ terminate we just need to prove that it terminates in a failure state.
  Assume by contradiction that we terminate in a state $(\Gamma;N;U;\beta;k;\top)$.
  If no rule can be applied in $\Gamma$ then for all $s\,\#\, t\in C$ for some arbitrary $C\in gnd_{\prec_T \beta}(N)$ it holds that $s \,\#\, t$ is $\beta\mhyphen\mathit{defined}$ in $\Gamma$ (otherwise Propagate or Decide woud be applicable, see Lemma \ref{always_rule_applicable_D=top}) and there aren't any clauses in
  $gnd_{\prec_T \beta}(N)$ $\beta\mhyphen\mathit{false}$ under $\Gamma$ (otherwise $\mathit{Conflict}$ would be applicable, see again lemma \ref{always_rule_applicable_D=top}).
  Thus, for each $C\in gnd_{\prec_T \beta}(N)$ it holds that $C$ is $\beta\mhyphen\mathit{true}$ in $\Gamma$. So we have $\Gamma \models gnd_{\prec_T \beta}(N)$, but by hypothesis there is a superposition refutation of $N$ that only uses ground literals from $gnd_{\prec_T \beta}(N)$, so also $gnd_{\prec_T \beta}(N)$ is unsatisfiable, a contradiction.
\end{proof}

\begin{lemma}[Only Non-Redundant Clauses Building the Trail] \label{lem:trailnonredcl}
  Let $\Gamma=[L_1^{i_1:C_1\cdotp\sigma_1},...,L_n^{i_n:C_n\cdotp\sigma_n}]$ be a trail.
  If $L_j^{i_j:C_j\cdotp\sigma_j}$ is a propagated literal and there exist clauses $\{D_1\lor K_1,...,D_m\lor K_m\}$ with grounding substitutions $\delta_1,...,\delta_m$
   such that $N := \{(D_1\lor K_1)\delta_1,...,(D_m\lor K_m)\delta_m\} \prec_{\Gamma^*} C_j\sigma_j$ and $\{(D_1\lor K_1)\delta_1,...,(D_m\lor K_m)\delta_m\}\models C_j\sigma_j$, then there exists a $(D_k\lor K_k)\delta_k\in N$ such that
   $$[L_1^{i_1:C_1\cdotp\sigma_1},...,L_{j-1}^{i_{j-1}:C_{j-1}\cdotp\sigma_{j-1}},K_k^{i_j:(D_k\lor K_k)\cdotp\delta_k},...,L_n^{i_n:C_n\cdotp\sigma_n}]$$ is a trail.
\end{lemma}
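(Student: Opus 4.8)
The plan rests on one structural observation: the sequence in the conclusion differs from $\Gamma$ \emph{only in the annotation at position $j$} — the ground literal sitting there will still be $L_j$ and its level will still be $i_j$. Since being a trail constrains only the ground literals and their levels (each literal must be undefined in, and irreducible by $\conv$ of, its predecessors, and levels must be non-decreasing), once a clause $(D_k\lor K_k)\delta_k\in N$ with $K_k\delta_k=L_j$ is produced, the resulting sequence is $\Gamma$ with a single annotation replaced, hence again a trail. So the whole task reduces to finding a clause of $N$ that contains the literal $L_j$ and all of whose remaining literals are $\beta$-false in $\Gamma_1:=[L_1,\dots,L_{j-1}]$. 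I will use two facts: $L_j$ is $\beta$-undefined in $\Gamma_1$ and irreducible by $\conv(\Gamma_1)$ (trail property); and, since $L_j$ is a propagated literal, by soundness of the underlying state (Theorem~\ref{appr_4:scleq_sound}, cf.\ Definition~\ref{appr_4:soundness}.\ref{appr_4:sound_2}) the justification instance has the form $C_j\sigma_j=R\lor L_j$ with $R$ $\beta$-false in $\Gamma_1$ and $L_j,\comp(L_j)\notin R$.

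First I would show $L_j$ is the strictly $\prec_{\Gamma^*}$-maximal literal of $C_j\sigma_j$. Any $\ell\in R$ is $\beta$-false in $\Gamma_1$, hence already defined in $\Gamma_1$; by consistency it is not a trail literal, and its defining literal (if any) is one of $L_1,\dots,L_{j-1}$, so $\ell$ is $\comp(L_i)$ or $M_{i,l}$ with $i<j$, or has level $0$. In each case $\ell\prec_{\Gamma^*}L_j$ by \ref{appr_4:trail_ind_order}.3, \ref{appr_4:trail_ind_order}.7, \ref{appr_4:trail_ind_order}.9 (and \ref{appr_4:trail_ind_order}.8 if $L_j$ itself has level $0$). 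Thus $L_j=\max_{\prec_{\Gamma^*}}(C_j\sigma_j)$, so by the multiset order no clause $C'\prec_{\Gamma^*}C_j\sigma_j$ has a literal $\succ_{\Gamma^*}L_j$; consequently every literal of such a $C'$ is $\beta$-defined (a $\beta$-undefined one would exceed $L_j$ by \ref{appr_4:trail_ind_order}.11) and, by the same case split, lies in $\{L_i\mid i\le j\}\cup\{\comp(L_i)\mid i<j\}\cup\{M_{i,l}\mid i<j\}\cup\{\text{level-}0\text{ literals}\}$, and every one of them other than $L_j$ is already $\beta$-defined in $\Gamma_1$.

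Then I would run a model argument. As $L_j$ is $\beta$-undefined in $\Gamma_1$, there is a model $M$ of $\Gamma_1\cup\{\comp(L_j)\}$; it falsifies every literal of $R$ and falsifies $L_j$, hence $M\not\models C_j\sigma_j$, so, as $N\models C_j\sigma_j$, some $C'\in N$ has every literal false in $M$. By the previous paragraph each literal of $C'$ is $\beta$-defined, $\preceq_{\Gamma^*}L_j$, and either equals $L_j$ or is $\beta$-defined in $\Gamma_1$; a literal $L_i$ with $i<j$ would be satisfied by $M$ (it lies in $\Gamma_1$), so cannot occur, and a literal $\beta$-defined in $\Gamma_1$ and false in $M$ is $\beta$-false in $\Gamma_1$. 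Hence $C'$ is the literal $L_j$ (if it occurs) together with literals all $\beta$-false in $\Gamma_1$. If $L_j\in C'$, writing $C'=L_j\lor D'$ with $D'$ $\beta$-false in $\Gamma_1$ and taking $(D_k\lor K_k)\delta_k:=C'$, $K_k\delta_k:=L_j$, $D_k\delta_k:=D'$, the claim follows from the opening observation.

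The step I expect to be the main obstacle is excluding the alternative $L_j\notin C'$, i.e.\ that $C'$ is \emph{entirely} $\beta$-false in $\Gamma_1$ (hence in $\Gamma$): the model argument alone only yields \emph{some} clause of $N$ false in $M$, and one must argue it can be taken to be one that is not yet globally false, which is what forces it to contain $L_j$. In the context where the lemma is applied — inside the proof of Lemma~\ref{appr_4:non-red}, with $\Gamma$ from a regular run and $N\subseteq\gnd(N\cup U)$ — this alternative cannot arise, since a clause of $N\cup U$ that is $\beta$-false in $\Gamma$ would have triggered $\mathit{Conflict}$ before the current state (Lemma~\ref{always_rule_applicable_D=top}), contradicting regularity; equivalently one may add to the hypotheses the harmless assumption that no member of $N$ is $\beta$-false in $\Gamma$. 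Under this proviso $L_j\in C'$, completing the argument.
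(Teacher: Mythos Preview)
Your proposal is correct and follows essentially the same approach as the paper's proof: both identify $L_j$ as the $\prec_{\Gamma^*}$-maximal literal of $C_j\sigma_j$, conclude that every literal occurring in $N$ is either $L_j$ itself or already defined in $\Gamma'=[L_1,\dots,L_{j-1}]$, and then argue (using a model of $\Gamma'\cup\{\neg L_j\}$ together with the ``otherwise $\mathit{Conflict}$ would have applied'' proviso) that some clause of $N$ must contain $L_j$ with the remaining part false in $\Gamma'$. Your write-up is in fact more careful than the paper's: the paper compresses the model step into the terse line ``Suppose that $\Gamma'\models D_k$. Then $N\not\models C_j\sigma_j$, since there exists an allocation, namely $\Gamma',\neg L_k\ldots$'', which only obviously handles the single clause $(D_k\lor K_k)\delta_k$ rather than all of $N$; your explicit model argument makes this step transparent, and you are also explicit about the regular-run / no-$\beta$-false-clause hypothesis that the paper invokes without stating it in the lemma.
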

\begin{proof}
  Let $N = \{(D_1\lor K_1)\delta_1,...,(D_m\lor K_m)\delta_m\}$ and $L_j^{i_j:C_j\cdotp\sigma_j}$ be as above. Let $\Gamma' = [L_1^{i_1:C_1\cdotp\sigma_1},...,L_{j-1}^{i_{j-1}:C_{j-1}\cdotp\sigma_{j-1}}]$. Now suppose that for every literal $L\in N$ it holds $L \prec_{\Gamma^*} L_j$. Then every literal in $N$ is defined in $\Gamma'$ and $\Gamma' \models N$,
  otherwise $\mathit{Conflict}$ would have been applied to a clause in $N$. Thus $\Gamma'\models C_j\sigma_j$ would have to hold as well. But by definition of a trail $L_j$ is undefined in $\Gamma'$.
  Thus there must be at least one clause $(D_k\lor K_k)\delta_k\in N$ with $K_k = L_j$ and $D_k\delta_k \prec_{\Gamma^*} L_j$ (otherwise $(D_k\lor K_k)\delta_k \not\prec_{\Gamma^*} C_j\sigma_j$), such that $\Gamma' \not\models D_k$.
  Suppose that $\Gamma' \models D_k$. Then $N\not\models C_j\sigma_j$, since there exists an allocation, namely $\Gamma',\neg L_k$ such that $\Gamma',\neg L_k \models N$ but $\Gamma',\neg L_k \not\models C_j\sigma_j$. Thus we can replace $L_j^{i_j:C_j\cdotp\sigma_j}$ by $K_k^{i_j:(D_k\lor K_k)\cdotp\delta_k}$ in $\Gamma$.
\end{proof}

\subsection{Further Examples} \label{subsec:intexa}
For the following examples we assume a term ordering $\prec_{kbo}$, unique weight $1$ and with precedence $d\prec c\prec b\prec a \prec a_1 \prec...\prec a_n \prec g \prec h \prec f$. Further assume $\beta$ to be large enough.

\begin{example}[Implicit Conflict after Decision] \label{exa:impcondec}
  Consider the following clause set $N$
  \begin{center}
    $\begin{array}{c}
       C_1 := h(x) \approx g(x)\lor c\approx d \qquad C_2 := f(x) \approx g(x) \lor a \approx b\\
       C_3 := f(x) \not\approx h(x) \lor f(x) \not\approx g(x)\\
       \end{array}$
  \end{center}
  Suppose we apply the rule Decide first to $C_1$ and then to $C_2$ with substitution $\sigma = \{x\rightarrow a\}$. Then we yield a conflict with $C_3\sigma$, resulting in the following state:
  \[([h(a)\approx g(a)^{1:(h(x) \approx g(x)\lor h(x) \not\approx g(x))\cdotp \sigma},f(a)\approx g(a)^{2:(f(x) \approx g(x)\lor f(x) \not\approx g(x))\cdotp \sigma}];\]
  \[N; \{\}; 2; C_3\cdotp \sigma)\]
  According to $\prec_{\Gamma^*}$, $f(a) \not\approx h(a)$ is the greatest literal in $C_3\sigma$. Since $f(a) \approx g(a)$ is the defining literal of $f(a)\not\approx h(a)$ we can not apply Skip. Factorize is also not applicable, since $f(a)\not\approx h(a)$ and $f(a) \not\approx g(a)$ are not equal. Thus we must apply Explore-Refutation to the greatest literal $f(a) \not\approx h(a)$. The rule first creates a refutation $[I_1,...,I_5]$, where:
  \begin{center}
    $\begin{array}{ll}
        I_1 := & ((f(x)\not\approx h(x))\cdotp \sigma, C_3\cdotp \sigma,\epsilon,\epsilon,\epsilon)\\
        I_2 := & ((f(x)\approx g(x))\cdotp \sigma, (f(x) \approx g(x)\lor f(x) \not\approx g(x))\cdotp \sigma,\epsilon,\epsilon,\epsilon)\\
        I_3 := & ((h(x)\approx g(x))\cdotp \sigma, (h(x) \approx g(x)\lor h(x) \not\approx g(x))\cdotp \sigma,\epsilon,\epsilon,\epsilon)\\
        I_4 := & ((h(x)\not\approx g(x))\cdotp \sigma, (h(x) \not\approx g(x)\lor f(x) \not\approx g(x)\lor f(x) \not\approx g(x))\cdotp \sigma,I_2,I_1, 1)\\
        I_5 := & ((g(x)\not\approx g(x))\cdotp \sigma, (g(x) \not\approx g(x)\lor f(x) \not\approx g(x)\lor f(x) \not\approx g(x)\\
               & \lor h(x) \not\approx g(x))\cdotp \sigma,I_4,I_3, 1)
     \end{array}$
  \end{center}
  Explore-Refutation can now choose either $I_4$ or $I_5$. Both, $(h(x)\not\approx g(x))\sigma$ and $(g(x)\not\approx g(x))\sigma$ are smaller than $(f(x)\not\approx h(x))\sigma$ according to $\prec_{\Gamma^*}$ and false in $\Gamma$. Suppose we choose $I_5$.
  Now our new conflict state is:
  \[([h(a)\approx g(a)^{1:(h(x) \approx g(x)\lor h(x) \not\approx g(x))\cdotp \sigma},f(a)\approx g(a)^{2:(f(x) \approx g(x)\lor f(x) \not\approx g(x))\cdotp \sigma}];\]
  \[N; \{\}; 2;(g(x) \not\approx g(x)\lor f(x) \not\approx g(x)\lor f(x) \not\approx g(x) \lor h(x) \not\approx g(x))\cdotp \sigma)\]
  Now we apply $\mathit{Equality\mhyphen Resolution}$ and Factorize to get the new state
  \[([g(a)\approx h(a)^{1:(g(x) \approx h(x)\lor g(x) \not\approx h(x))\cdotp \sigma},f(a)\approx g(a)^{2:(f(x) \approx g(x)\lor f(x) \not\approx g(x))\cdotp \sigma}];\]
  \[N; \{\}; 2; (f(x) \not\approx g(x)\lor h(x) \not\approx g(x))\cdotp \sigma)\]
  Now we can backtrack. Note, that this clause is non-redundant according to our ordering, although conflict was applied immediately after decision.
\end{example}
\begin{example}
  Consider the following ground clause set $N$:
  \begin{center}
    $\begin{array}{c}
       C_1 := f(a,a)\not\approx f(b,b)\lor c\approx d \qquad C_2 := a\approx b\lor f(a,a)\approx f(b,b)\\
       \end{array}$
  \end{center}
  Suppose that we decide $f(a,a)\not\approx f(b,b)$. Then $C_2$ is false in $\Gamma$. Conflict state is as follows:
  $([f(a,a)\not\approx f(b,b)^{1:f(a,a)\not\approx f(b,b)\lor f(a,a)\approx f(b,b)}]; N; \{\}; 2; C_2)$. Explore-Refutation creates the following ground refutation for $a\approx b$, since it is greatest literal in the conflict clause:
  \begin{center}
    $\begin{array}{ll}
       I_1 := & (f(a,a)\not\approx f(b,b), f(a,a)\not\approx f(b,b)\lor f(a,a)\approx f(b,b),\epsilon,\epsilon,\epsilon)\\
       I_2 := & (a\approx b,C_2,\epsilon,\epsilon,\epsilon)\\
       I_3 := & (f(b,a)\not\approx f(b,b), f(b,a)\not\approx f(b,b)\lor f(a,a)\approx f(b,b)\lor f(a,a)\approx f(b,b),\\
             & I_2,I_1, 11)\\
       I_4 := & (f(b,b)\not\approx f(b,b), f(b,b)\not\approx f(b,b)\lor f(a,a)\approx f(b,b)\lor f(a,a)\approx f(b,b)\\
             & \lor f(a,a)\approx f(b,b),I_3,I_1, 12)
     \end{array}$
  \end{center}
  As one can see, the intermediate result $f(b,a)\not\approx f(b,b)$ is not false in $\Gamma$. Thus it is no candidate for the new conflict clause. We have to choose $I_4$. The new state is thus:
  $$([f(a,a)\not\approx f(b,b)^{1:f(a,a)\not\approx f(b,b)\lor f(a,a)\approx f(b,b)}]; N; \{\}; 2;$$
  $$f(b,b)\not\approx f(b,b)\lor f(a,a)\approx f(b,b)\lor f(a,a)\approx f(b,b)\lor f(a,a)\approx f(b,b))$$
  Now we can apply Equality-Resolution and two times Factorize to get the final clause $f(a,a)\approx f(b,b)$ with which we can backtrack.
\end{example}

\begin{example}
Assume clauses:
  \begin{center}
    $\begin{array}{ll}
       C_1     & := b\approx c \lor c\approx d\\
       C_2     & := a_1\approx b\lor a_1\approx c\\
               & \qquad ...\\
       C_{n+1} & := a_n\approx b\lor a_n\approx c
       \end{array}$
  \end{center}
  The completeness proof of superposition requires that adding a new literal to an interpretation does not make any smaller literal true. In this example, however, after adding $b\approx c$ to the interpretation, we cannot any further literal, since it breaks this invariant. So in superposition we would have to add the following clauses with the help of the $Equality Factoring$ rule:
  \begin{center}
    $\begin{array}{ll}
       C_{n+2}  & := b\not\approx c\lor a_1\approx c\\
                & \qquad ...\\
       C_{2n+1} & := b\not\approx c\lor a_n\approx c\\
       \end{array}$
  \end{center}
  In \SCLEQ~on the other hand we can just decide a literal in each clause to get a model for this clause set. As we support undefined literals we do not have to bother with this problem at all. For example if we add $b\approx c$ to our model, both literals $a_1\approx b$ and $a_1\approx c$ are undefined in our model. Thus we need to decide one of these literals to add it to our model.
\end{example}

\begin{example}[Rewriting below variable level] \label{exa:rewvarlev}
  Assume the clause set $N$:
  \begin{center}
    $\begin{array}{c}
       C_1 := f(x)\approx h(b)\lor x\not\approx g(a) \qquad C_2 := c\approx d \lor f(g(b))\not\approx h(b)\\
       C_3 := a\approx b \lor f(g(b)) \approx h(b)\\
       \end{array}$
  \end{center}
  Let $\sigma =\{x\rightarrow g(a)\}$. $C_1\sigma$ must be propagated: $\Gamma = [f(g(a))\approx h(b)^{0:C_1\sigma}]$. Now suppose that we decide $f(g(b))\not\approx h(b)$. Then $\Gamma = [f(g(a))\approx h(b)^{0:C_1\sigma},f(g(b))\not\approx h(b)^{1:f(g(b))\not\approx h(b)\lor f(g(b))\approx h(b)}]$ and $C_3$ is a conflict clause. Explore-Refutation now creates the following refutation for $a\approx b$:
  \begin{center}
    $\begin{array}{ll}
        I_1 := & (f(x)\approx h(b)\cdotp\sigma, C_1\cdotp\sigma,\epsilon,\epsilon,\epsilon)\\
        I_2 := & (f(g(b))\not\approx h(b), C_2,\epsilon,\epsilon,\epsilon)\\
        I_3 := & (a\approx b, C_3,\epsilon,\epsilon,\epsilon)\\
        I_4 := & (f(g(b))\approx h(b),f(g(b))\approx h(b)\lor g(a)\not\approx g(a) \lor f(g(b)) \approx h(b),I_3,I_1,\epsilon)\\
        I_5 := & (h(b)\not\approx h(b),h(b)\not\approx h(b)\lor g(a)\not\approx g(a) \lor f(g(b)) \approx h(b)\\
               & \lor f(g(b)) \approx h(b), I_4,I_2,\epsilon)
       \end{array}$
  \end{center}
  Multiple applications of Equality-Resolution and Factorize result in the final conflict clause $C_4 := f(g(b)) \approx h(b)$ with which we can backtrack. The clause set resulting from this new clause is:
  \begin{center}
    $\begin{array}{c}
       C_1 = f(x)\approx h(b)\lor x\not\approx g(a) \qquad C'_2 = c\approx d\\
       C_4 = f(g(b))\approx h(b)\\
       \end{array}$
  \end{center}
   where $C'_2$ is the result of a unit reduction between $C_4$ and $C_2$.
   Note that the refutation required rewriting below variable level in step $I_4$. Superposition would create the following clauses (Equality-Resolution and Factorization steps are implicitly done):
   \begin{center}
   $\begin{array}{ll}
      N & \Rightarrow_{Sup(C_2,C_3)} N_1\cup \{C_4 := c\approx d \lor a\approx b\}\\
        & \Rightarrow_{Sup(C_1,C_2)} N_2\cup \{C_5 := c\approx d\lor g(a)\not\approx g(b)\}\\
        & \Rightarrow_{Sup(C_4,C_5)} N_3\cup \{C_6 := c\approx d\}\\
      \end{array}$
 \end{center}
  For superposition the resulting clause set is thus:
  \begin{center}
    $\begin{array}{c}
       C_1 = f(x)\approx h(b)\lor x\not\approx g(a) \qquad C_2 = a\approx b\lor f(g(b))\approx h(b)\\
       C_6 = c\approx d\\
       \end{array}$
  \end{center}
\end{example}

\begin{example}[Propagate Smaller Equation] \label{exa:propsmeq}
 Assume the ground clause set $N$ solely built out of constants
  \begin{center}
    $\begin{array}{c}
       C_1 := c\approx d \qquad C_2 := c\not\approx d \lor a\approx b\\
       C_3 := a\not\approx b \lor a \approx c\\
       \end{array}$
  \end{center}
\end{example}
and the trail $\Gamma := [c\approx d^{0:C_1}, a\approx b^{0:C_2}, b\approx d^{0:C_3}]$. Now, although the first two steps propagated equations
that are strictly maximal in the ordering in their respective clauses, the finally propagated equation $b\approx d$ is smaller in the term ordering $\prec_{kbo}$
than $a\approx b$. Thus the structure of the clause set forces propagation of a smaller equation in the term ordering. So
the more complicated trail ordering is a result of following the structure of the clause set rather than employing an a priori fixed ordering.
\end{document}